\newtheorem{theorem}{Theorem}[section]
\newtheorem{lemma}[theorem]{Lemma}
\newtheorem{corollary}[theorem]{Corollary}
\newtheorem{claim}[theorem]{Claim}
\theoremstyle{definition}
\newtheorem{definition}[theorem]{Definition}
\theoremstyle{remark}
\newtheorem{remark}[theorem]{Remark}
\definecolor{color1}{RGB}{46,134,193}
\newcommand{\commentout}[1]{}
\newcommand{\Formulas}{\mathrm{Fms}}
\newcommand{\Circuits}{\mathrm{Ckt}}
\newcommand{\coef}{\mathrm{coef}}
\newcommand{\polynomial}{\mathrm{polynomial}}
\newcommand{\evalboolean}{\ensuremath{\mathrm{eval}}}
\newcommand{\evalarithmetic}{\ensuremath{\mathrm{eval\text{-}arithmetic}}}
\newcommand{\degree}{\ensuremath{\mathrm{degree}}} % syntactic degree in univariate case
\newcommand{\totaldegree}{\ensuremath{\mathrm{total\text{-}degree}}} % syntactic-degree 
\newcommand{\individualdegree}{\ensuremath{\mathrm{individual\text{-}degree}}} % syntactic individual degree
\newcommand{\maxindividualdegree}{\ensuremath{\mathrm{max\text{-}individual\text{-}degree}}}
\newcommand{\dimension}{\ensuremath{\mathrm{dimension}}}
\newcommand{\parametricdimension}{\ensuremath{\mathrm{parametric\text{-}dimension}}}
\newcommand{\decode}{\mathrm{dec}}
\newcommand{\encode}{\mathrm{enc}}
\newcommand{\HS}{{\ensuremath{\mathsf{HS}}}}
\newcommand{\BASIC}{{\ensuremath{\mathsf{BASIC}}}}
\newcommand{\Sonetwo}{{\ensuremath{\mathsf{S^1_2}}}}
\newcommand{\pind}{{\ensuremath{\mathsf{PIND}}}}
\newcommand{\lengthmax}{{\ensuremath{\mathsf{LENGTH{\text{-}}MAX}}}}
\newcommand{\LA}{\ensuremath{\mathcal{L}}}
\newcommand{\dWPHP}{\ensuremath{\mathsf{dWPHP}}}
\newcommand{\PV}{{\ensuremath{\mathsf{PV}}}}
\newcommand{\Log}{\mathrm{Log}}
\newcommand{\formalZ}{\mathbb{Z}}
\newcommand{\ComplexityFont}[1]{{\ensuremath{\mathsf{#1}}}}
\newcommand{\lang}[1]{{\ensuremath{\mathsf{#1}}}}
\newcommand{\func}[1]{{\ensuremath{\mathsf{#1}}}}
\newcommand{\newlang}[2]{\newcommand{#1}{\lang{#2}}}
\newcommand{\poly}{\func{poly}}
\newcommand{\wit}{\func{wit}}
\newcommand{\bin}{\mathrm{bin}}
\newcommand{\num}{\mathrm{num}}
\newcommand{\UniPoly}{\mathrm{UniPoly}}
\newcommand{\formvar}{\mathrm{x}}
\newcommand{\Img}{\mathrm{Img}}
\renewcommand{\P}{\ComplexityFont{P}}
\newcommand{\coNP}{\ComplexityFont{co\text{-}NP}}
\newcommand{\NP}{\ComplexityFont{NP}}
\newcommand{\FP}{\ComplexityFont{FP}}
\newcommand{\RP}{\ComplexityFont{RP}}
\newcommand{\BPP}{\ComplexityFont{BPP}}
\newcommand{\coRP}{\ComplexityFont{co\text{-}RP}}
\newcommand{\Ppoly}{\ComplexityFont{P\text{/}poly}}
\newcommand{\APEPP}{\ComplexityFont{APEPP}}
\newcommand{\SAPEPP}{\ComplexityFont{SAPEPP}}
\newcommand{\PtoNP}{\ComplexityFont{P}^{\ComplexityFont{NP}}}
\newcommand{\TFSigmatwoP}{\ComplexityFont{TF\Sigma_2P}}
\newcommand{\FSigmatwoP}{\ComplexityFont{F\Sigma_2P}}
\newcommand{\FPtoNP}{\ComplexityFont{FP}^{\ComplexityFont{NP}}}
\newcommand{\SIZE}{\ComplexityFont{SIZE}}
\newlang{\PIT}{PIT}
\newlang{\PITCONP}{Bounded\ PIT}
\newlang{\HITTINGSET}{Hitting\ Sets}
\newlang{\HITTINGSETCLASSES}{Hitting\ Sets\ for\ Circuit\ Classes}
\newlang{\HARD}{Hard\ Truth\ Table}
\newlang{\AVOID}{Avoid}
\newlang{\EMPTY}{Empty}
\newcommand{\av}{\va}
\newcommand{\bv}{\vb}
\newcommand{\cv}{\vc}
\newcommand{\dv}{\vd}
\newcommand{\pv}{\vp}
\newcommand{\zerov}{\ensuremath{\overline 0}}
\newcommand{\scriptC}{\mathscr{C}}
\newcommand{\SZ}{Schwartz-Zippel}
\title{\bf Feasibly Constructive Proof of Schwartz-Zippel Lemma \\
and the Complexity of Finding Hitting Sets} 
\author{Albert Atserias \\ Universitat Polit\`ecnica de Catalunya 
\and
Iddo Tzameret \\
Imperial College London}
\date{November 2024}
\begin{document}

%%%%%%%%%%%%%%%%%%%%%%%%%%%%

\maketitle
% suppress the page number on the first page
\thispagestyle{empty} 

\begin{abstract}
The Schwartz-Zippel Lemma states that if a low-degree multivariate
polynomial with coefficients in a field is not zero everywhere in the
field, then it has few roots on every finite subcube of the
field. This fundamental fact about multivariate polynomials has found
many applications in algorithms, complexity theory, coding theory, and
combinatorics. We give a new proof of the lemma that offers some
advantages over the standard proof.

First, the new proof is more constructive than previously known
proofs. For every given side-length of the cube, the proof constructs
a polynomial-time computable and polynomial-time invertible surjection
onto the set of roots in the cube. The domain of the surjection is
tight, thus showing that the set of roots on the cube can be
compressed. Second, the new proof can be formalised in Buss' bounded 
arithmetic theory~$\Sonetwo$ for polynomial-time reasoning. One consequence
of this is that the theory~$\Sonetwo + \dWPHP(\PV)$ 
for  approximate counting can prove that the problem of verifying
polynomial identities (PIT) can be solved by polynomial-size circuits. 
The same theory can also
prove the existence of small hitting sets for any explicitly described
class of polynomials of polynomial degree. 

To complete the picture we show that the existence of such
hitting sets is \emph{equivalent} to the surjective weak pigeonhole 
principle~$\dWPHP(\PV)$, over the theory \Sonetwo. This is a contribution to a line 
of research studying the reverse mathematics of 
computational complexity (cf.~Chen-Li-Oliveira,~FOCS'24).  
One consequence of this is that
the problem of constructing small hitting sets for such classes is complete 
for the class APEPP of explicit construction problems whose totality follows 
from the probabilistic method (Kleinberg-Korten-Mitropolsky-Papadimitriou,~ITCS'21; 
cf.~\mbox{Korten},~FOCS'21). This class is also known and studied as the class 
of Range Avoidance Problems (Ren-Santhanam-Wang,~FOCS'22).
\end{abstract}

\newpage

\tableofcontents 
% no page number for toc
\thispagestyle{empty} 
\newpage
% page numbering starts at this point
\clearpage
\pagenumbering{arabic}

% -------------------------------
\section{Introduction}
% -------------------------------

We study constructive proofs of the well-known Schwartz-Zippel Lemma, and their applications in complexity theory. This statement is sometimes referred to as the “Schwartz-Zippel-DeMillo-Lipton Lemma” following  
\cite{Zip79,Schwartz80,DeMilloLipton78}.
However, this result goes back to {\O}ystein Ore, 1922~\cite{ore} (over finite fields) and was subsequently rediscovered
by other authors; see \cite{arvind} who unearthed the history of this lemma. Due to this complicated history,  it is sometimes called ``The Polynomial Identity Lemma''.

\begin{theorem}[Schwartz-Zippel Lemma] 
\label{thm:SZ-meta}
Let~$\F$ be a field, let~$\vx$ be a set of~$n$ indeterminates, let~$S
\subseteq \F$ be a finite subset of field elements, and let~$P(\vx)$
be a multivariate polynomial in the indeterminates~$\vx$ with coefficients in the field~$\F$ 
and maximum individual degree at most~$d$.  Then, either every point in~$\F^n$ is a root of $P(\vx)$, 
or the number of roots in~$S^n$ is
at most~$d \cd n \cd |S|^{n-1}$. In particular, the number of roots in $S^n$ is either $|S|^n$ or
at most~$d \cd n \cd |S|^{n-1}$.\footnote{Our version of
the lemma is for \emph{maximum individual degree} and is
closely related to Zippel's version of the lemma in~\cite{Zip79}.
Zippel's bound, also for
maximum individual degree, is slightly tighter than 
the one stated here, namely:~$|S|^n \cdot (1 - (1 - d/|S|)^n)
\leq d \cd n \cd |S|^{n-1}$. There is also a better known
version of the lemma
for \emph{total degree} $D$, where the bound is $D \cdot |S|^{n-1}$.
Note that $d \leq D \leq d \cdot n$, so the bound for total degree 
is never further than a factor of $n$ from our bound.
In the regime where $|S|$ is bigger than $n$ and $d$, which is
the setting of most applications, all bounds are equally 
useful and yield similar conclusions. See~\cite{Regan} 
and the discussion there, for a comparison of the bounds, and 
for a discussion on how tight they are.}
\end{theorem}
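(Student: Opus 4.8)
The plan is to prove the bound $d \cd n \cd |S|^{n-1}$ by induction on the number of variables $n$, which is the standard route but which I will set up so as to keep an eye on constructivity (the paper later wants a polynomial-time invertible surjection onto the root set, so the induction should be genuinely algorithmic). For the base case $n=1$, a nonzero univariate polynomial of degree at most $d$ has at most $d$ roots in $\F$, hence at most $\min(d,|S|)$ roots in $S$; if it has more than $d$ roots it must be the zero polynomial, which is the "$|S|^n$" alternative. For the inductive step, write $P(\vx) = \sum_{i=0}^{d} x_n^i \cd P_i(x_1,\dots,x_{n-1})$, collecting $P$ as a polynomial in $x_n$ with coefficients $P_i$ in the remaining $n-1$ indeterminates, each of maximum individual degree at most $d$.

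First I would split $S^{n-1}$ into the set $A$ of points $\va$ on which \emph{every} coefficient $P_i(\va)$ vanishes, and its complement $B = S^{n-1}\setminus A$. For $\va \in B$, the univariate polynomial $P(\va,x_n)$ is nonzero of degree at most $d$, so it has at most $d$ roots in $S$; this contributes at most $d \cd |B| \le d\cd|S|^{n-1}$ roots over all of $\va \in B$. For $\va \in A$, every point $(\va,b)$ with $b\in S$ is a root, but now observe that $A$ is exactly the common zero set in $S^{n-1}$ of the coefficients $P_0,\dots,P_d$; in particular $A$ is contained in the zero set of, say, whichever $P_i$ is not the zero polynomial (such an $i$ exists unless $P$ itself is the zero polynomial, which is again the "$|S|^n$" alternative). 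Applying the induction hypothesis to that $P_i$ gives $|A| \le d\cd(n-1)\cd|S|^{n-2}$, so the contribution from $A$ is at most $d\cd(n-1)\cd|S|^{n-2}\cd|S| = d\cd(n-1)\cd|S|^{n-1}$. Adding the two contributions yields $d\cd|S|^{n-1} + d\cd(n-1)\cd|S|^{n-1} = d\cd n\cd|S|^{n-1}$, as desired.

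A subtle point to get right is the dichotomy "either every point of $\F^n$ is a root, or $P_i$ is not the zero polynomial for some $i$": these are genuinely the same condition, since $P$ is the zero polynomial over a field iff it vanishes identically on $\F^n$ (this uses that $\F$ may be infinite, but it is the standard fact that a nonzero polynomial over a field does not vanish everywhere — itself provable by the same induction). I would state this equivalence cleanly at the outset so that the induction only ever has to deal with honestly-nonzero polynomials. The "in particular" clause then follows for free: $S^n$ has $|S|^n$ points, so if the number of roots exceeds $d\cd n\cd|S|^{n-1}$ the only remaining possibility is that all of $S^n$ consists of roots, and then (by the equivalence just discussed, applied with $S$ in place of $\F$ when $|S| > d$, or directly) $P$ vanishes everywhere on $\F^n$.

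The main obstacle is not the counting — that is routine — but making the argument uniform and constructive enough to serve as the backbone of the later results: the split into $A$ and $B$, the choice of a nonzero coefficient $P_i$, and the recursive call must all be performed by polynomial-time machinery, and the recursion depth is $n$, so one must check the polynomial-time bounds compose correctly and that the theory $\Sonetwo$ can carry out the induction (length induction on $n$ suffices since $n$ is small relative to the input). I expect the bookkeeping for the invertible surjection — decoding a root of $P$ from a root of the chosen $P_i$ together with the extra coordinate $x_n$, and inverting this — to be where the real work lies, but for the bare inequality stated in the theorem the induction above is complete.
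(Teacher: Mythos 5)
Your proof of the bare inequality is correct, but it takes the standard route by induction on the number of variables, which is genuinely different from the paper's argument. The paper instead gives a \emph{coding argument}: fix any non-root $\va$; to encode a root $\vc \in S^n$, find the least $k$ for which the hybrid $(c_1,\dots,c_k,a_{k+1},\dots,a_n)$ becomes a root while $(c_1,\dots,c_{k-1},a_k,\dots,a_n)$ is not, let $i \in [d]$ be the index of $c_k$ among the roots in $S$ of the non-zero univariate restriction $P(c_1,\dots,c_{k-1},t,a_{k+1},\dots,a_n)$, and record $(k,i,c_1,\dots,c_{k-1},c_{k+1},\dots,c_n)$. This is a poly-time computable, poly-time invertible surjection from a set of size $d\cd n\cd|S|^{n-1}$ onto the roots. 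The advantage is exactly the obstacle you flag at the end of your proposal and then set aside as ``bookkeeping'': your induction requires forming the coefficients $P_i$ of $P$ viewed as a polynomial in $x_n$, testing which are non-zero, and recursing, but if $P$ is given as an algebraic circuit there is no known polynomial-time way to produce the $P_i$ (they generally have no small circuit representation), so the ``split into $A$ and $B$'' and ``choice of a non-zero $P_i$'' cannot in fact be done by poly-time machinery, and the induction is not feasibly constructive. The paper's encoding never manipulates the $P_i$ at all — it only evaluates $P$ at polynomially many points along axis-parallel lines — which is why it formalises in $\Sonetwo$ and yields the surjection needed downstream. One small cavil: your claimed biconditional ``$P$ is the zero polynomial iff $P$ vanishes identically on $\F^n$'' is false over finite fields (e.g.~$x^p - x$ over $\F_p$); this does not break your proof since you only use the trivial direction, and in fact your argument proves the stronger dichotomy between \emph{being the formal zero polynomial} and having few roots, which implies the theorem's dichotomy, but the justification as written should be dropped.
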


The Schwartz-Zippel Lemma is a cornerstone in randomised algorithms and the use of randomness in computing, with a wide range of applications in computational complexity, coding theory, graph algorithms, and algebraic computation. The lemma shows that evaluating a non-zero polynomial on inputs chosen randomly from a large enough set is likely to find an input that produces a non-zero result. This offers a fast test with good guarantees for checking if a polynomial is identically zero.

While the lemma provides significant efficiency advantages in its applications, it is based on an existential statement---the existence of many non-roots for non-zero polynomials---without offering a deterministic way to find these non-roots or easily witness their absence. Accordingly, the standard textbook proof of the lemma, which goes by induction on the number of variables, although simple, does not reveal a feasible constructive argument. Specifically, it treats multivariate polynomials as potentially exponential sums of monomials. This non-constructive nature of the lemma is one reason why providing feasible constructive proofs has been challenging (cf.~\cite{LC11}).

Although different in nature, feasible constructivity in proofs and algorithms often go hand in hand, either informally or, at times, formally through translation theorems between proofs and computation. This work presents a new proof of the Schwartz-Zippel Lemma that fits within the framework of feasibly constructive proofs in a precise manner.  We then demonstrate  several applications of this proof, both in feasible constructive mathematics and in computational complexity, as we explain in the following. 
\bigskip

\noindent\textit{Organisation of the introduction.}
In \Cref{sec:intro:The Quest} we discuss  the  motivation behind seeking feasibly constructive proofs in general, and specifically the approach taken here to carry out the new proof within the formal logic framework of bounded arithmetic, which are formal theories accompanying and complementing the development of complexity theory. In \Cref{sec:intro:new proof of SZ} we describe (in the ``meta-theory'') the new feasibly constructive proof of Schwartz-Zippel. For those interested to first see  the new proof, it is presented at the end of \Cref{sec:intro:new proof of SZ}. We precede it with an exposition that aims to explain the intuition behind the new proof.
In \Cref{sec:intro:applications} we discuss applications of the new proof to bounded arithmetic, where our new argument helps to settle the open problem 
of  formalising  the Schwartz-Zippel Lemma in bounded arithmetic. 
Specifically, in \Cref{sec:intro:Formalisation of PIT/HS} we discuss how to prove the existence of small hitting sets and hence formalise Polynomial Identity Testing (PIT) in the theory. In \Cref{sec:intro:Contribution to Reverse math project} we describe a ``reversal'' theorem, in which the dual weak pigeonhole principle---namely, the statement that a function cannot map surjectively a small domain onto a large range---is shown to be equivalent to the existence of small hitting sets. And in \Cref{sec:intro:Applications to computational complexity and Range Avoidance Problem} we show that this  reversal theorem implies that finding hitting sets is complete for the class of Range Avoidance Problems (aka~$\APEPP$).

% ---------------------------------------------
\subsection{The Quest for Feasibly Constructive Proofs}\label{sec:intro:The Quest}
% ---------------------------------------------
%\Iddo{Some sentences are copied from my paper with Cook---I guess self plagiarism is fine here, at least for now :)} 

While existence proofs, particularly those establishing the presence of certain combinatorial objects, are very useful---for instance, in probabilistic 
combinatorics, where one seeks to identify objects with certain properties that are as large or small as possible, such as expanders or combinatorial designs---it is widely acknowledged that constructive arguments, which provide explicit methods for constructing these objects, are often even more fruitful. This is especially significant in algorithmic contexts, where the utility of the objects (e.g., expanders) depends on their ``explicitness'', meaning that they must be feasibly computable, typically in polynomial time.

In computational complexity theory, the notions of constructivity and explicitness are equally critical. Fundamental questions about separating complexity classes hinge on explicit languages, such as the satisfiability problem (SAT), and whether SAT can be solved by polynomial-size circuits. For random, non-explicit languages, the analogous problems are almost trivially resolved: a simple counting argument shows that there exist non-explicit Boolean functions that cannot be computed by Boolean circuits of polynomial-size.

A related but distinct approach to feasible constructivity in complexity theory is found in the framework of \emph{bounded arithmetic}, which is the field that studies the computational complexity of concepts needed to prove different statements. In this setting, one aims to formalise constructivity in a more rigorous and systematic manner. Bounded arithmetic is a general name for a family of weak formal theories of
arithmetic (that is, natural numbers, and whose intended model is \N). These theories are characterized by their axioms and language (set of symbols), starting from a basic set of axioms providing the elementary properties of numbers. Each bounded
arithmetic theory possesses different additional axioms postulating the existence of different sets of numbers, or different kinds of induction  principles. Based on its specific axioms, each theory
of bounded arithmetic proves the totality of functions from different complexity classes (e.g.,
polynomial-time functions). We can typically consider such theories as working
over a logical language that contains the function symbols of that prescribed complexity class. In
this sense, proofs in the theory use concepts from a specific complexity class, and we can say that
the theory captures “reasoning in this class” (e.g., “polynomial-time reasoning”).

%The objects used in the arguments are feasibly constructive, typically computable in polynomial time, and each step of the proof is constrained by feasibility conditions. For instance, inductive statements must be formalised using formulas whose complexity corresponds to specific classes, such as \P\ or \NP.

\smallskip 

In the current work we shall start with a standard naturally written constructive proof (in the ``meta-theory'') of the Schwartz-Zippel Lemma (\Cref{sec:intro:new proof of SZ}), following the first, less formal approach, to constructivity. We then show how the new proof fits in the formal approach to feasible constructivity of bounded arithmetic. We moreover exemplify the usual benefits of such proofs by showing applications both in bounded arithmetic and computational complexity.

%Specifically, we shall work in formal first-order theories that correspond in a precise manner to ``feasible reasoning''.\Iddo{Check!} Functions and sets definable in the theory are polynomial-time, while the proofs in the theory use inductive statements in which the inductive hypothesis can be expressed as an \NP\ predicate. \Iddo{check!}
\bigskip

%%%%%%%%%%%%%%%%%%%%%%%%%%%%%%%%%%%%%%%%%%%%%%%%%%
%%%%%%%%%%%%%%%%%%%%%%%%%%%%%%%%%%%%%%%%%%%%%%%%%%
%%%%%%%%%%%%%%%%%%%%%%%%%%%%%%%%%%%%%%%%%%%%%%%%%%
% \subsection{Background on theories of bounded arithmetic, their utility and 
% applications}
\noindent\textbf{Background on theories of bounded arithmetic, their 
utility and applications.}
While the first theory for polynomial-time reasoning was the equational theory \PV\ considered
by Cook \cite{Coo75}, bounded arithmetic goes back to the work of Parikh \cite{Par71} and Paris-Wilkie \cite{PW85}. In a seminal work, Buss~\cite{Bus86} introduced other theories of bounded arithmetic and laid much of the foundation for future work in the field.

Using formal theories of bounded arithmetic is important for several reasons. First, it provides a rigorous framework to ask questions about provability, independence, and the limits of what can be proved by which means and arguments. The quest for ``barriers'' in computational complexity---namely, the idea that some forms of arguments are futile as a way to solve the fundamental hardness problems in complexity---such as Relativisation by Baker, Gill and Solovay \cite{BGS75}, Natural Proofs by Razborov and Rudich~\cite{RR97} or Algebrisation by Aaronson and Wigderson~\cite{AW09}, has played an important role in complexity theory. Nevertheless, the language of formal logic provides a more systematic framework for such a project (cf.~\cite{AIV92,IKK09} and the recent work \cite{CLO24}).
In that sense,  bounded arithmetic allows us to  identify suitable logical theories capable of formalising most contemporary complexity theory, and determine whether the major fundamental open problems in the field about lower bounds are provable or unprovable
within these theories.

%the formal framework of bounded arithmetic has spawned the direction coined ``Reverse Bounded Mathematics'' by Cook (cf.~Cook and Nguyen~\cite{CN10}). 

Furthermore, bounded arithmetic serves as a framework in which the \emph{bounded reverse mathematics}
program is developed (in an analogy to Friedman and Simpson reverse mathematics program \cite{Sim99}). In this program, one seeks to find the weakest theory capable of proving a given theorem. In other words, we seek to discover what are the axioms that are not only sufficient to prove a certain theorem, but rather are also \emph{necessary}. 
Special theorems of interest are those of computer science and computational complexity
theory. The motivation is to shed light on the role of complexity classes in proofs, in the hope
to delineate, for example, those concepts that are needed to progress on major problems in computational
complexity from those that are not. For instance, it has been identified that apparently most
results in contemporary computational complexity can be proved using polynomial-time concepts
(e.g., in \PV) (cf.~\cite{Pic15}), and it is important to understand whether stronger theories and concepts are needed to prove certain results.

Accordingly, recent results in bounded arithmetic \cite{CLO24} seek to systematically build a bounded reverse mathematics of complexity lower bounds, in particular, as these are perhaps the most fundamental questions in complexity. This serves to establish complexity lower bounds as ``fundamental mathematical axioms'' which are equivalent, over the base theory, to rudimentary combinatorial principles such as the pigeonhole principle or related principles. Indeed, we will show that the (upper bound) statement about 
the existence of small hitting sets is equivalent to the (dual weak) pigeonhole principle. (It is interesting to note that the existence of explicit small hitting sets, which implies efficient PIT, is also a disguised lower bound statement as shown by Kabanets and Impagliazzo \cite{KabanetsImpagliazzo04}; though we do not attempt to formalise or pursue this direction in the current work.)
%showing that both upper bounds    and lower bounds can be thought of as fundamental axioms.     

% cite: "Reverse mathematics of lower bounds. To pinpoint the exact strength of the theories and
% axioms needed for the formalisation of lower bound statements of interest, we systematically
% explore the reverse mathematics of lower bounds. More precisely, we aim to determine the
% fundamental principles (axioms) needed to establish certain lower bounds, by showing (over
% a weak base theory) that the lower bound statement and the corresponding principle are
% equivalent (i.e., can be derived from each other). We succeed in showing that natural lower
% bound statements from communication complexity, error correcting codes, Turing machines,
% and certain basic principles for polynomial-time functions are all equivalent over PV1. In light
% of these results, complexity lower bounds can be formally seen as fundamental mathematical
% axioms with far-reaching proof-theoretic implications."

Another advantage of bounded arithmetic comes in the form of ``witnessing'' theorems. These are results that automatically convert formal proofs (of low logical-complexity theorems, namely, few quantifier alternations) in bounded arithmetic to feasible algorithms (usually, deterministic polynomial-time ones).
Witnessing theorems come in different flavours and strength, and recent work show the advantage of this method in both lower and upper bounds \cite{CarmosinoKabanetsKolokolovaOliveira2021,ILW23,LO23}. 

Moreover, the somewhat forbidding framework of bounded arithmetic forces one to think algorithmically from the get go, optimising  constructions. This resulted in new arguments to existing results, which proved very useful in complexity, beyond the scope of bounded arithmetic. A celebrated example is Razborov's new coding argument of the H{\aa}stad's switching lemmas \cite{Has89}, which emerged as work in bounded arithmetic \cite{Razb95}. (Intriguingly, our new argument for Schwartz-Zippel will also be based on a coding argument.)
\bigskip

\noindent\textbf{Randomness and feasibly constructive proofs.}\ 
One central part of complexity that was challenging to fit into bounded arithmetic is that of random algorithms. Randomness usually entails thinking of probability spaces of exponential size (e.g., the outcome of $n$ coin flips), and so cannot be directly used in most bounded arithmetic theories which cannot state the existence of exponential size sets. Initial work by Wilkie (unpublished) \cite[Theorem~7.3.7]{Kra95}, taken as well by Thapen \cite{Tha02}, and systematically developed in a  series of works of Je{\v{r}}{\'a}bek (cf.~\cite{Jer07}), concerned how to work with randomness in bounded arithmetic. 
Nevertheless, one of the most well-known examples for using  randomness in computing, the question of formulating Schwartz-Zippel and polynomial identity testing in particular, was left open due to the exponential nature of the standard argument (i.e., the need to treat polynomials as sums of exponential many monomials; see the first paragraph in Section~\ref{sec:intro:new proof of SZ}). For example, L\^{e} and Cook in \cite{LC11} list this as an open question (where they settled for formalising a special case of the \SZ\ Lemma).

The formalisation of \SZ\ Lemma and PIT in bounded arithmetic and its consequences we present, hopefully exemplify 
several of the benefits of bounded arithmetic described above. It serves to fill in the missing link in the 
formalisation of complexity of randomness; it produces a new (coding) argument of Schwartz-Zippel lemma that may be 
of independent interest; it establishes the existence of hitting sets as equivalent to the dual weak pigeonhole  
principle, and thus provides further examples of the ``axiomatic'' nature of building blocks in complexity theory; 
lastly, it has consequences to computational complexity, by showing that hitting sets are complete for the class of 
range avoidance problems.

% ----------------------------------------------
\subsection{New Proof of the Schwartz-Zippel Lemma}
% ----------------------------------------------
\label{sec:intro:new proof of SZ}

The standard \SZ\ proof proceeds by induction on the number of variables~$n$
but is not a feasibly constructive proof.
It is non-constructive in the sense that each inductive step involves 
stating properties of objects that are of exponential size. For
example, in step~$i$ of the induction, the proof states that the current 
polynomial on~$n-i+1$ variables can
be rewritten into a univariate polynomial in the last variable, 
with coefficients taken from the ring of polynomials in the 
first~$n-i$ variables. The non-constructive character of the proof 
stems then from the fact that there is no (known) efficient way of 
iterating this rewriting~$n$ times, unless the polynomial is given in 
explicit sum of monomials form. Note, however, that
the sum of monomials form is typically of exponential size.

Moshkovitz~\cite{Mos10} provided an alternative proof of the \SZ\ Lemma, but
only for finite fields. The proof in~\cite{Mos10} does not use explicit induction, 
but it is anyway unclear how to make it strictly constructive in our sense, 
namely how to identify polynomial-time algorithms for the concepts used in the 
proof, and then use these to formalise the whole argument in a relatively weak 
theory. We refer to Moshkovitz' proof again later in this section to compare it
with our approach.

\bigskip
\noindent\textit{Towards our new proof.}\ 
% Our new argument to prove the \SZ\ lemma uses a reduction 
% to counting the number of roots on a line through the cube,
% together with two new ideas:~(1)~an efficient dimension-reducing encoding 
% scheme for roots, and~(2)~a hybrid-type 
% argument to achieve this encoding scheme. While the idea of counting roots
% on a line was also present in Moshkovitz' proof, our use 
% of it is different and works for any field, finite or not. 
%
For the sake of exposition, let us begin with two natural but
failed attempts to a new proof. In what follows, let $P(\vx)$ be a 
polynomial over 
the field~$\F$, with~$n$ variables and maximum individual degree~$d$, 
and let~$\va \in \F^n$ be a given non-root;~$P(\va) \ne 0$. 
Let~$S \subseteq \F$ be a finite subset of the field.
%
% By a \emph{line} we mean a set of form $L = \{ \uv + t \cd \vv : t \in \F
% \}$ for some $\uv,\vv \in \F^n$. The important point here is that the
% polynomial $P$ restricted to such a line $L$, defined as $P_L(t) :=
% P(\uv + t\cd \vv)$, is a \emph{univariate} polynomial of the
% indeterminate $t$ that has degree at most $d$. In particular, by the
% fundamental theorem for (univariate) polynomial, each $P_L(t)$ is
% either zero or has at most $d$ roots in each line $L$.
%
In the \emph{first attempt}, we try to cover the cube~$S^n$ with at
most~$n\cdot |S|^{n-1}$ lines, each emanating from the non-root~$\va$.  
These lines are the subsets of~$\F^n$ 
of the form~$\{ \va + t \cd \vb : t \in
\F \}$ for some~$\vb \in \F^n$. For each such line~$L$, 
note that~$P$ retricted to~$L$, defined as~$P_L(t) := P(\va + t\cd\vb)$, is a
non-zero \emph{univariate} polynomial of degree at most~$d$. It is
non-zero because it evaluates to~$P(\va) \not= 0$ at~$t =
0$, and it has degree at most~$d$ because it is a linear restriction of~$P$.
Therefore, by the fundamental theorem for 
(univariate) polynomials, each such line has at most~$d$ roots, and since
the lines cover~$S^n$, we count 
at most~$d \cdot n \cdot |S|^{n-1}$ roots in~$S^n$ in total, and we are done.

The problem with this approach is that it is not always 
possible to cover $|S|^n$ with at most~$n \cdot |S|^{n-1}$ lines emanating 
from a single point~$\va$. A simple counterexample can be found already
at dimension $n = 2$ with $S = \{0,1,\ldots,q-1\}$ and $\va = (0,0)$: the $2q-1$ points
of the form $(i,q-1)$ or $(q-1,i)$ with $i \in S$ require each its own line
emanating from the origin, and the remaining $q^2-2q-2$ points cannot
be covered with one more line.

In the \emph{second attempt}, we want to cover the cube~$S^n$ again with at
most~$n \cdot |S|^{n-1}$ lines, but now we try with \emph{parallel lines}. 
For example we could consider
the set of axis-parallel lines of the form $\{(c_1,\ldots,c_{k-
1},t,c_{k+1},\ldots,c_n) : t \in \F \}$ with~$c_j \in S$ for all $j \not= k$, 
for some fixed~$k = 1,\ldots,n$.
The problem with this approach now is that it is not clear that all
such lines will go through some non-root of~$P$.
It is tempting to consider some sets of parallel lines that
are more related to the given non-root~$\va$, such as 
the set of lines of the form~$\{ \vb + t \cdot \va : t \in \F\}$ 
whose \emph{gradient} is $\va$. This is indeed the approach
taken by Moshkovitz in~\cite{Mos10}, but as far as we can see this
does not work for arbitrary non-roots~$\va$, and works only for a specific
kind of non-roots that seem hard to find in the first place.
\bigskip

\noindent\textit{Our approach.}
We are now ready to explain the two new ideas that we use in our proof, 
and how they overcome the obstacles of the previous two attempts.
First, instead of \emph{covering} the roots in $S^n$ 
with~$n \cd |S|^{n-1}$ lines where the polynomial is non-zero, we 
are going to \emph{encode} each root in~$S^n$
with one in~$n \cd |S|^{n-1}$ such lines, together with an additional
number~$i$ in~$1,\ldots,d$.
The lines we use to encode the roots are not
necessarily pair-wise parallel, though each line
will be parallel to one of the axes.
% If we can show that every~$P$-root can be encoded by 
% such a line together with a whole number in~$1,\ldots,d$,
% then
% we finish the argument by counting: we have bounded the number of roots 
% by~$n\cd d\cd|S|^{n-1}$. 
%
Second, to actually find this line, our proof uses a 
\emph{hybrid-type argument}. Concretely, to 
encode the root~$\cv$, we start at 
a line through~$\av$ and end at a line through~$\cv$.
Along the way, the process travels across at
most~$n$ axis-parallel lines of the cube~$S^n$,
changing the dimension of travel at each step.
The hybrid-argument is used to
preserve the property that the restriction
of~$P$ to the current line is still a non-zero polynomial. 
The exact details of how this is done are explained below.
\bigskip

% ***********************************************************
% ***********************************************************
% ***********************************************************

\fboxsep=6pt
\begin{center}
\fbox{\begin{minipage}[c]{0.92\textwidth} 
\setlength{\parindent}{1.5em}
\noindent\textbf{New proof
      of \SZ\ lemma}: 
      %Let~$\vb\in S^n$ be such that~$P(\vb)\neq 0$.
    %For every~$\vc\in S^n$ that is a~$P$-root, namely~$P(\vc)=0$, we
    %find the minimal~$k$ between~$0$ to~$n-1$ such
    %that~$P(c_1,\dots,c_k,b_{k+1},b_{k+2},\dots,b_n)\neq 0$
    %while~$P(c_1,\dots,c_k,c_{k+1},b_{k+2},\dots, b_n)= 0$.  
%
%    
 Let~$\va = (a_1,\ldots,a_n) \in \mathbb{F}^n$ be such that~$P(\va) \not= 0$,
 and let~$S$ be a finite subset of the field~$\mathbb{F}$.
 Each vector~$\vc = (c_1,\ldots,c_n) \in S^n$ of field elements in~$S$ can 
 be encoded with~$n$ numbers, each from~$1$ to~$|S|$, 
 by identifying each~$c_i$ with its position in an arbitrary ordering
 of the finite set~$S$.
 Our goal is to \emph{encode the roots of~$P$ in~$S^n$ with shorter codewords}. 
 To achieve this we will use only the fact that we know a non-root~$\va \in \mathbb{F}^n$. 
 
 Let~$\vc = (c_1,\ldots,c_n) \in S^n$ be such that~$P(\vc) = 0$.
    Find the minimal index~$k$ between~$1$ and~$n$ such
    that~$P(c_1,\dots,c_{k-1},a_k,a_{k+1},\dots,a_n)\neq 0$
    while~$P(c_1,\dots,c_{k-1},c_{k},a_{k+1},\dots, a_n)= 0$.  
    Such a~$k$ must exist since by assumption~$P(\va) \not= 0$ and~$P(\cv) = 0$.
    Observe that, if we are given both~$\va$ and $\cv$, then finding~$k$ is easy by looping through 
    the coordinates from left to
    right.  The argument hinges on
    the following observation: 
    \begin{quote}
    \emph{Knowing this~$k$ allows us to
    encode, given~$c_1,\dots,c_{k-1},c_{k+1},\dots,c_{n}$, the
    field element~$c_k\in S$ by a single number~$i$
    between~$1$ and~$d$.} 
    \end{quote}
    Therefore, we can use~$i$, together with~$k$ and the positions 
    of~$c_1,\ldots,c_{k-1},c_{k+1},\ldots,c_n$ in the fixed ordering of~$S$,
    as a code for~$\cv$. This shows that
    the set of roots in~$S^n$ can be encoded using
    only~$d \cd n \cd |S|^{n-1}$ numbers
    instead of the trivial~$|S|^n$ bound, concluding the argument.  
    
    %Such a~$k$ must exist by assumption on \vc\ and \vb, and is
    %easy to compute by looping through the coordinates from left to
    %right.
    
    % The argument hinges on
    % the following observation: \emph{this~$k$ allows us to
    % encode, given~$c_1,\dots,c_{k},c_{k+2},\dots,c_{n}$, the
    % field element~$c_{k+1}\in S$ by a single number
    % from~$[d]$.} Hence, the set of roots in $S^n$ can be encoded using
    % only~$d \cd |S|^{n-1}$ numbers, instead of the trivial~$|S|^n$
    % bound, concluding the argument.

    In detail, consider the
    root~$\vc=(c_1,\dots, c_{k-1},c_{k},c_{k+1}\dots,c_{n})$, where~$k$
    is the minimal index as above.  We encode~\vc\ by the~$n-1$
    elements~$c_1,\dots, c_{k-1},c_{k+1},\dots,c_{n}$ from~$S$, together
    with~$k$ and a second index~$i$ such that~$c_k \in S$ is
    the~$i$-th root in~$S$ of the univariate
    polynomial~$Q(t) = P(c_1,\dots,c_{k-1},t,a_{k+1},\dots,a_n)$. 
    % Observe that $i$ is at least $1$ because $c_k$ is a root,
    % and at most $d$ because $b_k$ is a non-root and hence $Q(x)$ 
    % is a non-zero polynomial of degree at most $d$.
    This encoding works because given the numbers~$k$ and $i$, 
    and the elements~$(c_1,\dots, c_{k-1},c_{k+1}\dots,c_{n})$ we can recover~$\vc$. 
    Indeed, by the choice of~$k$ and~$i$, the univariate
    polynomial~$Q(t)$ is \emph{non-zero and has degree at most~$d$}, which
    means that it has no more than~$d$ roots in $\mathbb{F}$.
    By looping through~$S$ we can find
    its~$i$-th root in~$S$ which, by construction, is~$c_{k}$. The fact that each 
    root of~$P$ in~$S^n$ is coded by~$n-1$ of its components
    and the two positive integers~$k$ and~$i$ finishes
    the proof as it means that the total number of such roots is bounded above
    by the number of all possible such codes:~$d\cd n \cd |S|^{n-1}$. 
\end{minipage}}
\end{center}

\begingroup
\setlength{\columnsep}{10pt}
\setlength{\intextsep}{3pt}
\begin{wrapfigure}{r}{5.5cm}
% \begin{figure}
% \begin{center}
% ***********************************************************
%\medskip 
%
% Credit for the tikz code:
% Jan Hlavacek, stackexchange member,
% Modified from the original at https://tex.stackexchange.com/a/12069,
% under license CC BY-SA 2.5
%
\begin{tikzpicture}
   \clip (-3,-3.4) rectangle (3,1.2);
   \coordinate (tf) at (0,0);
   \coordinate (bf) at (0,-3);
   \coordinate (tr) at (15:2.5cm);
   \coordinate (tl) at (165:2.5cm);

   % You can change the perspective by playing with the 5, 5, 15:
   \coordinate (fr) at ($ (tf)!5!(tr) $);
   \coordinate (fl) at ($ (tf)!5!(tl) $);
   \coordinate (fb) at ($ (tf)!15!(bf) $);

   \path[name path=brpath] (bf) -- (fr);
   \path[name path=rbpath] (tr) -- (fb);
   \path[name path=blpath] (bf) -- (fl);
   \path[name path=lbpath] (tl) -- (fb);
   \path[name path=trpath] (tl) -- (fr);
   \path[name path=tlpath] (tr) -- (fl);

   \draw[name intersections={of=brpath and rbpath}] (intersection-1)coordinate (br){}; 
   \draw[name intersections={of=blpath and lbpath}] (intersection-1)coordinate (bl){}; 
   \draw[name intersections={of=trpath and tlpath}] (intersection-1)coordinate (tb){}; 

   \shade[right color=gray!10, left color=black!50, shading angle=105] (tf) -- (bf) -- (bl) -- (tl) -- cycle;
   \shade[left color=gray!10, right color=black!50, shading angle=75] (tf) -- (bf) -- (br) -- (tr) -- cycle;

  \begin{scope}
    \clip (tf) -- (tr) -- (tb) -- (tl) -- cycle;
    \shade[inner color = gray!5, outer color=black!50, shading=radial] (tf) ellipse (3cm and 1.5cm);
  \end{scope}

  \draw (tf) -- (bf);
  \draw (tf) -- (tr);
  \draw (tf) -- (tl);
  \draw (tr) -- (br);
  \draw (bf) -- (br);
  \draw (tl) -- (bl);
  \draw (bf) -- (bl);
  \draw (tb) -- (tr);
  \draw (tb) -- (tl);

  %set the sizes of the little cubes:
  \def\tone{.4}\def\ttwo{.75}\def\fone{.36}\def\ftwo{.70}
  \draw ($ (bf)!\tone!(br) $) -- ($ (tf)!\tone!(tr) $) -- ($ (tl)!\tone!(tb) $);
  \draw ($ (bf)!\ttwo!(br) $) -- ($ (tf)!\ttwo!(tr) $) -- ($ (tl)!\ttwo!(tb) $);
  \draw ($ (bf)!\tone!(bl) $) -- ($ (tf)!\tone!(tl) $) -- ($ (tr)!\tone!(tb) $);
  \draw ($ (bf)!\ttwo!(bl) $) -- ($ (tf)!\ttwo!(tl) $) -- ($ (tr)!\ttwo!(tb) $);
  \draw ($ (tl)!\fone!(bl) $) -- ($ (tf)!\fone!(bf) $) -- ($ (tr)!\fone!(br) $);
  \draw ($ (tl)!\ftwo!(bl) $) -- ($ (tf)!\ftwo!(bf) $) -- ($ (tr)!\ftwo!(br) $);

  \definecolor{mygreen}{rgb}{0,0.5,0.13}

  \draw [line width=0.08cm, red] (tf) -- (bf);
  \draw [line width=0.08cm, mygreen] (tf) -- (tr);
  %$ (bf)!\ttwo!(tf) $) -- ($ (br)!\ttwo!(tr) $);                                                  
  \draw [line width=0.08cm, blue] ($ (tf)!\ttwo!(tr) $) -- ($ (tl)!\ttwo!(tb) $);

  \def\tonenew{.3}\def\ttwonew{.66}\def\tyet{0.8}
	\shade[ball color=red!100, opacity=0.8] ($ (bf)!\tonenew!(tf) $) circle (0.1cm);
        \fill[red!100] ($ (bf)!\tonenew!(tf) $) circle (0.1cm);

        \shade[ball color=blue!100, opacity=0.3] (0.85,0.69) circle (0.1cm);
        \fill[blue!100] (0.85,0.69) circle (0.1cm);
 \node at (0,-3.2) {\tiny $(0,0,0)$};
\end{tikzpicture}
% \end{center}
% \caption{The encoding process
% illustrated on the~$3$-dimensional cube
% of side-length~$4$.}
% \label{fig:hybrid}
% \end{figure}
\end{wrapfigure}
\noindent\textit{Illustration of the encoding process.}\
% The encoding-of-roots process can be thought
% of as following % a sequence of axis-parallel
% lines on the cube.
% in the following way. Starting at the non-root~$\av$, we
% take a walk in the direction of the root~$\cv$
% following a well-chosen sequence of axis-parallel
% lines of the cube. To choose this sequence, we use
% the hybrid argument to produce a journey that 
% preserves following invariant: the polynomial 
% restricted to the current line does not vanish 
% everywhere.
%
Refer to the cube on the right.
% in Figure~\ref{fig:hybrid}.
%
%  On the picture, the origin point~$(0,0,0)$
%  is represented by the bottom corner of the cube.
  The~$x,y,z$ axes are represented by the vertical dimension, 
  and the two
  horizontal dimensions, respectively. 
  The given non-root is the red dot~$\overline{a}
  = (1,0,0)$. The root to encode is the blue
  dot~$\overline{c} = (3,2,1)$. The process starts at the red line~$(t,0,0)$ 
  through~$\overline{a}$, parallel to the~$x$ axis. Replacing 
  the first component~$a_1$ by~$c_1$,
  we check if~$P(c_1,a_2,a_3) = 0$. If not, we travel 
  along the red line
  for~$c_1-a_1 = 2$ units to reach the green 
  line~$(3,t,0)$, parallel 
  to the~$y$ axis. Next we check if~$P(c_1,c_2,a_3) = 0$.
  If not, we travel along the green 
  line for~$c_2-a_2 = 2$ 
  units to reach the blue 
  line~$(3,2,t)$, parallel to the~$z$ axis. 
  We test now if~$P(c_1,c_2,c_3) = 0$, which checks,
  because~$\vc$ is a root.
  The journey ends here. In this example
  it took us~$k=3$ steps to reach
  the root. Note that this was the first~$k$ 
  in~$1,\ldots,n$ that caused~$P$ to vanish on the 
  \emph{hybrid}~$(c_1,\ldots,c_{k-1},c_k,a_{k+1},\ldots,a_n)$.
  
  The blue line, call it $L$, is the one we use to encode the
  root $\vc$. To encode it, we use
  the index of~$c_k$ as a root of the univariate polynomial $P_L(t)$ obtained
  by restricting~$P$ to this line. If this index is $i$ in~$1,\ldots,d$, 
  then we encode the root~$\overline{c}$ by~$(i,k,(c_1,\ldots,c_{k-
  1},c_{k+1},\ldots,c_n))$. In the example, if the index~$i$ 
  happens to be~$1$, then we use~$(1,3,(3,2))$. Note that
  this encoding depends on the given non-root~$\overline{a}$, but any
  non-root serves the purpose of encoding all roots in~$S^n$.
\endgroup

%%%%%%%%%%%%%%%%%%%%%%%%%%%%%%%%%%%%%%%
\subsection{Applications}\label{sec:intro:applications}
%%%%%%%%%%%%%%%%%%%%%%%%%%%%%%%%%%%%%%%

\subsubsection{\SZ\ Lemma in the Theory}
\label{sec:The formalisation of SZ}
We begin with a short  informal description of the theories, language  and axioms 
we shall use.

\begin{description}

\item[\PV:] This is the language 
with a function symbol for every polynomial-time function, with its meaning specified
by the equations that define it via Cobham's bounded recursion on notation.

\item[\Sonetwo:] The first level of Buss' family of theories \cite{Bus86} for basic number theory whose definable functions are precisely the polynomial-time functions. 
It contains basic axioms for properties of 
numbers (e.g., associativity of product), together with a polynomial induction axiom for \NP-predicates.
The extension of $\Sonetwo$ 
with all $\PV$-symbols and the Cobham equations as axioms is denoted by~$\Sonetwo(\PV)$.
The theory has the same theorems 
as $\Sonetwo$ in
the base language (see~\cite{Bus86}), and
it is customary to abuse notation and still
call it $\Sonetwo$ instead of the heavier $\Sonetwo(\PV)$.
% 
% %$\varphi(0) \wedge \forall x(\varphi(\lfloor x / 2\rfloor) 
% \rightarrow \varphi(x)) \rightarrow \forall x \varphi(x)
% $
% for all formulas $\varphi(a)$ expressing an \NP-predicate.
% When the number Polynomial induction reduces the proof of $\varphi(x)$ 
% to proving $\varphi(\lfloor x / 
% 2\rfloor)$allows us to reach the base case in polynomial many steps in $n$, 
% when starting with an integer $x$ represented by poly $(n)$ bits. 

\item[\dWPHP(\PV):] The set of \emph{dual weak pigeonhole principle} axioms $\dWPHP(f)$, for every 
po\-ly\-no\-mial-time function symbol $f\in\PV$. This axiom  states the simple counting principle that a 
function~$f$ cannot map surjectively a domain of size~$N$ to a range of size~$2N$ or more; 
namely, there is a point in the set of size~$2N$ that is not covered by~$f$. 
Wilkie (unpublished; see \cite[Theorem 7.3.7]{Kra95})  observed the connection between this principle and 
randomness in computation (within bounded arithmetic): roughly speaking, when~$f$ has a small domain but 
much larger co-domain, with high probability a point in the co-domain will not be covered by~$f$. 
Hence, the ability to  pick such a  point is akin to witnessing this  probabilistic 
argument.     

\item[\Sonetwo+\dWPHP(\PV):] \Sonetwo~(indeed $\Sonetwo(\PV)$; see above), augmented with the axioms $\dWPHP(f)$ for every polynomial-time function symbol $f\in\PV$. 
 This is a theory that can serve as a basis for probabilistic reasoning 
 (close to Je{\v{r}}{\'a}bek's  theory for approximate counting~\cite{Jer07}; cf.~\cite{Jer04}).

\end{description}

\noindent With this notation we can now state the form of \SZ\ Lemma that we prove. Here, and in the rest of this introduction,
let $[q]$ denote the set $\{1,\ldots,q\}$.

\begin{theorem}[\SZ\ Lemma in \Sonetwo; informal, see \Cref{lem:encoding-roots}]\label{thm:intro:SZ in theory} 
Let $P$ be a polynomial of degree $d$, given as an algebraic circuit, 
with integer coefficients and $n$ variables. Then, either $P$ is zero everywhere on \Z, or for every positive integer $q$ there is a (polynomial-time) function $f$ that given any non-root~$\av = (a_1,\ldots,a_n) \in \formalZ^n$ with~$P(\av) \not= 0$,  
returns a function~$f(\av):{\sf codes}\to{\sf roots}$ that maps the set
of codes \emph{surjectively onto} the set of roots in the cube $[q]^n$, and $|{\sf  codes}|\le d\cd n\cd q^{n-1}$.
\end{theorem}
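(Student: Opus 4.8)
The plan is to take the informal encoding argument presented in the boxed proof sketch and make every step polynomial-time computable and polynomial-time invertible, so that the claimed surjection $f(\av)$ is witnessed by an explicit algorithm. I would define the map in the \emph{decoding} direction: given a codeword $(i,k,(b_1,\ldots,b_{k-1},b_{k+1},\ldots,b_n)) \in [d]\times[n]\times[q]^{n-1}$ (identifying the positions $b_j\in[q]$ with the corresponding elements of $S=[q]$), the function $f(\av)$ first forms the univariate polynomial $Q(t) := P(b_1,\ldots,b_{k-1},t,a_{k+1},\ldots,a_n)$ by partial evaluation of the given algebraic circuit for $P$ — this is a polynomial-time circuit manipulation producing a circuit for $Q$; then it computes the coefficient vector of $Q$ (degree at most $d$, so only $d+1$ coefficients, each obtainable by evaluating/interpolating the circuit at $d+1$ points, all of polynomial bit-length since $P$ has polynomial degree and the inputs are from $[q]$); then it loops through $S=[q]$ in order, evaluates $Q$ at each element, and returns the hybrid point $(b_1,\ldots,b_{k-1},s,a_{k+1},\ldots,a_n)$ where $s$ is the $i$-th element of $S$ that is a root of $Q$ — or, if $Q$ has fewer than $i$ roots in $S$ (an ``invalid'' codeword), returns a fixed default root, e.g. $\av$ itself is not a root, so instead return the lexicographically-first root found during a separate scan, or simply map invalid codes to an arbitrary fixed valid code's image; the cleanest choice is to first run the honest encoding on $\av$'s neighbourhood to fix one canonical root to use as the default.

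Next I would verify \textbf{surjectivity}: given any root $\cv\in[q]^n$ with $P(\cv)=0$, the honest encoding procedure from the box — loop $k$ from $1$ to $n$, maintaining the hybrid $(c_1,\ldots,c_{k-1},a_k,\ldots,a_n)$, and stop at the first $k$ where $P(c_1,\ldots,c_{k-1},a_k,a_{k+1},\ldots,a_n)\neq 0$ but $P(c_1,\ldots,c_{k-1},c_k,a_{k+1},\ldots,a_n)=0$ — produces a well-defined $k$ (it exists because $P(\av)\neq 0$ and $P(\cv)=0$, so somewhere along the hybrid path the value flips from nonzero to zero) and then an index $i\le d$ (because $Q(t)=P(c_1,\ldots,c_{k-1},t,a_{k+1},\ldots,a_n)$ is nonzero of degree $\le d$, so $c_k$ is its $i$-th root in $S$ for some $i\in[d]$, using the fundamental theorem for univariate polynomials that a nonzero degree-$d$ polynomial has at most $d$ roots). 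Decoding this particular codeword recovers $\cv$ exactly, because $Q$ at decode time equals $Q$ at encode time (same $b_j=c_j$, same $a_j$) and the $i$-th root of a fixed nonzero polynomial is uniquely determined. Hence every root in $[q]^n$ is in the image, which gives surjectivity onto $\mathsf{roots}$; and $|\mathsf{codes}|\le d\cdot n\cdot q^{n-1}$ is immediate from the product structure.

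Finally, for the \textbf{formalisation in $\Sonetwo$} (needed for the later applications even though the informal theorem only claims a polynomial-time $f$), I would check that each of these objects is coded by a number of polynomial length and that the relevant reasoning is $\Sigma^b_1$-inductive: partial evaluation of circuits, interpolation to get the $d+1$ coefficients of $Q$, the bound ``a nonzero univariate polynomial of degree $\le d$ over $\Z$ has $\le d$ roots'' (this is the one genuine number-theoretic input — it follows from the polynomial remainder theorem / factoring out $(t-r)$, provable by $\Sigma^b_1$-induction on the number of roots since each division reduces the degree and the bit-lengths stay polynomial), and the termination of the left-to-right scan that finds $k$. I expect the \textbf{main obstacle} to be exactly this last point handled with enough care to stay inside $\Sonetwo$: one must ensure that evaluating the algebraic circuit for $P$ (and its partial evaluations) on integer inputs of polynomial bit-length yields integers of polynomial bit-length — which is false for unrestricted circuits but true here because the \emph{degree} $d$ of $P$ is assumed polynomially bounded, so iterated squaring cannot blow up the bit-length beyond $\poly(n,\log q,\text{size})$ — and that this bit-length bound, together with the root-counting lemma for univariate integer polynomials, is available as a $\Sigma^b_1$ fact rather than merely a true fact about $\N$. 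All the geometric ``hybrid line'' intuition is just bookkeeping once these two ingredients (bounded bit-length under evaluation, and $\le d$ roots for nonzero degree-$d$ univariates) are in place.
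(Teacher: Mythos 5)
Your proof follows essentially the same route as the paper: the same hybrid argument to locate the critical index $k$, the same encoding of $c_k$ as the $i$-th root of the partially-evaluated univariate polynomial $Q$, the same reliance on the at-most-$d$-roots half of the fundamental theorem of algebra (the paper's Lemma~\ref{lem:univariate}), and the same two formalisation concerns you flag (polynomial bit-length under circuit evaluation and the root-counting lemma). The one implementation difference worth noting is that you propose recovering the coefficients of $Q$ by interpolation from $d+1$ integer evaluations, whereas the paper extracts them directly by a gate-by-gate pass over the straight-line program (Lemma~\ref{lem:coefficientslemma}); the latter sidesteps the division and integrality bookkeeping that a Vandermonde solve would force on you, and is therefore the cleaner choice inside $\Sonetwo$, though both routes work.
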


%$f(\av):[n]\times[d]\times[q]^n\to{\rm roots}(P)$ maps surjectively the codes of roots onto  the roots of $P$ in the cube $\{1,\dots,q\}^n$.

%Moreover, there is a (polynomial-time) function that   inverts $f(\va)$ and inverts  $\encode_{P,\av}$ on $Z_{P,q}$: for every $\bv \in Z_{P,q}$ we have $\decode_{P,\av}(\encode_{P,\av}(\bv)) = \bv$.
 
% %There is a (polynomial-time) function surjective onto the that 
% \textcolor[rgb]{0.752941,0.752941,0.752941}{For all~$n,d,q \in \Log$ and all~$P \in \Circuits(n,d)$, either $P \equiv 0$ or
%   for every~$\av = (a_1,\ldots,a_n) \in \formalZ^n$ such that $P(\av) \not= 0$ the
%   function~$\decode_{P,\av}$ is surjective onto~$Z_{P,q}$ and inverts
%   $\encode_{P,\av}$ on $Z_{P,q}$: for every $\bv \in Z_{P,q}$ we have 
%   $\decode_{P,\av}(\encode_{P,\av}(\bv)) = \bv$.}

Here the codes and the function $f$ are defined according to  the 
encoding scheme 
% of the proof in the \SZ\ Lemma 
in Section~\ref{sec:intro:new proof of SZ}. 
Since given a non-root $\va$ the function $f(\va)$ is (provably) 
surjective onto the roots in the cube $[q]^n$, 
the number of roots of $P$ in the cube  is at most 
the number of codes, or~$|{\sf  
codes}|\le d\cd n\cd q^{n-1}$. To actually reason in the theory 
about the size of exponential-size sets like ${\sf  codes}$ we 
could have chosen to invoke approximate counting 
(based on Je{\v{r}}{\'a}bek's theories \cite{Jer07}, which would 
require the inclusion of the dual weak pigeonhole principle \dWPHP). 
However, we opt not to do this for the \SZ\ Lemma. Rather, we will show 
that this formulation of the \SZ\ Lemma, together with the \dWPHP(\PV)\ 
axiom, suffices to apply the lemma in its standard applications, such 
as PIT and finding small hitting sets (see below).

% 
% \begin{remark}
% With $P$ given 
% as in~\eqref{eqn:givenP}, where the number of roots is polynomial 
% in the representation size of $P$,
% we could also choose to state the upper bound by listing
% the roots. Indeed, provided~$d$ is polynomial in the representation size,
% this holds true even if $P$ is given by a univariate algebraic circuit,
% as we will later see.
% However, in the multivariate case with $n > 1$ variables, 
% the number of roots in~$S_q^n$
% may be exponential in the representation size of $P$ even if $d$ is small,
% so there we need to resort to the surjective mapping formulation of 
% the upper bound. For this reason and to be able to reuse it,
% we state the upper bound for the univariate case with a 
% surjective mapping too.
% \end{remark}

\bigskip 
En route to 
the proof of 
\Cref{thm:intro:SZ in theory}, we prove in the theory $\Sonetwo$ one half of the 
\emph{Fundamental Theorem of Algebra} (FTA). 
This is the fundamental theorem of univariate polynomials stating that
every non-zero polynomial of degree $d$ with complex coefficients
has exactly $d$ complex roots.
The theorem naturally splits
into two halves: the half that states that there are \emph{at least~$d$
  roots}, and the half that states that there are \emph{at most~$d$
  roots}. While the \emph{at least} statement relies on special properties of the
complex numbers, the proof of the \emph{at most} statement relies only on the fact
that univariate polynomials over a field admit Euclidean division.
In particular, the \emph{at most} statement 
holds also for polynomials over any
subring of a field; e.g., the integers.

% We 
% %prove in \Sonetwo\ the upper bound on univariate polynomials number of root over the integers (which is 
% %one half of the Fundamental Theorem of Algebra).
% show that the theory~$\Sonetwo$ is able to formalise the standard proof of 
% the second half of the FTA for the class of univariate polynomials 
% over the ring of integers.

\begin{theorem}[Half of Fundamental Theorem of Algebra in \Sonetwo\, informal; see~Lemma~\ref{lem:univariate}]
Every non-zero univariate polynomial of degree~$d$ with integer coefficients has
at most~$d$ roots on (every finite subset of)~$\Z$.
\end{theorem}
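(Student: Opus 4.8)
The plan is to formalise the classical argument based on Euclidean division, being careful that every object manipulated is of polynomial size and every inductive step is a polynomial induction available in $\Sonetwo$. A univariate polynomial of degree $d$ with integer coefficients is represented by its coefficient vector $(p_0,\ldots,p_d)$, which is a small object (the coefficients are given in binary, and there are $d+1$ of them). The key subroutine is division with remainder by a \emph{monic linear} polynomial $t - a$: given $Q(t)$ of degree $\le d$ and a field/ring element $a$, one computes in polynomial time coefficients of $Q'(t)$ of degree $\le d-1$ and a constant $r$ such that $Q(t) = (t-a)\cdot Q'(t) + r$; moreover $r = Q(a)$, and this identity of polynomials is verified coefficient-by-coefficient by a $\Sonetwo$-provable computation (synthetic division is just a polynomial-time recurrence on the coefficients, the kind of thing $\Sonetwo$ handles via Cobham recursion). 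In particular, if $a$ is a root of $Q$ then $r = 0$ and $Q(t) = (t-a) Q'(t)$ as polynomials.

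Next I would set up the main induction. Fix the non-zero polynomial $P$ of degree $d$ and a finite subset $A \subseteq \Z$ (say $A = \{a_1,\ldots,a_m\}$ listed without repetition, $m$ bounded by the size of the given set). I want to show that the number of $i \in [m]$ with $P(a_i) = 0$ is at most $d$. The cleanest route is to prove, by polynomial induction on $j \le m$, the statement: ``there is a polynomial $R_j(t)$ of degree $\le d$ and a polynomial $F_j(t) = \prod_{i \le j,\, a_i \text{ a root}} (t - a_i)$ (represented by its coefficient vector) such that $P(t) = F_j(t)\cdot R_j(t)$ as polynomials, and $\deg F_j = \#\{i \le j : P(a_i) = 0\}$, and $R_j \ne 0$.'' The inductive step: given the factorisation at stage $j$, look at $a_{j+1}$; if $P(a_{j+1}) \ne 0$ set $R_{j+1} = R_j$, $F_{j+1} = F_j$; if $P(a_{j+1}) = 0$, then since $F_j(a_{j+1}) \ne 0$ (the $a_i$ are distinct, so $a_{j+1}$ is not a root of $F_j$ — here one needs that a nonzero product of distinct linear factors evaluated outside its root set is nonzero, which follows from the integral-domain property of $\Z$, itself provable in $\Sonetwo$ from basic axioms) we get $R_j(a_{j+1}) = 0$, apply the division subroutine to $R_j$ and $a_{j+1}$ to write $R_j(t) = (t - a_{j+1}) R_{j+1}(t)$, and set $F_{j+1}(t) = F_j(t)(t-a_{j+1})$. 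The degree of $F_{j+1}$ increases by exactly one, the degree of $R_{j+1}$ drops by exactly one, $R_{j+1} \ne 0$ because $R_j \ne 0$ (division by $t-a_{j+1}$ of a nonzero polynomial is nonzero). This is a $\Sigma^b_1$-type statement about a polynomial-time-computable witness, so polynomial induction applies. At $j = m$ we get $\deg F_m = \#\{\text{roots in } A\}$ and $\deg F_m \le \deg P = d$ (since $P = F_m R_m$ with $R_m \ne 0$, and degrees of polynomials over $\Z$ add — again integral domain), so the number of roots in $A$ is at most $d$.

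The main obstacle I expect is \emph{not} the algebra but the bookkeeping needed to make ``$P(t) = F_j(t) R_j(t)$ as polynomials'' a legitimately bounded formula with a polynomial-time-checkable witness, and to push the induction through $\Sonetwo$: one must confirm that $F_j$, given as a coefficient vector, stays of polynomial bit-size throughout (its degree is $\le d$ and its coefficients are bounded integers, so this is fine), and that the equality of polynomials is decided by comparing $O(d)$ coefficients each computed by a fixed polynomial-time algorithm. The two ring-theoretic facts used — (i) $\Z$ has no zero divisors, hence $\deg(fg) = \deg f + \deg g$, and (ii) a product of distinct monic linear factors does not vanish outside its obvious root set — both reduce to the no-zero-divisor property, which needs to be established from the $\BASIC$ axioms once and for all; this is routine but must be stated carefully. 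Everything else (synthetic division, evaluation, the length bounds) is standard polynomial-time reasoning that $\Sonetwo(\PV)$ handles directly. I would also remark that the statement immediately transfers from $\Z$ to any subring of a field by the same argument, and in particular is exactly the form of the ``at most $d$ roots'' fact used as a black box in the proof of \Cref{thm:intro:SZ in theory}.
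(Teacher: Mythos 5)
Your proof is correct, and it would go through in $\Sonetwo$ with the kind of care you flag, but it takes a genuinely different inductive decomposition from the paper's. You induct \emph{forward} over the elements $a_1,\ldots,a_m$ of the given finite set, maintaining an explicit factorization $P = F_j R_j$ with $F_j = \prod_{i \le j,\,P(a_i)=0}(t-a_i)$, and you read the root count off as $\deg F_m$. The paper instead inducts \emph{downward} on the degree $k = d, d-1, \ldots, 0$, proving at level $k$ a statement about \emph{all} polynomials $A \in \UniPoly(k,s_k)$: given a root $v$ of $A$, synthetic division yields a quotient $B$ of one lower degree with $A \equiv (\formvar-v)B$, the induction hypothesis applies to $B$, and only that single quotient polynomial is carried; no product of linear factors is ever assembled. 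The paper also states its conclusion as surjectivity of the explicit enumeration $h_{P,q}:[d]\to S_q\cup\{q\}$ onto the root set $Z_{P,q}$ rather than as a cardinality bound, a choice made deliberately so the same form reappears in the multivariate case where roots cannot be listed; in your univariate setting the cardinality bound is equivalent and can be turned into such a surjection trivially. Both routes use synthetic division, both need the integral-domain property of $\Z$ (you for $\deg(F_jR_j)=\deg F_j+\deg R_j$ and $F_j(a_{j+1})\ne 0$; the paper for showing $Z_A \subseteq Z_B\cup\{v\}$), and both must fix a priori coefficient bit-complexity bounds to keep the induction formula legitimately bounded. On that last point your sketch only bounds the coefficients of $F_j$ explicitly; the coefficients of $R_j$ also grow with each division step, and the paper deals with this by defining the inverse recurrence $s_{k-1}=s_k+k|q|+|k|$ before the $\Pi^b_1$-$\pind$ is launched. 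You would need an analogous precomputed bound for $R_j$ (and for $F_j$'s elementary-symmetric-function growth) before your $\Sigma^b_1$-$\pind$ goes through. Beyond that bookkeeping the trade-off is minor: the paper's per-degree quantification is a formally stronger induction statement but carries one polynomial, while your forward factorization is closer to the textbook picture at the cost of carrying two coefficient vectors.
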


While
the underlying idea of this proof is standard, it 
is somewhat delicate to carry out the argument in $\Sonetwo$ because we need to keep 
track of the bit-complexity of the coefficients that appear along the way in the 
computations.
It is well-known that certain widely-used
algorithms working with integers or rational numbers,
including Gaussian Elimination,
could incur exponential blow-ups
in bit-complexity if careless choices were made in
their implementation; cf.~\cite{GLS1988}.
We also note that Je{\v{r}}{\'a}bek \cite{Jerabekthesis} 
formalised Gaussian Elimination over rationals in $\Sonetwo$,
and proved also the same half of the FTA that we prove, 
but only for finite fields, where exponential blow ups cannot
occur.

% -------------------------------------------------
\subsubsection{Existence of Hitting Sets and PIT in the Theory}
\label{sec:intro:Formalisation of PIT/HS}

For a field \F\ and a set of  algebraic circuits $\scriptC$ over $\F$ with $n$ 
variables, we say that a set $H \subseteq \F^n$ is a \emph{hitting set for} 
$\scriptC$ if for every non-zero polynomial~$P$ in~$\scriptC$  there exists a point 
$\va \in H$ such that~$P(\va)\neq 0$. In other words, if $P$ is non-zero, $H$ `hits' 
it. 
Hitting sets are important because when they are explicit and small they allow for 
derandomization of~PIT: running through the full hitting set one can test if a 
given algebraic circuit is the zero polynomial or not. 

By \Cref{thm:intro:SZ in theory}, the theory~$\Sonetwo$ proves (by means of 
a surjective map) that every non-zero~$n$-variable algebraic circuit with 
small degree $d$ has relatively few roots in~$[q]^n$. By a counting
argument (or the union bound, cf.~\cite[Theorem 4.1]{SY10}), it follows that for any given bounds~$d$ and~$2^m$ 
on the degree and the number of circuits in the class $\scriptC$, 
there is a set~$H \subseteq [q]^n$ of~$\poly(n,d,m)$ points, 
with~$q = \poly(n,d)$, that intersects the set 
of non-roots of every non-zero circuit in the class. 
This set~$H$ is thus a hitting set of polynomial size, 
and we say it is a hitting set for $\scriptC$ \emph{over $[q]$}. 
We show that this counting argument can now be formalised in the theory~$\Sonetwo+\dWPHP(\PV)$.

\begin{theorem}[Small Hitting Sets Exist in~$\Sonetwo+\dWPHP(\PV)$; informal, 
see \Cref{lem:hittingset}] \label{lem:intro:hittingset} 
For every class $\scriptC$ of algebraic circuits with integer coefficients that is definable in the theory, 
with~$n$ variables, polynomial degree, and polynomial size, 
there exists a polynomial-size hitting set for~$\scriptC$ over~$[q]$ with $q=\poly(n)$. 
\end{theorem}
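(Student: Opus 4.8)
The plan is to formalise the standard union-bound argument, using Theorem~\ref{thm:intro:SZ in theory} (the $\Sonetwo$-version of Schwartz-Zippel) as the source of the root bound, and the $\dWPHP(\PV)$ axiom as the vehicle for producing a good hitting set from a function that maps a small domain onto the ``bad'' configurations. First I would fix the parameters: let $\scriptC$ be the class, definable by a circuit $E(z,\vx)$ taking an index $z \in [2^m]$ (with $m = \poly(n)$) and producing an algebraic circuit $C_z$ of size $s = \poly(n)$ and degree $d = \poly(n)$ in $n$ variables. Choose $q = \poly(n)$ large enough that the Schwartz-Zippel bound $d \cd n \cd q^{n-1}$ is at most $q^n / (2 \cd 2^m)$, i.e. so that each non-zero circuit in $\scriptC$ has at most a $1/(2\cdot 2^m)$-fraction of roots in $[q]^n$; this only requires $q \geq 2 d n 2^m$, which is $\poly(n)$ since $m,d$ are. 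Set $h = \poly(n,m)$, the intended hitting-set size, to be chosen below.

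Next I would set up the $\dWPHP$ instance. The objects I want to avoid are the ``bad'' tuples $H = (\va_1,\ldots,\va_h) \in ([q]^n)^h$ that fail to be hitting sets, i.e. for which some $z \in [2^m]$ with $C_z$ non-zero has $C_z(\va_j) = 0$ for all $j \in [h]$. The key step is to define, provably in $\Sonetwo$, a $\PV$-function $g$ whose range contains all bad tuples and whose domain is much smaller than $([q]^n)^h = q^{nh}$. The input to $g$ is: an index $z \in [2^m]$; a non-root $\vb \in [q]^n$ of $C_z$ (which exists when $C_z$ is non-zero, and can be found by brute-force search inside $\Sonetwo$ since $q,n$ make $[q]^n$ of polynomial... no---$[q]^n$ is exponential, so instead $\vb$ is \emph{given} as part of the input, exactly as in Theorem~\ref{thm:intro:SZ in theory}); and for each $j \in [h]$ a code $\sigma_j \in \mathsf{codes}_z$ for the $j$-th point, where $|\mathsf{codes}_z| \leq d n q^{n-1}$. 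The function $g$ then applies the surjection $f(\vb)\colon \mathsf{codes}_z \to \mathsf{roots}_z$ from Theorem~\ref{thm:intro:SZ in theory} to each $\sigma_j$ to recover $\va_j$, and outputs $(\va_1,\ldots,\va_h)$. Every bad tuple is hit by $g$: given bad $H$, pick the witnessing $z$, note $C_z \ne 0$, pick any non-root $\vb$ (existence needs a witness---see the obstacle paragraph), and observe that each $\va_j$ is a root of $C_z$, hence lies in $\mathsf{roots}_z$, hence is $f(\vb)(\sigma_j)$ for some code $\sigma_j$ by surjectivity. So $g$ maps a set of size at most $2^m \cd q^n \cd (d n q^{n-1})^h$ onto (a superset of) the bad tuples. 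Choosing $h$ so that $2^m \cd q^n \cd (d n q^{n-1})^h < \tfrac12 q^{nh}$, i.e. $h \geq \frac{m + n\log q}{\log q - \log(dnq^{n-1})} = \frac{m + n\log q}{\log(q/(dn))}$, which is $O(m + n\log n)$ by the choice of $q$, the domain of $g$ has size less than half of $q^{nh}$. Then $\dWPHP(\PV)$ applied to $g$ (after padding the domain up to exactly half the codomain) yields a tuple $H \in ([q]^n)^h$ outside the range of $g$; in particular $H$ is not bad, i.e. $H$ is a hitting set for $\scriptC$ over $[q]$ of the desired polynomial size.

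The main obstacle I expect is the existence, inside the theory, of a non-root $\vb$ of a non-zero circuit $C_z$, which is needed to invoke Theorem~\ref{thm:intro:SZ in theory}. Theorem~\ref{thm:intro:SZ in theory} is stated conditionally: \emph{given} a non-root $\va$ it produces the surjection. But ``$C_z$ is non-zero'' in the relevant sense here should mean ``$C_z$ has a non-root in $\Z^n$'' (or we take this as the definition of non-zero for the class), and then a non-root over $\Z$ can be found by evaluating $C_z$ on integer points; in fact one shows---again by Schwartz-Zippel---that a non-root must occur among the first $d n + 1$ values along some axis, so a small non-root in $[dn+1]^n$ exists and can be found by a $\PV$-function, or else can simply be folded into the $\dWPHP$ machinery as an extra coordinate of the input to $g$ ranging over $[R]^n$ for $R = \poly(n)$, which is still exponential and so again must be supplied, not quantified. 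The clean resolution, matching the informal statement, is to carry the non-root $\vb$ as part of the input coordinate of $g$ ranging over $[q]^n$ (cost $q^n$, already accounted for above), and to prove in $\Sonetwo$ the implication ``$H$ bad $\Rightarrow$ $\exists z\, \exists \vb\ (C_z(\vb)\ne 0 \wedge \forall j\ C_z(\va_j)=0)$'': the $\exists \vb$ here is genuinely a bounded existential whose witness is part of what makes $H$ bad, so one must be slightly careful that ``bad'' is stated with this witness built in, or equivalently that the definition of ``hitting set for $\scriptC$'' already restricts attention to circuits that have a non-root. A secondary, more routine obstacle is checking that all the arithmetic with the parameters $q, h, m, d$ can be done by $\Sonetwo$ with the required bit-complexity bounds (the bound $h = \poly(n,m)$ and $q^{nh} = 2^{\poly(n,m)}$ are fine, since $\dWPHP(\PV)$ is being applied to a function with inputs and outputs of polynomial bit-length), and that the composed function $g$ is genuinely a $\PV$-function---this follows since $f$ is $\PV$ by Theorem~\ref{thm:intro:SZ in theory} and $g$ just iterates it $h = \poly(n,m)$ times and reads off $E$.
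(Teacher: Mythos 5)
Your compression map $g$, the bound $2^m \cd q^n \cd (dnq^{n-1})^h < \tfrac12 q^{nh}$, and the final application of $\dWPHP$ are exactly the paper's argument. But the ``main obstacle'' you identify is not a side issue, and your proposed resolution of it leaves a genuine gap. The theorem, as the paper states it (Definition~\ref{def:hit}), quantifies the non-root $\vb$ over \emph{all} of $\formalZ^n$: $H$ must hit every circuit $C_z$ that has \emph{some} non-root over the integers. Your $g$, however, has its non-root coordinate ranging only over $[q]^n$, since $\formalZ^n$ is unbounded. So the claim ``$H$ bad $\Rightarrow$ $H \in \Img(g)$'' requires the implication: if $C_z$ has a non-root $\vb \in \formalZ^n$ at all, then it has one in $[q]^n$. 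This is not available for free. Your first suggestion --- that a small non-root ``can be found by a $\PV$-function'' by scanning $[dn+1]^n$ --- is false: that set has $(dn+1)^n$ elements, which is exponential, and no polynomial-time search is possible; indeed, the very \emph{existence} of a non-root in $[q]^n$ is what the Schwartz-Zippel argument is needed to establish, it is not a given. Your second suggestion --- to redefine ``bad'' so that the witness $\vb$ is itself required to be in $[q]^n$ --- is an actual change to the theorem being proved: it replaces ``$C_z$ is non-zero'' with ``$C_z$ has a non-root in $[q]^n$'', and without a separate lemma these are not provably equivalent in the theory.

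The paper closes this gap with a separate lemma (Lemma~\ref{lem:smallwit}, ``Small Witnesses Exist''), which is itself a first, independent invocation of $\dWPHP$: given any non-root $\va \in \formalZ^n$ of $C_z$, the Schwartz-Zippel surjection $\decode_{C_z,\va} : C_{n,d,q} \to Z_{C_z,q}$ has domain of size $dnq^{n-1}$ and co-domain $S_q^n$ of size $q^n$, so for $q \geq 2dn$, $\dWPHP$ gives a point of $S_q^n$ outside its range, i.e.\ a non-root of $C_z$ inside $S_q^n$. With this in hand, the contrapositive argument for Lemma~\ref{lem:outsiderange} may assume, without loss of generality, that the witnessing $\vb$ lies in $S_q^n$, and then your $g$ surjects onto the bad tuples exactly as you describe. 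The point to internalize is that the whole proof therefore uses $\dWPHP$ \emph{twice}, once to localize the non-root and once for the hitting set, and the first use cannot be replaced by a poly-time search or a quiet redefinition. One smaller remark: your initial choice $q \geq 2dn\cd 2^m$ is not $\poly(n)$ when $m = \poly(n)$; the paper keeps $q = 2dn$ (independent of $m$) and absorbs the $2^m$ factor into the hitting-set size $r$, which your own later computation of $h$ in fact does correctly --- your first sentence about $q$ is just inconsistent with the rest of your parameter choices.
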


The argument in the theory makes two uses of the axiom~$\dWPHP$ and
is roughly as follows. 
We begin by showing that, if $q$ is sufficiently large but polynomial, then
a non-zero polynomial with~$n$ variables and polynomial degree always has 
non-root~$\va$ in~$[q]^n$. 
To see this, recall the function $f(\av):{\sf codes}\to{\sf roots}$ 
from \Cref{thm:intro:SZ in theory}, that given a polynomial~$P$ and a  non-root~$\va$ of~$P$, 
surjectively maps all codes of roots to the roots of $P$. 
Note that the set~${\sf roots}$ is a subset of~$[q]^n$, which has size~$q^n$, and
recall that the set~${\sf codes}$ has size at most~$d \cd n \cd q^{n-1}$, where~$d$
is the degree of~$P$. Thus, 
when~$q \geq 2dn$, the~$\dWPHP$ axiom applies to show that there exists
a point $\va_0$ in $[q]^n$ that is not in the range~${\sf root}$ of~$f(\va)$. 
This~$\va_0$ is
thus a non-root of $P$ in the set $[q]^n$, like we wanted.

Next we show how to use this fact to get a hitting set with a second application 
of the~$\dWPHP$ axiom.
Let $\scriptC$ be a class of algebraic circuits with~$n$ variables, 
syntactic degree at most~$d$, and size at most~$s$. 
Consider the function
\begin{equation}
\begin{array}{llllll}
g & : & \scriptC \times [q]^n\times \textsf{codes}^r & \to & \textsf{roots}^r 
 \\
& & (P,\av,\cv_1,\ldots,\cv_r) & \mapsto &
(f(\av)(\cv_1),
\ldots,f(\av)(\cv_r)), 
\end{array}
\label{eqn:intro:defofg}
% \label{eqn:intro:specification}
\end{equation}
where~$\cv_1,\ldots,\cv_r$ are candidate
codes, each from the code-set \textsf{codes} of size~$n \cd d \cd q^{n-1}$, 
% for roots of~$P$ in~$[q]^n$ 
and~$\va$ is a potential non-root of~$P$ in $[q]^n$.
The parameter~$r$ should be sufficiently big, but polynomial. 
Then, it follows by construction and the fact proved in the previous paragraph
that \emph{any point outside the range 
of~$g$ is a hitting set}, since it will have a non-root for every algebraic 
circuit in~$\scriptC$. To find the point outside the range of~$g$ 
we invoke the~\dWPHP\ axiom, using again the assumption that $q \geq 2dn$. 

One immediate consequence of Theorem~\ref{lem:intro:hittingset} is that
the theory $\Sonetwo+\dWPHP(\PV)$ proves that the problem of verifying polynomial
identities~$\PIT$ can be solved by polynomial-size Boolean circuits, so 
is in $\Ppoly$. We read this as adding evidence to the claim
that the theory is sufficiently powerful to prove most contemporary results 
in complexity theory.
In particular, it adds interest to the question of proving
that the major lower bound conjectures of computational complexity are consistent 
with~$\Sonetwo+\dWPHP(\PV)$ and stronger theories; see~\cite{KrajicekOliveira2016,CarmosinoKabanetsKolokolovaOliveira2021,AtseriasBussMueller2023}
for more on this line of work.

%%%%%%%%%%%%%%%%%%%%%%%%%%%%%%%%%%%%%%%%%%%%%%%%%%%%%%
\subsubsection{Contribution to Reverse Mathematics of Complexity Theory}
\label{sec:intro:Contribution to Reverse math project}
%%%%%%%%%%%%%%%%%%%%%%%%%%%%%%%%%%%%%%%%%%%%%%%%%%%%%%
%

The fact that the dual weak pigeonhole principle suffices to prove the
existence of small hitting sets raises a natural question: Is it also
necessary?  A positive answer would provide a \emph{combinatorial}
characterization of the \emph{algebraic} statement that small hitting
sets exist. It would also shed light on the role or the necessity 
of the probabilistic method in the usual proof 
of this existential statement. We show
how to achieve a version of these two goals.

We define a formal scheme of \emph{hitting sets axioms} called~$\HS(\PV)$.
We follow two provisos.
First, in view of the generality
of Theorem~\ref{lem:intro:hittingset}, we define the axiom scheme to 
contain one axiom for each definable class $\scriptC$
of algebraic circuits;
the axiom states
that each slice~$\scriptC_n$, consisting of 
the circuits of $\scriptC$ with~$n$ variables,
has small hitting sets. 
Second, in the definition of the axiom for~$\scriptC$, we need to
decide whether to let it state the existence of 
hitting sets of unspecified but polynomial size, 
or to let it state the existence of hitting sets of some 
specified polynomial size.
The bound established in Theorem~\ref{lem:intro:hittingset} 
is actually of the form~$\poly(m,n)$ where $m$ is the logarithm  
of the number of circuits in the $n$-th slice, and $\poly(m,n)$
refers to a fixed polynomial of $m$ and $n$. This dependence on $m$
is common in most proofs of existence by the union bound. While the
claim that hitting sets of any possibly larger but unspecified
size exist would of course be also true, it turns out that asking the 
axiom to provide a hitting set of some fixed polynomial bound seems 
crucial in the proof 
of necessity of $\dWPHP(\PV)$ that we are after. We chose the latter
because this is what is sufficient, and it is still natural.

These two provisos motivate the following definition (informal; 
see~Definition~\ref{def:hspv}):

\begin{description}
\item[\HS(\PV):] The set of \emph{hitting set} axioms $\HS(g)$, for
  every $g\in\PV$. This states that if $g$ defines a class~$\scriptC$
  with its $n$-th slice having $2^m$ algebraic circuits with $n$ variables, 
  polynomial degree, and polynomial size, then there is a hitting 
  set for $\scriptC$ over $[q]$ of size $\poly(m,n)$, with $q =
  \poly(n)$. 
\end{description}

\noindent With the right definitions in place we can state the theorem that characterizes
the proof-theoretic strength of the existence of small hitting sets:

\begin{theorem}[Reverse Mathematics of Hitting Sets; informal, see Theorem~\ref{thm:equivalence}]
  \label{thm:intro:equivalence}
  The axioms schemes $\dWPHP(\PV)$ and $\HS(\PV)$ are provably equivalent over the theory $\Sonetwo$.
\end{theorem}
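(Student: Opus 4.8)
The plan is to prove the two implications separately, working inside $\Sonetwo$ throughout. The forward direction $\dWPHP(\PV) \vdash \HS(\PV)$ is essentially already established by Theorem~\ref{lem:intro:hittingset}: given a $\PV$-function $g$ defining a class $\scriptC$ whose $n$-th slice has $2^m$ circuits of polynomial degree and polynomial size, one invokes the encoding function $f$ from Theorem~\ref{thm:intro:SZ in theory} and the composite map of equation~(\ref{eqn:intro:defofg}), and applies $\dWPHP$ twice: first to produce, for each circuit, a non-root in $[q]^n$ with $q \geq 2dn$, and then to produce a point outside the range of $g$, which is a hitting set of the required polynomial size $\poly(m,n)$. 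The only thing to check carefully here is that the whole construction, including the definability of $\scriptC_n$ from $g$ and the size bound $\poly(m,n)$ matching the form demanded by the axiom $\HS(g)$, goes through uniformly in $\Sonetwo$; this is bookkeeping rather than a genuine obstacle.

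The substantive direction is the reversal $\HS(\PV) \vdash \dWPHP(\PV)$ over $\Sonetwo$. First I would fix an arbitrary $\PV$-function $h$ and suppose, for contradiction, that $h$ maps $[N]$ surjectively onto $[2N]$ (or onto some set of size $\geq 2N$); the goal is to derive a contradiction from an appropriate instance of $\HS(\PV)$. The idea is to manufacture, from $h$, a $\PV$-definable class $\scriptC$ of algebraic circuits whose slices are so rich that they admit no small hitting set unless $\dWPHP(h)$ holds. Concretely, one wants to encode, for each subset-like object expressible via $h$, a polynomial whose set of non-roots in $[q]^n$ is ``controlled'' by a potential counterexample to $\dWPHP(h)$: the class should contain, for every candidate point $\va \in [q]^n$, a non-zero circuit $P_{\va}$ that vanishes on $\va$ (so a genuine hitting set must avoid being ``fooled''), arranged so that a hitting set of size $\poly(m,n)$ for $\scriptC_n$ would yield a set of fewer than $2N$ points covering the range of $h$, i.e.\ a point in $[2N]$ not hit — contradicting surjectivity. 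The natural device is to use low-degree (indeed multilinear or degree-one) indicator-type polynomials: for a point $\vb \in \{0,1\}^n \subseteq [q]^n$, the polynomial $\prod_{i : b_i = 1} x_i \cdot \prod_{i : b_i = 0}(1 - x_i)$ is non-zero, has individual degree $1$, and has $\vb$ as essentially its only non-root in $\{0,1\}^n$; taking $\scriptC_n$ to be the family of such polynomials indexed (via $h$) by elements of $[2N]$ forces any hitting set to contain a point ``near'' every element in the range of $h$, so a small hitting set compresses the range, contradicting $\dWPHP(h)$ failing. One then tunes $n$, $q$, and the indexing so that ``$2^m$ circuits in slice $n$'' matches $2N$ and ``$\poly(m,n)$'' is genuinely smaller than $2N$.

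The main obstacle I expect is making the reduction go through \emph{uniformly and provably in $\Sonetwo$ alone}, without smuggling in counting that would itself need $\dWPHP$. In particular: (i) the class $\scriptC$ must be $\PV$-definable from $h$ in the exact format required by the definition of $\HS(g)$ (Definition~\ref{def:hspv}), including having the stated degree and size bounds and the parameter $m$ correctly related to $N$; (ii) one must argue, using only the hitting-set axiom and basic properties of the indicator polynomials, that a $\poly(m,n)$-size hitting set literally produces a point of $[2N]$ outside the range of $h$ — this step must be a direct $\PV$-computation, turning the hitting set and $h$ into the missing element, so that no appeal to approximate counting is needed; and (iii) one must ensure the polynomial gap between $\poly(m,n)$ and $2N \approx 2^m$ is real, which constrains how many circuits to pack into each slice (one wants the slice index $n$ small relative to $m$, so $\poly(m,n)$ stays well below $2^m$). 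Handling the bit-complexity of all these circuits and the uniform $\Sonetwo$-formalisation — in the same careful spirit as the FTA formalisation mentioned in Lemma~\ref{lem:univariate} — is where the real work lies; the combinatorial heart of the argument is short.
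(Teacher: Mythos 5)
Your forward direction is correct and matches the paper (Theorem~\ref{lem:hittingset} does exactly this, applying $\dWPHP$ once to get small witnesses and once to get hitting sets).

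The reverse direction, however, has a genuine gap, and the indicator-polynomial construction does not work. You propose to use, for $\vb \in \{0,1\}^n$, the degree-one circuit $\prod_{i:b_i=1} x_i \cdot \prod_{i:b_i=0}(1-x_i)$, so that $\vb$ is ``essentially its only non-root in $\{0,1\}^n$''. But the hitting set axiom $\HS$ quantifies over points in $S_q^n$ and requires $q \geq 2dn$, hence $q \geq 2n \geq 2$; it does not give you hitting sets over $\{0,1\}^n$. Over $S_q^n$ with $q > 2$, your indicator polynomials have an enormous set of non-roots (e.g.\ the single point $(2,2,\ldots,2)$ already hits every one of them), so a size-$1$ hitting set suffices and no information whatsoever about the range of $h$ is extracted. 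The construction forces nothing. A second, subtler issue is that your sketch aims to show ``if $h$ is surjective then no small hitting set exists'' by a cardinality mismatch, but this is precisely the kind of counting argument that is unavailable in $\Sonetwo$ alone; the reversal must produce a concrete element outside the range of $h$ from the hitting set by a direct $\PV$/$\FPtoNP$ computation, not by an impossibility-of-covering argument.

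The paper's construction (Section~\ref{sec:class}) inverts the role you had in mind: rather than trying to force the hitting set to contain range elements of $h$, it builds, for each seed $x \in \{0,1\}^m$, a circuit $A_{e,x}$ whose roots in $\formalZ^n$ are \emph{exactly the $r$-tuple of points encoded by $h_e(x)$}, read off as a candidate hitting set; concretely
\begin{equation*}
A_{e,x}(z_1,\ldots,z_n) = \prod_{i \in [r]} \sum_{j \in [n]} \Bigl( z_j - \sum_{k \in [|q|]} h_e(x;i,j,k)\cdot 2^{k-1} \Bigr)^2.
\end{equation*}
This circuit is provably non-zero (evaluate at $(2q,\ldots,2q)$), and it vanishes on every coordinate of the candidate tuple. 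If $H$ is the hitting set guaranteed by $\HS$ and $y$ is its binary encoding, then $y$ cannot equal $h_e(x)$ for any $x$, else $A_{e,x}$ would be a non-zero member of the class vanishing on all of $H$. So the hitting set \emph{itself}, suitably encoded, is the avoidance witness. Note further that for this to make sense the co-domain of $h_e$ must be long enough to hold the $\approx rn|q|$ bits describing $H$, which is why the amplification step (Appendix~\ref{app:normalizationandamplification}, Lemma~\ref{lem:reduction}) reducing $\dWPHP^a_b$ to a map $\{0,1\}^m \to \{0,1\}^{m^3}$ is indispensable; your proposal makes no provision for it, and this is also the source of the $\PtoNP$ (rather than $\P$) reduction in Theorem~\ref{thm:intro:completeness}. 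Finally, one must pick the slice parameters so that the \emph{description size} $m$ of $A_{e,x}$ as a member of $\scriptC$ is much smaller than its representation size $s$, i.e.\ the class is sparse; this is what makes the hypothesis $r > m + n|q|$ of $\HS$ satisfiable, and it is a point your plan does not address.
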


As remarked earlier, the sufficiency claim follows from
Theorem~\ref{lem:intro:hittingset}.  To prove the necessity we show
how to use a hitting set to find a point outside the range of any
given polynomial-time function~$f : [N] \to [2N]$. To do this we
design a class~$\scriptC_f$ of~$N$ many low-degree algebraic circuits 
each vanishing on the appropriate representation of a point in the 
image of~$f$. We do so in such a way that a hitting set for~$\scriptC_f$
will correspond to an element in~$[2N]\setminus\Img(f)$,
completing the proof. To make this actually work we need to
use a technique known as \emph{amplification}, which goes
back to the work of Paris-Wilkie-Woods~\cite{PWW88}. 
The same kind of technique was discovered even earlier, in cryptography, 
to build pseudorandom number generators 
from hardcore bits; see the work of Blum-Micali~\cite{BlumMicali1982}.
The details of this argument can be found in
Section~\ref{sec:complexity} and Appendix~\ref{app:normalizationandamplification}.

\subsubsection{Application to Computational Complexity and Range Avoidance Problem}
\label{sec:intro:Applications to computational complexity and Range Avoidance Problem}

The proof-sketch we gave for Theorem~\ref{thm:intro:equivalence} reveals a
two-way connection between the computational problem of finding
hitting sets and the so-called \emph{Range Avoidance Problem}, or~$\AVOID$. 
The latter problem 
asks to find a point outside the range of a given function~$f : [N] \to [2N]$. 
In recent years, $\AVOID$ has
been studied with competing names. It was first studied by
Kleinberg-Korten-Mitropolsky-Papadimitriou~\cite{KKMP21}, calling it $1$-$\EMPTY$, 
and later by Korten~\cite{Korten21}, renaming it~$\EMPTY$. Those works
defined it as the canonical complete problem for a complexity class
of total search problems they called~$\APEPP$. Ren-Santhanam-Wang~\cite{RSW22} 
studied it too, calling it $\AVOID$, 
in their range avoidance problem approach to circuit lower bounds. 

Our new coding-based proof of the existence
of hitting sets shows that its associated search problem is in $\APEPP$. 
The proof of Theorem~\ref{thm:intro:equivalence} 
yields its completeness in the class:

\begin{theorem}[Completeness of Finding Hitting Sets; informal, see Theorem~\ref{thm:intro:completeness}]
\label{thm:intro:completeness}
The total search problem that asks to find witnesses for the hitting set axioms of the scheme~$\HS(\PV)$
is $\APEPP$-complete under $\PtoNP$-reductions. 
\end{theorem}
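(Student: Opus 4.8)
\noindent\emph{Proof plan.}
The plan is to establish $\APEPP$-completeness by showing that the search problem for $\HS(\PV)$ is $\PtoNP$-equivalent to the canonical $\APEPP$-complete problem~$\AVOID$ (a.k.a.~$\EMPTY$): given a function $f:[N]\to[2N]$ by a circuit, find a point outside its range. Each of the two reductions will be obtained by making effective, as a reduction between total search problems, one of the two halves of the provable equivalence $\dWPHP(\PV)\equiv_{\Sonetwo}\HS(\PV)$ of Theorem~\ref{thm:equivalence}. Since $\AVOID$ is $\APEPP$-complete under $\PtoNP$-reductions by definition of $\APEPP$ and such reductions compose, this suffices.

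\medskip
\noindent\emph{Membership in $\APEPP$.}
I would obtain this by rereading the proof of Theorem~\ref{lem:intro:hittingset}. Given an instance of the $\HS(\PV)$ search problem --- a $\PV$-machine $g_0$ describing a class $\scriptC$ with $n$ variables, at most $2^m$ circuits, each of polynomial size and polynomial degree $d$ --- form the function $g$ of equation~\eqref{eqn:intro:defofg} with $q:=2nd$ and $r:=m+\lceil n\log_2 q\rceil+1$, both polynomial, extending each $f(\av)$ to a fixed constant function on those inputs where $\av$ is a root of $P$ so that $g$ is total. Because the circuits of $\scriptC$ have polynomial degree, every integer arising when evaluating them on $[q]^n$, hence when computing the Schwartz--Zippel encoding maps $f(\av)$ via the root-finding procedure behind the half-FTA Lemma~\ref{lem:univariate}, has polynomially many bits; so $g$ is polynomial-time computable and the reduction outputs a polynomial-size circuit for it. A routine size computation confirms $|\Dom(g)|\le 2^m q^n(ndq^{n-1})^r\le\frac{1}{2}q^{nr}$, whereas the co-domain $([q]^n)^r$ has $q^{nr}$ elements, so after the standard padding this is a legitimate $\AVOID$ instance. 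Moreover, any point $\bv=(\bv_1,\dots,\bv_r)$ outside the range of $g$ is such that $\{\bv_1,\dots,\bv_r\}$ is a hitting set for $\scriptC$: if some nonzero $P\in\scriptC$ vanished on all the $\bv_j$, then $P$, being a nonzero formal polynomial over the integers, has a non-root $\av\in[q]^n$ --- it has a non-root in $\formalZ^n$ and, by the Schwartz--Zippel bound of Theorem~\ref{thm:intro:SZ in theory} applied with $q\ge 2nd$, at most $dnq^{n-1}<q^n$ roots in $[q]^n$ --- and for that $\av$ the surjection $f(\av)$ of Theorem~\ref{thm:intro:SZ in theory} hits each $\bv_j$, putting $\bv\in\Img(g)$. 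Composing the $\AVOID$ answer with the projection $\bv\mapsto\{\bv_1,\dots,\bv_r\}$ gives a many-one $\FP$-reduction, in particular a $\PtoNP$-reduction, from $\HS(\PV)$ search to $\AVOID$.

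\medskip
\noindent\emph{$\APEPP$-hardness.}
Conversely I would reduce $\AVOID$ to the $\HS(\PV)$ search problem. Given $f:[N]\to[2N]$: first \emph{amplify}, using the Paris--Wilkie--Woods technique~\cite{PWW88} (cf.\ also Blum--Micali~\cite{BlumMicali1982}) carried out in Appendix~\ref{app:normalizationandamplification}, to replace $f$ by a function $F$ whose range is a much smaller fraction of its co-domain, while preserving the property that any point outside $\Img(F)$ can be converted, by a $\PtoNP$-computation, into a point outside $\Img(f)$ --- the NP queries serving to invert partial applications of $f$. Next, \emph{algebraize}: from $F$ build a polynomial-time definable class $\scriptC_f$ of $N$ integer algebraic circuits in polynomially many variables, each of polynomial size and polynomial degree, the $x$-th one vanishing on a chosen representation of $F(x)$ (and on enough further points that being in a hitting set is a genuine restriction), arranged so that (i) the $\HS(\PV)$ axiom applies to $\scriptC_f$ and so the search problem returns a hitting set $H$ of the prescribed polynomial size $\poly(m,n)$, which is strictly smaller than $|\Img(F)|$, and (ii) by the vanishing pattern no such $H$ can consist solely of representations of points of $\Img(F)$, so $H$ must contain a point decoding to an element of the co-domain of $F$ outside $\Img(F)$. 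Finally, \emph{decode}: use the NP oracle to test each point of $H$ for being a representation of some $F(x)$, output one that fails, and run it back through the amplification to get a point outside $\Img(f)$. The composition is a $\PtoNP$-reduction from $\AVOID$ to the $\HS(\PV)$ search problem, which together with membership gives $\APEPP$-completeness under $\PtoNP$-reductions.

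\medskip
\noindent\emph{Main obstacle.}
The hard part will be step two of the hardness direction: arranging the circuits of $\scriptC_f$ to be of \emph{polynomial degree and size} while their common vanishing pattern still forces a polynomial-size hitting set to expose a non-image point. A circuit that pins down a single image point vanishes only on a co-small set and imposes no real constraint on a hitting set of two or more points, whereas a circuit whose non-roots are exactly the representations of the non-image points would need degree comparable to $|\Img(f)|=N$, which is exponential in the instance size. The amplification step is precisely what dissolves this tension --- after boosting the gap, $\Img(F)$ is a vanishing fraction of the ambient cube and the relevant vanishing varieties can be taken of polynomial degree --- which is why the reversal needs the amplification technique rather than just Schwartz--Zippel plus counting. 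It is also worth noting that both reductions are genuinely $\PtoNP$ rather than $\FP$ reductions: locating the avoiding point inside a hitting set, and inverting the amplification, are both NP-type tasks, reflecting that $\AVOID$ is itself a $\TFSigmatwoP$ search problem.
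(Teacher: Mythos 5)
Your overall plan — make the two halves of the provable equivalence $\dWPHP(\PV)\equiv_{\Sonetwo}\HS(\PV)$ (Theorem~\ref{thm:equivalence}) effective as reductions to and from $\AVOID/\EMPTY$, the former a $\P/\P$-reduction and the latter a $\P/\PtoNP$-reduction — is exactly the paper's strategy, and your membership direction is correct and matches the paper (it is Lemmas~\ref{lem:smallwit}--\ref{lem:outsiderange} and Theorem~\ref{lem:hittingset} recast as a reduction; your size arithmetic $r\geq m+1+n|q|$ agrees with the bound in the statement of the axiom).

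However, your hardness direction has a genuine gap in the ``algebraize'' and ``decode'' steps. You describe the class $\scriptC_f$ as circuits ``the $x$-th one vanishing on a chosen representation of $F(x)$,'' and then you expect that ``$H$ must contain a point decoding to an element of the co-domain of $F$ outside $\Img(F)$,'' and you propose to ``use the NP oracle to test each point of $H$ for being a representation of some $F(x)$, output one that fails.'' That is not how the reversal works, and it cannot be made to work: hitting each $A_x$ only requires a non-root of $A_x$, which says nothing about whether the individual elements of $H$ are images under $F$. Your ``main obstacle'' paragraph correctly senses this tension but resolves it the wrong way — having $\Img(F)$ be a small fraction of the cube does not force any single $\vh_i$ to avoid $\Img(F)$. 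The paper's key idea, which you are missing, is a genuine \emph{compression} argument: the circuit $A_{e,x}$ (equation~\eqref{eqn:circuit}) is built so that it vanishes on \emph{every one of the $r$ points encoded in the string $h_e(x)$}, i.e.\ each $A_{e,x}$ ``attacks'' the entire candidate hitting set that $h_e(x)$ encodes. Consequently, if $H$ is a genuine hitting set for $\scriptC_f$, its binary encoding $y\in\{0,1\}^{m^3}$ (as in~\eqref{eqn:property}) cannot lie in the range of $h_e$ — otherwise the corresponding $A_{e,x}$ would vanish on all of $H$ yet be nonzero at $(2q,\ldots,2q)$. The avoiding point for $h_e$ is therefore the \emph{encoding of $H$ as a whole}, not some individual member of $H$; and the $\NP$-oracle is used only in inverting the PWW amplification (Lemma~\ref{lem:reduction}) on $y$, not to scan $H$ for non-image points. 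This is also why $k=3$ is chosen in the amplification: $m^3$ must be large enough to encode an entire $r$-tuple in $(S_q^n)^r$, which needs $rn|q|$ bits. Without this compression step, you have no way to convert a hitting set into an avoiding point, so the reduction as you describe it would not go through.
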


\noindent 
Perhaps not too surprisingly, the proof of Theorem~\ref{thm:intro:completeness} is almost identical to the 
proof of Theorem~\ref{thm:intro:equivalence}.
Indeed, the necessity for the $\NP$-oracle in the reductions 
comes (only) from the use of the 
amplification technique in the proof, as in earlier
uses of this method; cf.~\cite{Korten21}.

A handful of complete problems for $\APEPP$ were known before, and some required $\PtoNP$-reductions 
too (cf.~\cite{KKMP21,Korten21}). But, to our knowledge, none of these complete problems 
related to the problem of constructing hitting sets for algebraic
circuits. Here we used the new constructive 
proof of the Schwartz-Zippel Lemma to find an example of this type.

%%%%%%%%%%%%%%%%%%%%%%%%%%%%%%%%%%%%%%%%
%%%%%%%%%%%%%%%%%%%%%%%%%%%%%%%%%%%%%%%%-
\section{Preliminaries}
%%%%%%%%%%%%%%%%%%%%%%%%%%%%%%%%%%%%%%%%
%%%%%%%%%%%%%%%%%%%%%%%%%%%%%%%%%%%%%%%%

\subsection{Theories of Bounded Arithmetic}

Here we define the formal theories we work with. Our main results are 
proved in the theories~$\Sonetwo$ and its 
extension~$\Sonetwo+\dWPHP(\PV)$. 
These are single-sort theories for arithmetic, namely,
number theory over \N. The theory $\Sonetwo$ is the first
in Buss' hierarchy of theories of bounded arithmetic,
and defines precisely the polynomial-time computable functions~\cite{Bus86}.
The theory $\Sonetwo+\dWPHP(\PV)$ denotes its extension with
the Dual Weak Pigeonhole Principle (dWPHP) for polynomial-time functions (\PV),
which allows to reason about probabilities and do approximate counting,
based on the work of Je{\v{r}}{\'a}bek\ \cite{Jer07}. 
For a rigorous yet concise survey on bounded arithmetic the reader is referred 
to Buss~\cite{Bus97} (for a full treatment of the area see \cite{CN10,Kra95}; 
as well as a contemporary survey by Oliveira~\cite{Oli24}).  
We refer the reader to Tzameret and Cook \cite{TC21} for treatment of 
algebraic circuits over the integers inside theories of bounded arithmetic. 
Here we use a similar treatment of algebraic circuits, 
discussed in the next section.

\para{The language \LA.} \label{sec:languageLA}
The language \LA\ contains the 
function symbols~$+,\times,\floor{\cd/2},|\cd|,\#, 0, 1$, and
the relation symbol~$\leq$. A finite 
collection of axioms called \BASIC\ gives these
symbols their intended meaning:~$0$ and~$1$ are
constant symbols for the numbers~$0$ and~$1$, the symbols~$+$ 
and~$\times$ denote addition and multiplication of numbers, and
the unary functions~$\floor{\cd/2}$ and~$|\cd|$ denote
\emph{rounded halving} and \emph{binary length}, respectively. 
Precisely, if $x$ is a natural number, then $\floor{x/2}$
is the largest natural number bounded by $x/2$, and $|x|$
is the length of $x$ written in binary notation,
except for~$x=0$ where~$|x|$ is defined as~$0$ by convention.
I.e.,~$|x| = \ceil{\log_2(x+1)}$.
We say that~$|x|$ is the \emph{length of $x$}. 

The only
non-self-explanatory symbol is the ``smash'' binary function
symbol~$\#$ introduced by Buss \cite{Bus86}. The
intention of $x\# y$ is $2$ raised to the power of \emph{the
length} of $x$ times \emph{the length} of $y$, namely
$2^{|x|\cd|y|}$. This function allows us to work with 
strings as follows: assume we wish to talk about a binary
string $S$ of length~$n$ in the
theory. 
Since the theory only talks about numbers, we represent
the string as the number~$x$ between~$2^{n-1}$ and~$2^n-1$
whose unique binary representation of length~$n$ is~$S$. 
Note that the length~$n$ of $S$ is precisely~$|x|$.
To consider 
a polynomial growth rate
in the length of $S$, we need to be able to express strings of
length $n^c$, for a constant $c$. Strings of length at 
most~$n^c$ are
encoded, as before, by numbers of magnitude less than
\begin{equation}
2^{n^c} = 2^{|x|^c} = 
2^{\tiny\underbrace{|x|{\cd}|x| \cdots |x|}\atop c{\text{ times}}} = 
x\#(x\# \cdots \#(x\#x)) \text{ [$c$ times]}. 
\end{equation}
We write~$n\in\Log$ to mean that~$n$ serves 
as the length of some string, namely,~$n=|x|$ for 
some~$x$. Note that in bounded arithmetic theories
where exponentiation may not be total, the formula
$\forall n\exists x\ |x|{=}n$ is not provable.
We use~$\forall n{\in}\Log\ \psi$ and $\exists n{\in}\Log\ \psi$
as shorthand notations for
the formulas~$\forall x\forall n\ (|x|{=}n \rightarrow \psi)$
and~$\exists x\exists n\ (|x|{=}n \wedge \psi)$.

\para{The Theory $\Sonetwo(\PV)$.}

We need to define several polynomial-time computable 
functions in the theory and show
that the theory proves some basic properties of these functions.
We work in the extension of the language $\LA$ with 
the language $\PV$ which has a new symbol for every polynomial-time
computable function (to be precise, it has a new symbol for every function defined
by Cobham's bounded recursion on notation \cite{Cob65}). 
The theory will be Buss's~$\Sonetwo$ extended with the axioms defining the added
$\PV$-symbols (\cite{Bus86}; see \cite{Bus97}). This theory is sometimes
denoted $\Sonetwo(\PV)$ but we will use the shorter notation 
$\Sonetwo$ if this does not lead to confusion as
$\Sonetwo(\PV)$ is a conservative extension
of $\Sonetwo$ for statements in the \LA-language.
For the precise definition of $\Sonetwo(\PV)$ see Kraj\'{i}\v{c}ek \cite[page~78]{Kra95}.

Objects like strings, circuits, etc.~in the theory are coded by 
numbers~(cf.~\cite{Bus86,Kra95}). In this sense, we may refer to an object in the 
theory, meaning formally its code.
The inputs and outputs of $\PV$-functions are natural numbers.
However, the polynomial-time algorithms/Turing machines 
that compute the functions
as~$\PV$-symbols get their inputs presented in binary notation
as binary strings. The statement that $n$ is an argument that is 
presented to a~$\PV$-function in unary notation means that it
is expected that $n \in \Log$ (so $2^n$ exists) and that
the argument is actually~$2^n$, so in this case
the binary representation $100\cdots 0$ of~$2^n$ is given in 
the input of the polynomial-time algorithm that computes the 
function.
An alternative convention would be to think of $\PV$-functions 
as getting inputs of two \emph{sorts}: numbers in unary 
notation and numbers in binary notation.
This is the approach taken in the \emph{two-sorted theories}
of bounded arithmetic~\cite{CN10}.

\para{Definable sets and set-bounded quantification.} \label{sec:definable}
Let $\Phi$ be a class of formulas in the language~$\LA$.
A set of natural numbers $S \subseteq \naturals$ is called $\Phi$-definable (with parameters)
if there exists a formula $\varphi(x; y)$ in $\Phi$, with all free variables indicated,
such that for some $c \in \naturals$ we have
$S = \{ a \in \naturals : \naturals \models \varphi(a; c) \}$.
This is definability in the \emph{meta-theory}, or 
\emph{in the standard model}. In a formal theory~$T$ such as $\Sonetwo$, a 
formula~$\varphi(x;y)$ defines a set in every model $M$ of the theory for every
choice of the parameter $c \in M$ in the model:
\begin{equation}
[\varphi(x; c)]^M := \{ a \in M : M \models \varphi(a; c) \}. \label{eqn:definable}
\end{equation}
When the model or the parameter are not specified, a \emph{definable set} is simply given
by the formula $\varphi(x;y)$ that \emph{defines} it (parametrically in $y$),
and we write $[\varphi(x;y)]$.

For example, if we take $\varphi(x;y) := 1{\leq}x \wedge x{\leq}y$, then 
for every standard $c \in \naturals$ we have
\begin{equation*}
[1{\leq}x\wedge x{\leq}c]^{\naturals} = [c] := \{1,2,\ldots,c\}.
\end{equation*} 
When this is not confusing, we use the same 
notation $[c] := \{1,2,\ldots,c\}$ in the theory instead of the more accurate $[1{\leq}x\wedge x{\leq}c]$, 
even when $c$ is a free variable. Also, we use $x{\in}[c]$ 
as short-hand notation for the formula~$1{\leq}x \wedge x{\leq}c$. More generally, if we
declare that a formula~$\varphi(x;c)$ \emph{defines a set which we denote~$S_c$}, then
we are entitled to use the following short-hand notations:
\begin{equation}
\begin{array}{lllll}
x{\in}S_c & \equiv & \varphi(x;c) \\
\exists x{\in}S_c\ \psi & \equiv & \exists x\ (\varphi(x;c) \wedge \psi) \\
\forall x{\in}S_c\ \psi & \equiv & \forall x\ (\varphi(x;c) \rightarrow \psi).
\end{array}
\label{eqn:shorthand}
\end{equation}
When $S_c$ is a bounded set, i.e., there exists a term $t(c)$ such that
every~$x \in S_c$ satisfies $x \leq t(c)$ provably in the theory, then
the quantifiers in~\eqref{eqn:shorthand} can be bounded: $\exists x{\leq}t(c)\ (\varphi(x;c) \wedge \psi)$ and $\forall x{\leq}t(c)\ (\varphi(x;c) \rightarrow \psi)$.

\para{Dual Weak Pigeonhole Principle.}

  Let $f$ be a $\PV$-symbol. The axiom $\dWPHP(f)$
  is the universal closure of the following formula with free variables $a,b,c$:
  \begin{equation}
  \dWPHP^a_b(f_c) := (b{\geq}2a \wedge a{\geq}1 \rightarrow \exists y{\in}[b]\ \forall x{\in}[a]\ f_c(x)\not=y),
  \end{equation}
  where $f_c(x)$ is notation for $f(x,c)$, 
  thinking of $c$ 
  as a \emph{parameter} for $f$. 
  We write~$\dWPHP(\PV)$ for the axiom-scheme that contains all axioms $\dWPHP(f)$ for
  all $\PV$-symbols $f$.

  The presence of the parameter $c$ in $f_c$ makes the statement of $\dWPHP(f)$ more expressive. 
  For example, the parameter
  may specify the sizes~$a$ and~$b$ of the intended domain and range of a function $f_c : [a] \to [b]$,
  say as $c = \langle a,b\rangle$, for an appropriate pairing function $\langle \cdot,\cdot \rangle$. 
  Another example of the power of parameters is the following.
  The class of $\PV$-functions admits a natural \emph{universal-like} 
  function, that we call $\evalboolean$, which provably in $\Sonetwo$ evaluates
  any $\PV$ function $f$: the theory
  $\Sonetwo$ proves $\forall b\ \forall x{<}b\ \eval(C_f(b),x){=}f(x)$ for a certain 
  natural $\PV$-function $C_f$.
  It follows from this that
  the infinite axiom scheme $\dWPHP(\PV)$ is equivalent, over $\Sonetwo$, to 
  the single axiom $\dWPHP(\eval)$.

  \subsection{Polynomials and Algebraic Circuits}

Let $\mathbb{G}$ be a ring.
Denote by~$\mathbb{G}[\overline x]$ the ring of (commutative)
polynomials with coefficients from~$\mathbb{G}$ and variables (indeterminates)~$\overline x := \{x_1, x_2,\dots\}$. A polynomial is a 
formal linear combination of monomials, whereas a monomial is a product of 
variables. Two polynomials are identical if all their monomials have the same 
coefficients. The (total) degree of a monomial is the sum of all the powers of 
variables in it. The (total) degree of a polynomial is the maximum (total)
degree of a monomial in it. The degree of an individual variable in a monomial 
is its power, and in a polynomial it is its maximum degree 
in the monomials of the polynomial. 
The maximum individual degree of a polynomial is the maximum degree of
its variables.

\paragraph{Algebraic circuits and formulas.}
Algebraic circuits and formulas over the ring $\mathbb{G}$ compute polynomials 
in $\mathbb{G}[\overline x]$ via addition and
multiplication gates, starting from the input variables and constants from the 
ring. In the rest of this paper the ring $\mathbb{G}$ will be fixed
to the integers $\Z$.

An \emph{algebraic circuit} (with parameters) is a directed acyclic graph (DAG) with its nodes labelled.
The sources of the DAG are labelled by the name of an input. 
The internal nodes are labelled
by gates of types $+$ and $\times$. The inputs of the circuit are split into variables and parameters; 
a parameter-free circuit is one without parameter
inputs. The circuit may come with an implicit integer parameter assignment,
in which case the corresponding sources of the DAG are labelled by the corresponding integer. 
We say that the constants are \emph{plugged} into the circuit.
An \emph{algebraic formula} is an algebraic circuit whose DAG is a tree.

\paragraph{Size and syntactic degree.}
The representation size 
  of a circuit is the number of
  bits that are needed to represent the DAG, the
  operation of each internal node of the graph, and the names of the 
  variables and the parameters.
  If the circuit
  comes with implicit integer parameter assignment, then the representation 
  size also includes the 
  number of bits in their binary representations. 

An algebraic circuit $C$ can also be understood as a \emph{straight-line program},
which is a finite sequence $C_0,C_1,\ldots,C_t$ of labelled 
\emph{gates}. Each gate $C_i$
is of one of four types: (1) a variable input gate, which is labelled by 
the name of a variable, or (2) a parameter input gate, which is labelled by 
the name of a parameter and perhaps also with an integer constant 
if the circuit comes with an implicit parameter assignment, 
or (3) an addition gate, which is labelled by the symbol $+$ and 
two integers $j < i$ and $k < i$ that specify the two 
operands in the addition $C_j + C_k$ that is computed at the gate, 
or (4) a multiplication gate, which is labelled by the 
symbol $\times$ and two integers $j < i$ and $k < i$ that specify the two 
operands in the multiplication $C_j \times C_k$ 
that is computed at the gate. The last gate $C_t$ is the \emph{output} of 
the circuit.

The \emph{syntactic degree} $d_i$ 
of the gate~$C_i$ is defined inductively on $i$: if $C_i$ is a variable
gate or a parameter gate, then $d_i = 1$; if $C_i$ is an 
addition gate $C_j + C_k$, then $d_i = \max\{d_j,d_k\}$; and if $C_i$ is a multiplication gate $C_j \times C_k$, then 
$d_i = d_j + d_k$. The \emph{syntactic degree of $C$} is the syntactic degree of its output gate $C_t$. By induction on 
the number of gates in the circuit, it is easy to see 
that the syntactic degree of a circuit with $t$ gates is at most 
$2^t$. For formulas, a better upper bound is $t$. 

A refinement of the concept of syntactic degree is \emph{syntactic individual degree}.
Suppose that~$C$, with straight-line program $C_0,C_1,\ldots,C_t$, has $n$ variables and $m$ parameters.
Let~$u \in \{1,\ldots,n+m\}$ be the name of a variable or a parameter. 
The \emph{syntactic individual degree of gate $C_i$ on $u$}, denoted by $d_{i,u}$, is defined also inductively 
on $i$: If $C_i$ 
is a variable or parameter gate with label $u$, 
then $d_{i,u}=1$; if $C_i$ is a variable or a parameter gate with 
label~$v \not= u$, then $d_{i,u} = 0$; if $C_i$ is an addition gate $C_j + C_k$, 
then~$d_{i,u} = \max\{ d_{j,u}, d_{k,u} \}$; 
if $C_i$ is a multiplication gate $C_j \times C_k$, then $d_{i,u} = d_{j,u} + d_{k,u}$. 
The \emph{syntactic individual degree of $C$ on~$u$} is $d_{t,u}$. Clearly, the syntactic degree 
of~$C$ is bounded by the sum of the individual degrees, 
and each individual degree is bounded by the syntactic degree. What we call syntactic degree 
is sometimes called syntactic \emph{total} 
degree.

% --------------------------------------
\section{\SZ\ Lemma in the Theory}
% --------------------------------------

% ---------------------------------------
\subsection{Notation and formalisations}
% ---------------------------------------

Here we discuss formalisation in the theory and fix some notation.
For every natural number~$n \in \mathbb{N}$ we write $[n] = \{1,\ldots,n\}$.
We fix some standard and efficient encoding for pairs, tuples, and lists of
natural numbers as natural numbers, with its usual properties provable in 
the theory (cf.~\cite{Bus86}). We discussed already in Section~\ref{sec:languageLA}
how binary strings
are encoded in the theory. For $n \in \Log$, we write~$\{0,1\}^n$ for the 
definable set of strings of length $n$, with~$n$ as a parameter.

\paragraph{Ring of integers in the theory.}
Integers are encoded in the theory in the sign-magnitude
representation as pairs~$(b,m)$ where~$b \in \{0,1\}$
is the sign, and~$m$ is the magnitude.
The intention is that the pair~$(b,m) \in \{0,1\} \times \mathbb{N}$ 
encodes the integer $(-1)^b \cdot m$. Note that $0$ has two codes.
The \emph{bit-complexity} of the integer is the length~$|m|$ of 
its magnitude.
When fed as argument into an algorithm operating with strings,
the integer represented by $(b,m)$ is presented as the pair~$(b,m)$ itself,
with $m$ written in binary notation. The set of codes~$(b,m)$ of integers
in the theory is definable by the quantifier-free formula~($b{=}0\vee b{=}1$), 
and is denoted by $\formalZ$. The addition, subtraction, and 
multiplication operations on $\formalZ$ are $\PV$-functions. The 
ordering relation~$\leq$ on $\formalZ$ is a $\PV$-predicate.
The basic properties of these symbols are provable in the usual theories.

\paragraph{Algebraic circuits in the theory.}
Algebraic circuits and formulas over \Z\ are formalised in the theory as 
labelled graphs with the integer constants
that may appear on its 
leaves encoded in the sign-magnitude representation discussed
above. We refer the reader to \cite[Section~3.1.1]{TC21} for more details on 
encoding algebraic circuits over \Z\ in the theory. 
We define some $\PV$-functions that operate with codes of 
algebraic circuits:

\begin{itemize} \itemsep=0pt
\item $\size(C)$: the~$\PV$-function that computes the
number of gates of the algebraic circuit~$C$.
In any reasonable explicit encoding of graphs we have $\size(C) \leq |C|$.
\item $\dimension(C)$: the~$\PV$-function that computes
the number of variables, or indeterminates, of the algebraic 
circuit~$C$, called the \emph{dimension}
  of~$C$.
\item $\parametricdimension(C)$: the~$\PV$-function that computes
the number of parameters of the algebraic circuit $C$, 
called the \emph{parametric dimension} of $C$. 
\item $\totaldegree(C)$: the~$\PV$-function that computes the 
\emph{syntactic} total degree of the algebraic circuit~$C$. Recall
that this is bounded by $2^{\size(C)}$, hence by $2^{|C|}$,
so its binary representation fits in $|C|$ bits. Recall
also that our definition of syntactic degree counts the parameter inputs as
contributing to the degree. One consequence of this is that
if the circuit is fed with integers for its $n$ variables, and 
integers for its $m$ parameters, 
all of bit-complexity at most $s$, then the output has bit-complexity 
polynomial in~$d,s,n,m$, where $d$ is
the total degree.
\item $\individualdegree(C,i)$: 
the $\PV$-function that computes the \emph{syntactic}
individual degree of the~$i$-th input in the algebraic circuit $C$, 
where $1 \leq i \leq n+m$, and $n$ and $m$ are the dimension and the parametric 
dimension of $C$. The \emph{maximum syntactic individual degree} of the circuit
is denoted $\maxindividualdegree(C)$. 
When $C$ has a single variable, we
write $\degree(C)$ instead of $\maxindividualdegree(C)$.
\item $\evalarithmetic(C,\av,\pv,d)$: the~$\PV$-function for the
  standard polynomial-time algorithm which, given
  an integer $d$ in unary notation, given an
  algebraic circuit~$C$ of syntactic maximum individual degree~$d$, 
  given vectors of integer inputs~$\av =  (a_1,\ldots,a_n)$ for the variables,
  and~$\pv = (p_1,\ldots,p_m)$ for the parameters, 
  evaluates $C$ on inputs~$\av$ and~$\pv$, with the 
  gates that are labelled by~$+$ 
and~$\times$ interpreted as the addition
and multiplication operations of the integers. This is defined inductively 
on the structure of~$C$
and the standard algorithm runs in time polynomial in the size of its input 
because the syntactic degree $d$ is given in unary notation. Here
we use also the fact that the definition of syntactic degree takes the parameter inputs into account.
In other words, this $\PV$-function models the evaluation of algebraic circuits of
polynomial-degree with constants of polynomial bit-complexity.
See \cite[Sec.~11.1]{TC21} how to define this function in~\PV~(first balancing 
the algebraic circuit of polynomial syntactic-degree, and then evaluating the 
balanced circuit; in \PV~however, the initial balancing step is not necessary). 
If the parameter inputs~$\pv$ are plugged into $C$, then
  the notation~$C(\av)$ abbreviates~$\evalarithmetic(C,\av,\pv,\totaldegree(C))$.
\end{itemize}

We need also some notation for \emph{definable sets of circuits}; see~Section~\ref{sec:definable} for a discussion on definability in the theory.
With the encodings discussed so far, 
all the sets below are quantifier-free definable in the language $\LA$.

\begin{itemize}
\item $\Circuits$: the set of (codes of) algebraic circuits.
The subset of formulas is denoted~$\Formulas$.
\item $\Circuits(n,d)$: the set of (codes
  of) algebraic circuits with at most~$n$ indeterminates and syntactic maximum 
  individual degree at most~$d$. The subset of formulas is denoted~$\Formulas(n,d)$. 
\item $\Circuits(n,d,s)$: the subset of $\Circuits(n,d)$
  whose elements have representation size at most~$s$. The subset of formulas
  is denoted $\Formulas(n,d,s)$.
\item $\UniPoly(d)$: the set of (codes of)
  univariate polynomials with integer coefficients and degree at most~$d$, 
  written (as formulas) in explicit sum of monomials form. In
  other words, if~$\formvar$ denotes the indeterminate,
  then~$\UniPoly(d)$ denotes the set of polynomials of the
  form~$P(\formvar) = c_0 + c_1 \formvar + c_2 \formvar^2 + \cdots +
  c_d \formvar^d$, where, for~$i = 0,\ldots,d$, the~$i$-th
  coefficient~$c_i$ is an integer. Note that when we say \emph{degree at
  most~$d$} we do \emph{not} require that~$c_d \not= 0$; that would
  be degree \emph{exactly}~$d$. 
  \item $\UniPoly(d,s)$: the subset of $\UniPoly(d)$ in which all 
  coefficients are integers of bit
  complexity at most~$s$; i.e., each coefficient~$c_i$ is an integer
  in the interval~$[-2^s+1,2^s-1]$.
  We write $\coef(P,i)$ for the $\PV$-function that extracts the coefficient
of the term of degree $i \in \{0,\ldots,d\}$ of the univariate polynomial 
$P \in \UniPoly(d)$. This is trivially computable in polynomial time
because the polynomials in $\UniPoly(d)$ are given in explicit sum of
monomials form. 
\end{itemize}

Finally, we need to define the concept of \emph{semantic equivalence over a definable set}. If~$F$ and $G$ denote algebraic circuits with the same number $n$ of indeterminates, and $S$ is a (finite or infinite)
definable set of integers, with its membership predicate $x \in S$ definable by a formula (with or without
parameters), 
then the notation $F \equiv_S G$ stands
for the (possibly unbounded) formula
\begin{equation}
    \forall \av{\in}S^n\ F(\av){=}G(\av).
\end{equation}
In words, this formula says that evaluations
of $F$ and $G$ agree on every $n$-vector of integers~$\av = (a_1,\ldots,a_n)$ 
in $S^n$.
If $q$ is an integer (in the theory), then we use $S_q$ to denote the definable
set $\{0,1,\ldots,q-1\}$ of integers between $0$ and $q-1$; i.e.,
\begin{equation}
    S_q := \{0,1,\ldots,q-1\} \subseteq \formalZ
    \label{eqn:Sq}
\end{equation}
We write $F \equiv_q G$ instead of $F \equiv_{S_q} G$,
and the corresponding formula can be written with 
bounded quantifiers:~$\forall \av{\leq}q\ (\av{\in}S_q^n \rightarrow F(\av){=}G(\av))$ 
or, more precisely, 
the bounded quantifier is~$\forall \av{\leq}t(n,q)$, where $t(n,q)$ is
the $\PV$-function that bounds the
encodings of the elements of~$S_q^n$.
In the other extreme case in which $S$ is the set of \emph{all} 
integers $\formalZ$,
we write $F \equiv S$, and the corresponding formula has unbounded 
quantifiers~$\forall \av{\in}\formalZ^n\ F(\av){=}G(\av)$.

\para{Definable classes of algebraic circuits.}
We introduce next the concept of a \emph{definable class of algebraic circuits}.
The intention is to capture the usual practice in algebraic circuit complexity
of considering different subclasses of circuits with varying parameters (cf.~\cite{SY10}).
Examples include: formulas, syntactic multilinear circuits,
uniform families, projections of the determinant polynomial, 
arithmetizations of Boolean circuits, algebraic branching programs,~etc. 
We have already
seen two classes of definable classes: the class of circuits~$\Circuits$, and 
the class of formulas~$\Formulas$. 
The following definition generalizes these:

\begin{definition}[In~$\Sonetwo$]
\label{def:definable classes of algebraic circuits}
  A \emph{definable class of algebraic circuits} is a subset
  $\scriptC \subseteq \Circuits$ of (codes of) algebraic circuits
  that comes with a $\PV$-function~$g$, the \emph{decoding function}, 
  that is surjective onto~$\scriptC$.
  The polynomial-time algorithm that computes~$g$ as a $\PV$-function
  must satisfy the following condition: given~$n,d,s$ in unary 
  and given a string~$x \in \{0,1\}^m$,
  the function $g_e(n,d,s,x)$ outputs an algebraic 
  circuit in~$\Circuits(n,d,s)$,
  for all settings of the parameter~$e$.
\end{definition}

For the rest of this section, 
let $\scriptC$ be a definable class of algebraic circuits
with decoding function~$g$. If $C$ is an algebraic circuit in~$\scriptC$
and~$g_e(n,d,s,x) = C$, then we say that~\emph{$x$ is a description of $C$ as a
member of~$\scriptC$}.  The
\emph{description size of~$C$ as a member of $\scriptC$}
is the length as a string of its smallest description as a member 
of~$\scriptC$. 

We write $\scriptC_e(n,d,s,m)$ for the \emph{slice} of algebraic circuits
in~$\scriptC$ of the form~$g_e(n,d,s,x)$ with~$x \in \{0,1\}^m$; in symbols:
\begin{equation*}
\scriptC_e(n,d,s,m) := \{ g_e(n,d,s,x) : x \in \{0,1\}^m \}.
\end{equation*}
Note
that $\scriptC_e(n,d,s,m)$ is always a subset of $\Circuits(n,d,s)$, but
its cardinality is at most $2^m$, which may be much smaller than the
cardinality of~$\Circuits(n,d,s)$. When $2^m \ll 2^s$, we say
that the class is~\emph{sparse}. An extreme case of this occurs
when $\scriptC$ is a class of algebraic circuits determined
by their dimension, say $(C_n)_{n \geq 1}$ with $C_n \in \Circuits(n,d(n),s(n))$,
where each $\scriptC(n,d,s)$ has at most one member $C_n$.
A typical example is the class of \emph{determinant polynomials}
$(\det_{n})_{n \geq 1}$, where $\det_n$ denotes the polynomial with~$n^2$ 
indeterminates that computes the determinant of the $n \times n$ matrix
given by its~$n^2$ inputs. 
A related but less extreme example of a sparse definable class
is the class of determinants of Tutte matrices of graphs.
In this example, the representation size of a member in this class 
is determined by the number of edges of the underlying graph.

\subsection{Fundamental Theorem of Algebra: the Univariate Case}

The Fundamental Theorem of Algebra (FTA) states a fundamental
property of univariate polynomials over the field of complex numbers.

\begin{theorem}[Fundamental Theorem of Algebra]
Every non-zero univariate
polynomial of degree~$d$ with coefficients in the field
of complex numbers has
exactly~$d$ roots in the complex numbers.
\end{theorem}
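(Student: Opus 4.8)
The plan is to prove the theorem in the two independent halves highlighted just above: the \emph{at most $d$ roots} half, which is purely algebraic and uses only that $\mathbb{C}$ is a field whose univariate polynomials admit Euclidean division, and the \emph{at least $d$ roots} half, which amounts to the existence of a single root together with a factorisation argument, and which is the only place where special analytic properties of $\mathbb{C}$ are needed. Together they give the factorisation $P(\formvar) = c\prod_{i=1}^{d}(\formvar-\alpha_i)$ with $c\neq 0$, which is the precise content of ``exactly $d$ roots'' (counted with multiplicity).

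For the \emph{at most} half I would induct on $d$. If $d=0$ the non-zero polynomial is a non-zero constant, with no roots. For $d\geq 1$, if $P$ has a root $\alpha$ then Euclidean division of $P(\formvar)$ by $(\formvar-\alpha)$ gives $P(\formvar)=(\formvar-\alpha)Q(\formvar)+r$ with $r$ a constant; evaluating at $\alpha$ forces $r=0$, so $P=(\formvar-\alpha)Q$ with $Q$ non-zero of degree $d-1$. Since $\mathbb{C}$ is an integral domain, every root $\beta\neq\alpha$ of $P$ is a root of $Q$, and the induction hypothesis bounds the roots of $Q$ by $d-1$; hence $P$ has at most $d$ distinct roots. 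Note this argument is robust: it works over any field (or integral domain) and underlies the $\Sonetwo$-version, Lemma~\ref{lem:univariate}.

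For the \emph{existence} half --- producing some $\alpha\in\mathbb{C}$ with $P(\alpha)=0$ when $d\geq 1$ --- I would use the d'Alembert--Argand minimum-modulus argument, since it is the most elementary and avoids complex differentiation. Because the leading term dominates, $|P(z)|\to\infty$ as $|z|\to\infty$, so on a sufficiently large closed disc the continuous map $z\mapsto |P(z)|$ attains a global minimum at some $z_0$, and this is a global minimum over all of $\mathbb{C}$. If $P(z_0)\neq 0$, expand $P(z_0+h)=P(z_0)+c_jh^j+c_{j+1}h^{j+1}+\cdots+c_dh^d$ where $c_j$ is the first non-vanishing coefficient, $1\leq j\leq d$; using the existence of $j$-th roots in $\mathbb{C}$ (polar form plus the intermediate value theorem), choose the argument of $h$ so that $c_jh^j$ points opposite to $P(z_0)$, then take $|h|$ small enough that the lower-order terms are negligible. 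A short estimate gives $|P(z_0+h)|<|P(z_0)|$, contradicting minimality; hence $P(z_0)=0$. Iterating factorisation as in the algebraic half then yields $P(\formvar)=c\prod_{i=1}^{d}(\formvar-\alpha_i)$, exhibiting exactly $d$ roots with multiplicity and, via the \emph{at most} half, no others.

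The main obstacle is the existence step: it is the sole point where completeness of $\mathbb{R}$ enters, through compactness (attainment of the minimum) and the intermediate value theorem (existence of $j$-th roots), and it is exactly this ingredient that has no feasibly constructive analogue --- which is precisely why the formalised statement developed in the next subsection keeps only the \emph{at most $d$} half, now over $\mathbb{Z}$, where the only subtlety is bookkeeping the bit-complexity of the coefficients produced by repeated Euclidean division.
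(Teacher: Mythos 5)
Your proposal is correct and follows exactly the decomposition the paper highlights: it splits FTA into the algebraic ``at most $d$'' half (Euclidean division plus induction on degree, which is precisely what underlies Lemma~\ref{lem:univariate}) and the analytic ``at least one root'' half (d'Alembert--Argand minimum-modulus argument). The paper itself states this theorem only as background and does not re-prove it in full, so there is no substantive divergence to compare; your closing remark correctly identifies why only the ``at most'' half is formalised in $\Sonetwo$.
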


The theorem naturally splits
into two halves: the half that states that there are \emph{at least~$d$
  roots}, and the half that states that there are \emph{at most~$d$
  roots}. While the \emph{at least} statement relies on special properties of the
complex numbers, the proof of the \emph{at most} statement relies only on the fact
that the class of univariate polynomials with coefficients in a field
forms a Euclidean domain where the Euclidean division algorithm
applies. In particular, this means that the \emph{at most} statement 
holds also for polynomials over any
subring of a field; for example, polynomials over the ring of integers.

In this section we show that the theory~$\Sonetwo$ is able to formalise
the standard proof of the second half of the FTA for the class of
univariate polynomials over the ring of integers. This will be used as a building 
block in the next section. While
the underlying idea of this proof is completely standard, it 
is somewhat delicate to carry out the argument 
in $\Sonetwo$ because 
we need to keep control of the bit-complexity of the 
coefficients that appear along the way in the computations.

Let~$d$ and $q$ be small non-negative integers. Let~$P$ be an element
of~$\UniPoly(d)$, that is,~$P$ is (the code of) a univariate
polynomial with integer coefficients~$c_0,c_1,\ldots,c_d$ of arbitrary 
bit-complexity, and degree at most~$d$. 
If we write~$\formvar$ for
the indeterminate, then
\begin{equation}
P(\formvar) = c_0 + c_1 \formvar + \cdots +
c_d \formvar^d.
\label{eqn:givenP}
\end{equation}
Define
\begin{equation*}
\begin{array}{lcl}
S_q & := & \{0,\ldots,q-1\} \subseteq \Z,  \\
Z_{P,q} & := & \{ u \in S_q : P(u) = 0 \}. 
\end{array}
\end{equation*}
Note that~$Z_{P,q}$ is the set of roots of~$P(\formvar)$ in the
set~$S_q$. The goal is to show that if $P$ is not the
zero polynomial, then $Z_{P,q}$ has cardinality
at most $d$. For later reference, 
we state this upper bound in the form of the existence of 
a surjection from the set $[d]$ onto the set $Z_{P,q}$. 

\begin{remark}
With $P$ given 
as in~\eqref{eqn:givenP}, where the number of roots is polynomial 
in the representation size of $P$,
we could also choose to state the upper bound by listing
the roots. Indeed, provided~$d$ is polynomial in the representation size,
this holds true even if $P$ is given by a univariate algebraic circuit,
as we will later see.
However, in the multivariate case with $n > 1$ variables, 
the number of roots in~$S_q^n$
may be exponential in the representation size of $P$, even if $d$ is small,
so there we need to resort to the surjective mapping formulation of 
the upper bound. For this reason and to be able to reuse it,
we state the upper bound for the univariate case with a 
surjective mapping too.
\end{remark}

We define a $\PV$-function
\begin{equation*}
h_{P,q} : [d] \to S_q \cup \{q\}
\end{equation*} 
by the polynomial-time algorithm that computes it.  The
input to the algorithm for~$h_{P,q}(i)$ is the triple~$(P,q,i)$ with~$P$
as above, $q$ given in unary notation, and~$i \in [d]$.
In the algorithm, let~$u$ loop over
the set~$S_q$ and evaluate~$P(\formvar)$ at~$\formvar=u$, keeping a
counter of the number of distinct roots found along the
way. If~$P(\formvar)$ has at least~$i$ distinct roots in~$S_q$,
then~$h_{P,q}(i)$ is defined to be the~$i$-th smallest root
of~$P(\formvar)$ in~$S_q$. Otherwise,~$h_{P,q}(i)$ is set to the
value~$q$ seen as an end-of-list marker. In other words, the
sequence~$(h_{P,q}(1),\ldots,h_{P,q}(d))$
equals~$(r_1,\ldots,r_t,q,\ldots,q)$ where~$r_1 < \cdots < r_t$ is the
ordered list of the first~$d$ roots of~$P(\formvar)$ in~$S_q$. The next
lemma states that~$\Sonetwo$ proves that if~$P(\formvar)$ does not
vanish everywhere on~$S_q$, then this enumeration indeed covers all
the roots of~$P(\formvar)$ in~$S_q$.

\begin{lemma}[Second Half of Fundamental Theorem of Algebra in~$\Sonetwo$] \label{lem:univariate} 
For all~$d,q \in \Log$ and every degree-$d$ polynomial~$P \in \UniPoly(d)$,
  if not all coefficients of~$P$ are zero, then 
  the sequence~$h_{P,q}(1),h_{P,q}(2),\ldots,h_{P,q}(d)$ contains all the roots
  of~$P$ in $S_q$; i.e., the function~$h_{P,q}$ is surjective
  onto~$Z_{P,q}$.
\end{lemma}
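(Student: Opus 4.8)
\medskip

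The plan is to prove the statement by the standard "factor out a root, drop the degree" argument, but carried out carefully so that the bit-complexity of the coefficients stays polynomially bounded at every step, so that the induction can be formalised in $\Sonetwo$. The key point is that polynomial (polynomial-induction) only applies to $\NP$-predicates, and the quantity we induct on must be bounded by a polynomial in the length of the inputs; since $d \in \Log$, inducting on the number of distinct roots found so far (or on $d-i$) is unproblematic provided the objects manipulated along the way have polynomial bit-size.

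\medskip

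First I would establish a \emph{factoring lemma} inside $\Sonetwo$: if $P \in \UniPoly(d)$ and $r$ is an integer root of $P$, i.e.\ $P(r)=0$, then there is $P' \in \UniPoly(d-1)$ with $P(\formvar) = (\formvar - r)\cdot P'(\formvar)$ as formal polynomials, and moreover the coefficients of $P'$ have bit-complexity bounded by $\poly(s, d, |r|)$ where $s$ bounds the bit-complexity of the coefficients of $P$. Concretely, $P'$ is obtained by synthetic division: writing $P = \sum_{j=0}^{d} c_j \formvar^j$ and $P' = \sum_{j=0}^{d-1} b_j \formvar^j$, one sets $b_{d-1} = c_d$ and $b_{j-1} = c_j + r\cdot b_j$ for $j = d-1,\ldots,1$, and the condition $P(r)=0$ is exactly what guarantees $b_{-1} := c_0 + r\cdot b_0 = 0$, so that the remainder vanishes. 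The crucial estimate is that $|b_j| \le (d+1)\max_j|c_j|\cdot\max(1,|r|)^{d}$, hence the bit-complexity of each $b_j$ is $O(d\cdot(\log|r| + s) + \log d)$, which is polynomial since $d\in\Log$. This closed-form bound is what lets one reason about the $P'$'s uniformly; the $\PV$-function computing $P'$ from $(P,r)$ is then well-defined with polynomially-bounded output, and $\Sonetwo$ proves the identity $P \equiv_\Z (\formvar-r)P'$ by a routine verification that evaluating both sides at any integer agrees (using the telescoping of the recurrence), or alternatively by checking the coefficient identities termwise.

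\medskip

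Next I would run the induction. Let $t$ be the number of distinct roots of $P$ in $S_q$ and let $r_1 < \cdots < r_t$ enumerate them, as computed by the loop defining $h_{P,q}$. I want to show $t \le d$ and that $h_{P,q}$ surjects onto $Z_{P,q} = \{r_1,\ldots,r_t\}$; the latter is immediate from the definition of $h_{P,q}$ once $t\le d$ is known, since then $h_{P,q}(i) = r_i$ for $i\le t$. To bound $t$, I would argue: by iterating the factoring lemma I claim that for each $j = 0,1,\ldots,\min(t,d)$ there is a polynomial $P_j \in \UniPoly(d-j)$, of bit-complexity $\poly(s,d,\log q)$ (here using $|r_i| \le \log q \in \Log$), with $P = (\formvar-r_1)\cdots(\formvar-r_j)\cdot P_j$ as formal polynomials, and such that $P_j$ is not the zero polynomial — indeed $P_j$ has leading coefficient $c_d \ne 0$ if $P$ has degree exactly $d$, and more robustly $P_j \not\equiv_\Z 0$ because $P$ is not the zero polynomial so some evaluation is nonzero and the product factors are not identically zero over $\Z$. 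Formally this "for all $j$ there exists $P_j$" statement has a witness that is a $\PV$-function of $j$ (iterate synthetic division $j$ times, each step keeping polynomial size), so the existence is provable by polynomial induction on $j$, bounded by $d \in \Log$. Now if $t \ge d+1$, take $j = d$: then $P_d \in \UniPoly(0)$ is a nonzero constant, yet $P(r_{d+1}) = 0$ forces one of the factors $(r_{d+1} - r_\ell)$ or $P_d(r_{d+1})$ to vanish — but $r_{d+1} \ne r_\ell$ for $\ell \le d$ since the roots are distinct, and $P_d$ is a nonzero constant, so the product is nonzero over $\Z$, a contradiction. Hence $t \le d$, and $h_{P,q}$ is surjective onto $Z_{P,q}$.

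\medskip

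The main obstacle, and the place where the argument needs genuine care rather than routine bookkeeping, is the bit-complexity control through the iterated factoring: a naive implementation of repeated division could in principle blow up, and one must verify the closed-form bound $|b^{(j)}_i| \le \binom{d}{i}(\max_i|c_i|)(\max_\ell\max(1,|r_\ell|))^{d}$ (or something of this flavour) survives $d$ iterations while staying $\poly(s,d,\log q)$ — this is exactly the "Gaussian elimination blow-up" pitfall flagged in the introduction, and it is why the lemma is stated for coefficients given explicitly rather than via circuits, and why $d, q \in \Log$ is assumed. A secondary technical point is ensuring that "$P_j$ is not the zero polynomial" is maintained as a $\PV$-checkable (indeed quantifier-free) side condition along the induction, rather than something requiring unbounded quantification; tracking the leading coefficient $c_d$ (when $P$ has degree exactly $d$) or, in the general case, pushing the hypothesis "$P$ not identically zero" through the factorization, handles this cleanly.
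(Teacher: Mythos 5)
Your proposal follows the same core idea as the paper's proof --- factor out one root at a time by synthetic division, keep the bit-complexity of the quotient's coefficients under explicit control, and run an induction bounded by $d \in \Log$ --- but it organizes the induction differently. The paper proves, by $\Pi^b_1$-$\pind$ on the degree $k$ from $0$ up to $d$, the single statement "for every $A \in \UniPoly(k,s_k)$ with a nonzero coefficient, $h_{A,q}$ is surjective onto $Z_{A,q}$", where the bit-bound $s_k$ is defined by a decreasing recurrence $s_d = s$, $s_{k-1} = s_k + k|q| + |k|$ tailored so that one division step takes $\UniPoly(k,s_k)$ into $\UniPoly(k-1,s_{k-1})$; the induction hypothesis then hands you surjectivity for the quotient $B$ directly, and $Z_A = Z_B \cup \{v\}$ finishes. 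You instead maintain the running factorization $P = (\formvar-r_1)\cdots(\formvar-r_j)P_j$ and derive a contradiction if $t > d$. Both are fine; the paper's formulation is a bit tighter because the inductive invariant already \emph{is} the surjectivity claim, so there is no separate counting step, and the $s_k$ recurrence spells out precisely the "closed-form bound" that you leave as "something of this flavour".

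The one place your sketch genuinely needs repair is the claim that $P_j$ is not the zero polynomial. Your primary argument (leading coefficient $c_d \neq 0$) only applies when $P$ has degree \emph{exactly} $d$, but the lemma is stated for $\UniPoly(d)$, i.e.\ degree \emph{at most} $d$, so $c_d$ may well be zero. Your fallback ("$P_j \not\equiv_{\Z} 0$ because some evaluation of $P$ is nonzero") is worse: inferring "some integer evaluation is nonzero" from "some coefficient is nonzero" is itself a nontrivial instance of the very statement you are proving, and moreover the witnessing integer is not obviously bounded, so this is not a $\Sigma^b$-statement you can carry through $\pind$. The clean fix, which is what the paper's Claim~\ref{claim:notzeropoly} does, is to take the largest index $j$ with $c_j \neq 0$ and chase the synthetic-division recurrence: all $b_k$ with $k \geq j$ vanish, and then $b_{j-1} = c_j + b_j v = c_j \neq 0$ provided $j \geq 1$, which in turn follows because $j = 0$ would force $A(v) = a_0 \neq 0$, contradicting $A(v) = 0$. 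This is a purely syntactic, quantifier-free argument and is what makes the non-vanishing invariant go through in $\Sonetwo$. With that replacement your plan matches the paper's proof.
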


\begin{proof}
  Fix~$d,q \in \Log$ and let~$P$ be an element
  of~$\UniPoly(d)$. Let $s$ be the bit-complexity of 
  the coefficients of $P$, so $s$ is a length and~$P \in \UniPoly(d,s)$.
  The plan for the proof is to define a sequence $s_0,s_1,\ldots,s_d$
  of lengths with $s_d =s$, and
  then prove~$\phi(z)$ by $\Pi^b_1$-$\pind$, where $\phi(z)$ is
  the following~$\Pi^b_1$-formula:
  \begin{equation*}
  \begin{array}{lllll}
  \phi(z) & := & \forall k{\leq}z\ (k{=}|z|{\leq}d \rightarrow \forall A{\in}\UniPoly(k,s_k)\ \\
  & & \;\;\;\;\; (\exists i{\leq}k\ \coef(A,i){\not=}0)
  \rightarrow (\forall u{\in}Z_{A,q}\ \exists i{\leq}k\ h_{A,q}(i){=}u)).
  \end{array}
  \end{equation*}
  % \begin{quote}
  % for every~$A$ in~$\UniPoly(k,s(k))$, either~$Z_{A,q} = S_q$, or
  % the function~$h_{A,q}$ is surjective onto~$Z_{A,q}$.
  % \end{quote}
  Observe that the conclusion of the lemma is the specialization
  of~$\phi(2^d-1)$ to~$k:=d$ and~$A:=P$. 
  The formula $\phi(z)$ uses $d,q$ and the (code of the) sequence $s_0,\ldots,s_d$ 
  as parameters. Note also that $\phi(z)$ is indeed a~$\Pi^b_1$-formula
  because $d,q \in \Log$ and, therefore, 
  all quantifiers except the first two
  are sharply bounded: for $\exists i{\leq}k$, note the condition $k{=}|z|$;
  for $\forall v{\in}Z_{A,q}$ recall that~$Z_{A,q} \subseteq S_q$,
  and~$S_q$ is the set of codes of integers $\{0,1,\ldots,q-1\}\subseteq\formalZ$,
  and these are bounded by $2^{|q|+1} \leq 4q$ (using sign-magnitude representation).
  
  We still need to define $s_k$ for $k = 0,\ldots,d$. We define $s_k$ 
  by the following inverse recurrence relation for $k = d,d-1,\ldots,2,1$:
  \begin{equation}
  \begin{array}{lll}
  s_d & := & s, \\
  s_{k-1} & := & s_{k} + k|q| + |k|.
  \end{array}
  \label{eqn:recursivecaser}
  \end{equation}
  Since $\Log$ is provably closed under addition and multiplication and $d$ is a length, 
  by~$\Pi^b_1$-$\pind$ each~$s_k$ is a length. 
  Also by $\Pi^b_1$-$\pind$, each $s_k$ is bounded by~$s+d^2 |q|+d^2$. 
  Since in the rest of the proof the
  parameter~$q$ will be fixed, we write~$Z_A$ and~$S$ instead of the
  more accurate~$Z_{A,q}$ and~$S_q$. Similarly, we
  write~$h_A$ instead of~$h_{A,q}$.
\bigskip

  To prove $\phi(2^d-1)$ we proceed by $\Pi^b_1$-$\pind$. Concretely, we
  prove $\forall k{<}d (\phi'(k) \rightarrow \phi'(k+1))$ and~$\phi'(0)$,
  where $\phi'(k)$, with $k$ a length, 
  is shorthand notation for~$\phi(2^k-1)$,

\bigskip  

\noindent\textit{Base case:~$\phi'(0)$}.  In this case we
have~$A(\formvar) = a_0$ for a single integer~$a_0$ and, therefore, 
either~$a_0 = 0$, or
else~$Z_{A} = \emptyset$ because~$a_0 \not= 0$ and hence~$h_{A}$ is
vacuously surjective onto~$Z_{A}$.
\bigskip

\noindent\textit{Inductive step:~$\forall k{<}d\;
  (\phi'(k)\rightarrow\phi'(k+1))$}. For convenience of notation, 
we shift the induction parameter~$k$ by one unit, which is clearly
equivalent: assuming~$1 \leq k \leq d$ and~$\phi'(k-1)$, we
prove~$\phi'(k)$.  Fix~$A$ in~$\UniPoly(k,s_k)$,
say~$A(\formvar) = a_0 + a_1 \formvar + \cdots + a_k \formvar^k$.
Assume also that not all coefficients $a_0,a_1,\ldots,a_k$ are zero.
When~$Z_A = \emptyset$ there is nothing to prove as then~$h_{A}$ is
vacuously surjective onto~$Z_A$. Assume
then that~$Z_A \not= \emptyset$ and let then~$v \in S$ be such
that~$A(v) = 0$.  We define the
coefficients~$b_0,b_1,\ldots,b_{k-1}$ of a degree-$(k-1)$
polynomial~$B(\formvar) = b_0 + b_1 \formvar + \cdots + b_{k-1}
\formvar^{k-1}$ and show that
\begin{equation}
A(\formvar) \equiv (\formvar - v)B(\formvar). \label{eqn:goalinFTA}
\end{equation}
This will imply that~$Z_A = Z_B \cup \{v\}$, so the list of roots
of~$A$ will be easily obtained from the list of roots of~$B$. 

The coefficients~$b_0,\ldots,b_{k-1}$ of~$B(\formvar)$ are defined by
the inverse recurrence relation:
\begin{equation}
\begin{array}{lll}
b_k & := & 0, \\
b_{k-i} & := & a_{k-i+1} + b_{k-i+1} v,
\end{array}
\label{eqn:choiceofbi}
\end{equation}
for $i=1,\ldots,k$.
% \begin{equation}
% b_i := a_{i+1} + a_{i+2} v + a_{i+3} v^2 + \cdots + a_k v^{k-1-i}.
% \label{eqn:choiceofbi}
% \end{equation}

\begin{claim} \label{claim:inunipoly}
The polynomial $B(\formvar)$ belongs to $\UniPoly(k-1,s_{k-1})$.
\end{claim}

\begin{proof}[Proof of Claim~\ref{claim:inunipoly}]
Clearly, there exists a $\PV$-function that, given the appropriate inputs, 
computes the sum of the sequence~$a_{k-i+j} v^{j-1}$ for $j=1,\ldots,i$. 
Using this, by quantifier free induction on~$i = 0,\ldots,k$ we have
\begin{equation}
b_{k-i} = a_{k-i+1} + a_{k-i+2} v + a_{k-i+3} v^2 + \cdots + a_k v^{i-1}. 
\label{eqn:expansionbk}
\end{equation} 
For all~$j=1,\ldots,i$, the~$j$-th term in~\eqref{eqn:expansionbk} has
bit-complexity bounded by~$s_k + k|q|$: indeed~$a_{k-i+j}$ has
bit-complexity at most~$s_k$, and~$v^{j-1} < q^k$.  Thus,
each~$b_{k-i}$ has bit complexity at most~$s_k+k|q|+|k|$, which
equals~$s_{k-1}$ by~\eqref{eqn:recursivecaser}. Thus,
the polynomial $B(\formvar)$ given by the coefficients $b_0,b_1,\ldots,b_{k-1}$
is indeed an element of~$\UniPoly(k-1,s_{k-1})$.
\end{proof}

\begin{claim} \label{claim:notzeropoly}
Not all coefficients of the polynomial $B(\formvar)$ are $0$.
\end{claim}

\begin{proof}[Proof of Claim~\ref{claim:notzeropoly}]
By assumption, not all the coefficients $a_0,a_1,\ldots,a_k$ of $A(\formvar)$
are zero. Therefore, by quantifier-free maximization, there is a largest
$j \leq k$ such that $a_j \not= 0$, so all coefficients~$a_{j+1},a_{j+2},\ldots,a_k$ are $0$. 
By~\eqref{eqn:choiceofbi} and quantifier-free reverse induction,
this means that all coefficients $b_k,b_{k-1},\ldots,b_{j}$ are also $0$. 
Now we argue that $j \geq 1$. Indeed, if~$j=0$, then all $b$s are $0$ so 
by~\eqref{eqn:expansionbk} with $i=k$ we get $A(v) = a_0 + b_0 v = a_0$. 
But $v$ was such 
that $A(v) = 0$, and $a_0$ is non-zero since $j=0$; a contradiction. 
Thus $j \geq 1$. But then~$b_{j-1} = a_j + b_j v = a_j \not= 0$, since $b_j = 0$.
\end{proof}

\begin{claim} \label{claim:same}
$A(\formvar) \equiv (\formvar - v)B(\formvar)$.
\end{claim}

\begin{proof}[Proof of Claim~\ref{claim:same}]
Fix~$u \in \formalZ$. Our goal is to show
that~$A(u) = (u-v)B(u)$. To see this first set the following notation:
\begin{equation}
\begin{array}{lll}
A_0(\formvar) & := & a_{k-1}\formvar^{k-1} + a_k \formvar^k, \\
A_i(\formvar) & := & a_{k-1-i} \formvar^{k-1-i} +  A_{i-1}(\formvar), \\
\end{array}
\label{eqn:choiceofAi}
\end{equation}
for~$i = 1,\ldots,k-1$, and also
\begin{equation}
\begin{array}{lll}
B_0(\formvar) & := & b_{k-1} \formvar^{k-1}, \\
B_i(\formvar) & := & b_{k-1-i} \formvar^{k-1-i} + B_{i-1}(\formvar),
\end{array}
\label{eqn:choiceofBi}
\end{equation}
for~$i = 1,\ldots,k-1$. Note that~$A_{k-1}(\formvar)$ is just
an alternative notation for~$A(\formvar)$, and~$B_{k-1}(\formvar)$ is
an alternative notation for~$B(\formvar)$.
% \begin{equation*}
% \begin{array}{rrrrrrrrrrrrrr}
% A_i(\formvar) & := & 
% a_j\formvar^j & + & a_{j-1}\formvar^{j-1} & + &
% \cdots & + & a_{j-i-1}\formvar^{j-i-1}\phantom{,} \\ 
% B_i(\formvar) & := & 
% & & b_{j-1}\formvar^{j-1} & + & 
% \cdots & + & b_{j-i-1}\formvar^{j-i-1}.
% \end{array}
% \end{equation*}
With these, we show, by quantifier-free induction on~$i=0,\ldots,k-1$
with~$u,v,k$ and (the codes of)~$A(\formvar)$ and~$B(\formvar)$ as
parameters, that
\begin{align} 
A_i(u) = (u-v)B_i(u) + (a_{k-1-i} + b_{k-1-i}v) u^{k-1-i}.
\label{eqn:firstequalityhere}
\end{align}
The base
case $i = 0$ of this equation checks out as
\begin{align}
A_0(u) & = a_{k-1} u^{k-1} + a_k u^k \\
& = (u-v)b_{k-1}u^{k-1} + (a_{k-1} + b_{k-1}v) u^{k-1}, \\
& = (u-v)B_0(u) + (a_{k-1} + b_{k-1}v) u^{k-1}
\end{align} 
where the first equality follows from the choice of $A_0(\formvar)$
in~\eqref{eqn:choiceofAi}, the second equality follows from the choice
of $b_{k-1}$ in~\eqref{eqn:choiceofbi}, and the third equality follows
from the choice of $B_0(\formvar)$ in~\eqref{eqn:choiceofBi}.
Similarly, assuming~\eqref{eqn:firstequalityhere} for~$0\leq i\leq
k-2$ as induction hypothesis, the same equation for~$i+1$ checks out
as
\begin{align}
A_{i+1}(u) & = a_{k-2-i} u^{k-2-i} +  A_i(u) \\ 
& =  a_{k-2-i} u^{k-2-i} + (u-v)B_i(u) + (a_{k-1-i} + b_{k-1-i}v) u^{k-1-i} \\
& = (u-v)(b_{k-2-i} u^{k-2-i} + B_i(u)) + (a_{k-2-i} + b_{k-2-i}v) u^{k-2-i}, \\
& = (u-v)B_{i+1}(u) + (a_{k-2-i} + b_{k-2-i}v) u^{k-2-i},
\end{align}
where the first equality follows from the choice of
$A_{i+1}(\formvar)$ in~\eqref{eqn:choiceofAi}, the second equality
follows from the induction hypothesis, the third equality follows from
the choice of $b_{k-2-i}$ in~\eqref{eqn:choiceofbi}, and the fourth
equality follows from the choice of $B_{i+1}(\formvar)$
in~\eqref{eqn:choiceofBi}.  This gives
\begin{align}
A(u) & = A_{k-1}(u) \\
& = (u-v)B_{k-1}(u) + (a_0 + b_0v) \\
& = (u-v) B(u) + A(v) \\
& = (u-v)B(u)
\end{align}
where the first equality follows from the choice of
$A_{k-1}(\formvar)$, the second follows
from~\eqref{eqn:firstequalityhere} with~$i = k-1$, the third follows
from the choices of~$B_{k-1}(\formvar)$ and~$b_0$, and the fourth
follows from the fact that~$v$ is a root of~$A(\formvar)$.  Since~$u$
was an arbitrary value in~$\formalZ$, this proves the claim.
\end{proof}

We are ready to complete the proof of the lemma.  
We argued via Claim~\ref{claim:inunipoly}
that~$B$ is an element of~$\UniPoly(k-1,s_{k-1})$, and
via Claim~\ref{claim:notzeropoly} that not all coefficients
of $B$ are zero. Therefore, the
induction hypothesis~$\phi'(k-1)$ applied to~$B$ gives that the
function~$h_{B} : [k-1] \to S \cup \{q\}$ is surjective
onto~$Z_B$. Think of the function~$h_{B}$ as a
list~$(r_1,\ldots,r_t,q,\ldots,q)$ of~$t \leq k-1$
roots~$r_1 < \cdots < r_t$ of~$B$ in~$S$, followed by a string of
end-of-list markers~$q$. 
We argued via Claim~\ref{claim:same} 
that~$Z_A = Z_B \cup \{v\}$; recall for this that $v$ was from $S$.
Therefore, the
list~$h_{A}$ of roots in~$S$ for~$A$ is obtained from~$h_{B}$ by
inserting~$v$ in the right place of the list of at most~$k-1$ roots
of~$B$, or ignoring it if it is already there, in such a way that the
resulting list is the sorted list of distinct roots of~$A$ in~$S$
followed by end-of-list markers.  It follows that~$h_{A}$ is
surjective onto~$Z_A$, as was to be shown.
\end{proof}

\subsection{Coefficients of Univariate Polynomials}

Next we show that the coefficients of the univariate
polynomial computed by an algebraic circuit with a single indeterminate and
with integer coefficients can be computed in time polynomial in the
representation-size of the circuit, provably in~$\Sonetwo$.

Consider the $\PV$-function $\polynomial(C,d)$ given by
the following polynomial-time algorithm: The input is
a positive integer $d$ written in unary notation, 
and an algebraic circuit~$C$ with a single indeterminate $\formvar$
and syntactic degree bounded by $d$. 
The circuit can be thought of as given by a straight-line program of 
gates~$C_0,C_1,\ldots,C_t$, identified with their labels. 
For all gate numbers~$u = 0,1,\ldots,t$ in this order, 
the algorithm 
constructs the coefficients $a_{u,0},\ldots,a_{u,d}$
of a degree-$d$ polynomial $P_u \in \UniPoly(d)$, inductively
according to the following rules:
\begin{enumerate} \itemsep=0pt
\item if $C_u = c \in \formalZ$,
then $a_{u,0} := c$ and $a_{u,k} := 0$ for $k = 1,2,\ldots,d$,
\item if $C_u = \formvar$, then $a_{u,0} := 0$
and $a_{u,1} := 1$, and $a_{u,k} := 0$ for $k = 2,3,\ldots,d$,
\item if $C_u = C_v + C_w$ with $v<u$ and $w<v$,
then $a_{u,k} := a_{v,k} + a_{w,k}$ for $k = 0,1,\ldots,d$,
\item if $C_u = C_v \times C_w$ with $v < u$ and $w<u$,
then $a_{u,k} := \sum_{t=0}^k a_{v,t} a_{w,k-t}$ for $k = 0,1,\ldots,d$.
% set $a_{i,k}$ by the convolution formula
% \begin{equation}
% a_{i,k} := \sum_{t=0}^k a_{l,t} a_{r,k-t}.
% \end{equation}
\end{enumerate}
On termination, the algorithm outputs the code of the polynomial $P_t$.

We verify that the algorithm runs in polynomial time.
Let $s$ be the bit-complexity of the constants plugged into the
circuit~$C$. By the definition of the representation-size 
of algebraic circuits,~$s$ 
is bounded by the representation-size of~$C$.
The bit-complexity $s_u$ of the coefficients~$a_{u,0},\ldots,a_{u,d}$,
and the syntactic degree $d_u$ of the circuit rooted at $u$,
obey the following recurrences:
\begin{enumerate} \itemsep=0pt
\item $s_u \leq s$ and $d_u = 1$ if $C_u = c \in \formalZ$,
\item $s_u \leq 2$ and $d_u = 1$ if $C_u = \formvar$,
\item $s_u \leq 1+\max\{s_v,s_w\}$ and $d_u \leq \max\{d_v,d_w\}$ 
if $C_u = C_v + C_w$,
\item $s_u \leq |d+1|+(s_v+s_w)$ and $d_u \leq d_v + d_w$ if 
$C_u = C_v \times C_w$.
\end{enumerate}
Therefore, by induction on $u=0,1,\ldots,t$, we get
$s_u \leq (s+|d+1|)(2 d_u - 1)$.
Since $d_u \leq d$, the conclusion is that each $s_u$ is bounded by a 
polynomial $\poly(d,s)$, so 
the algorithm runs in time polynomial in the size of its input.

\begin{lemma}[Univariate Coefficient-Extraction Lemma in~$\Sonetwo$] \label{lem:coefficientslemma}
For every $d \in \Log$, every algebraic circuit $C$ with
a single indeterminate input,  
syntactic degree at most~$d$, and integer constants,
we have~$P \equiv C$, for~$P = \polynomial(C,d)$.
\end{lemma}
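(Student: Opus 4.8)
The plan is to prove, by $\Pi^b_1$-induction on the gate number $u$, the statement that for every gate $C_u$ of the straight-line program, the polynomial $P_u = \polynomial(C|_{\leq u}, d)$ extracted by the algorithm is semantically equivalent to the subcircuit rooted at $C_u$; that is, $\forall t{\in}\formalZ\ \evalarithmetic(C|_{\leq u}, t, \ldots) = P_u(t)$, where $P_u$ is evaluated as an element of $\UniPoly(d)$. Since the conclusion of the lemma is exactly this statement instantiated at $u = t$ (the output gate), this suffices. The induction hypothesis needs to be carefully stated as a $\Pi^b_1$-formula: it quantifies over the single evaluation point, but since the algebraic circuit has polynomial syntactic degree $d \in \Log$ and the coefficients $a_{u,k}$ have bit-complexity bounded by the polynomial $\poly(d,s)$ established in the discussion preceding the lemma, both the value $C_u(t)$ on integer inputs of given bit-complexity and the coefficients of $P_u$ are of polynomially-bounded size, so the relevant equality can be expressed with a leading universal quantifier over a bounded object and sharply bounded inner quantifiers. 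The closure of $\Log$ under addition and multiplication, together with the bit-complexity bound $s_u \leq (s+|d+1|)(2d_u-1)$, guarantees provably in $\Sonetwo$ that the algorithm's intermediate data stays within the bounds, so the induction is legitimate.

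The inductive step splits into the four cases of the algorithm. For the constant-gate and variable-gate base cases, $P_u$ is literally the polynomial $c$ or $\formvar$, and the claim is immediate by the defining axioms of $\evalarithmetic$ and of evaluation of polynomials in $\UniPoly(d)$. For the addition gate $C_u = C_v + C_w$, the induction hypothesis gives $C_v(t) = P_v(t)$ and $C_w(t) = P_w(t)$ for all $t$, and one needs the algebraic identity that evaluating the coefficient-wise sum $a_{u,k} = a_{v,k} + a_{w,k}$ at $t$ yields $P_v(t) + P_w(t)$; this is a finite distributive computation over a sum of at most $d+1$ terms, provable by quantifier-free induction on the number of terms. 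The multiplication gate $C_u = C_v \times C_w$ is the one requiring actual work: one must verify in the theory that the Cauchy-product formula $a_{u,k} = \sum_{j=0}^{k} a_{v,j} a_{w,k-j}$ produces coefficients whose associated polynomial evaluates to $P_v(t) \cdot P_w(t)$. This is the standard identity $\big(\sum_j a_{v,j} t^j\big)\big(\sum_\ell a_{w,\ell} t^\ell\big) = \sum_k \big(\sum_{j+\ell=k} a_{v,j} a_{w,\ell}\big) t^k$, and the subtlety is that the syntactic-degree bound $d_v + d_w \leq d$ ensures no terms of degree exceeding $d$ are silently discarded, so the truncation implicit in working inside $\UniPoly(d)$ is harmless. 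Formalising this in $\Sonetwo$ amounts to a nested quantifier-free (or sharply bounded) induction over $j,\ell,k$, together with the $\PV$-function that computes partial sums.

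The main obstacle I anticipate is bookkeeping rather than conceptual: one must ensure that every intermediate quantity---the partial Cauchy sums in the multiplication case, and the evaluations $P_u(t)$ on integer arguments---has bit-complexity that is provably polynomial, so that the relevant formulas remain in the $\Pi^b_1$ (or even sharply bounded) hierarchy and the induction schemes available in $\Sonetwo$ apply. Here the key point is precisely the one flagged in the paragraph before the lemma: $d$ is supplied in unary, so $d_u \leq d$ and hence $s_u \leq (s+|d+1|)(2d_u-1) = \poly(d,s)$, keeping everything feasible; one should be careful that the algebraic circuit \emph{as given} indeed has syntactic degree at most $d$, since otherwise the $\UniPoly(d)$ truncation could drop genuine high-degree monomials and break semantic equivalence. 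With that hypothesis in force, the rest is a routine but somewhat tedious chain of quantifier-free and $\Pi^b_1$ inductions, analogous to (and simpler than) the argument already carried out in the proof of Lemma~\ref{lem:univariate}.
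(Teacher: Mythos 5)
Your plan is correct in its case analysis (constant, variable, addition, Cauchy-product multiplication gate) and in identifying the degree bound as the reason the $\UniPoly(d)$ truncation is harmless. However, the framing of the induction has a genuine gap. You propose to do $\Pi^b_1$-induction on the gate number $u$ with the induction hypothesis of the form
\[
\forall t \in \formalZ\ \bigl(C|_{\leq u}(t) = P_u(t)\bigr),
\]
and you argue this is a $\Pi^b_1$-formula because the circuit has polynomial syntactic degree and the coefficients $a_{u,k}$ have polynomially bounded bit-complexity. But this bounds only the \emph{internal} quantities, not the evaluation point $t$: the lemma asserts $P \equiv C$, i.e., agreement on \emph{every} integer $t$, and $t \in \formalZ$ is an unbounded quantifier. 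No term of the language bounds $t$, so the formula is not $\Pi^b_1$, and $\Pi^b_1$-induction does not apply. Restricting $t$ to a bounded range would only establish $P \equiv_S C$ for bounded $S$, which is strictly weaker than the stated $P \equiv C$ and would not suffice for the later use in the proof of the encoding lemma, where the equivalence is applied at arbitrary integer components of the non-root $\av$.

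The paper resolves this by reversing the order of quantification: it fixes a single integer $z \in \formalZ$, together with the parameters $\overline{p}, d$ and the codes of the two straight-line programs, as \emph{parameters} of the induction, and then proves the single equation $P_u(z) = C_u(z)$ by quantifier-free induction on $u$. The induction hypothesis is then an equation between $\PV$-terms with $z$ held fixed, so the bit-complexity of $z$ is simply irrelevant to the complexity of the induction formula. At the end, since $z$ was an arbitrary integer, the unbounded universal statement $P \equiv C$ follows by universal generalisation outside the induction. This is a small rearrangement of your plan, but it is essential: move the quantifier over the evaluation point outside the induction, fixing it as a parameter, and the induction becomes quantifier-free rather than $\Pi^b_1$. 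With that correction, the rest of your case analysis goes through and matches the paper.
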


\begin{proof}
    Let $C_0,C_1,\ldots,C_t$ be the straight-line program of $C$
    and let $\overline{p}$ be the integer constants plugged into $C$.
    Let $P_0,P_1,\ldots,P_t$ be the degree-$d$ polynomials constructed
    by the algorithm that defines~$\polynomial(C,d)$. 
    For every $z \in \formalZ$, fixed as parameter of induction
    together with $\overline{p},d$ and the sequences $C_0,C_1,\ldots,C_t$ and $P_0,P_1,\ldots,P_t$,
    prove $P_u(z) = C_u(z)$, i.e.,
    \begin{equation}
        \evalarithmetic(P_u,z,\overline{p},d) = \evalarithmetic(C_u,z,\overline{p},d),
        \label{eqn:induz}
    \end{equation}
    by quantifier-free induction on $u=0,1,\ldots,t$. For this, use
    the rules 1--4 that define the algorithm of the $\PV$-function~$\polynomial(C,d)$.
    Note that this works for any value $z \in \formalZ$, regardless of
    its bit-complexity. Thus, applying~\eqref{eqn:induz} to $t = u$ we get 
    $P(z) = P_t(z) = C_t(z) = C(z)$. Since $z$ was arbitrary in $\formalZ$,
    this shows $P \equiv C$ and the lemma is proved.
\end{proof}

\subsection{The Coding Argument and its Formalisation}

% Coding via Hybrid Argument

We show that the theory~$\Sonetwo$ proves the \SZ\ Lemma
for multivariate polynomials of polynomial degree. As in the univariate case, 
in the $n$-dimensional case we consider the roots with all its entries
in $S_q$ for given $q$, where $S_q = \{0,1,\ldots,q-1\}$. For any 
non-zero polynomial~$P(\overline\formvar)$ with~$n$
variables and individual degree at most~$d$, any given
non-root~$\va$ of~$P(\overline\formvar)$, and any 
given positive integer~$q$, we build a function that has
domain-size~$n \cdot d \cdot q^{n-1}$ and covers, in its range, all the roots
of~$P(\overline\formvar)$ within~$S_q^n$. 
Note that in this case the covering function
must be given access to one non-root $\va$ of~$P(\overline\formvar)$ that
certifies it as non-zero. In other words, the statement that the
theory~$\Sonetwo$ will formalise is the following:

\begin{quote}
  \emph{if~$P(\overline\formvar)$ has $n$ variables, individual degree $d$, and at least one non-root $\va$ in $\formalZ^n$, \\
  then $P(\overline\formvar)$ has at least~$q^n \cdot (1-nd/q)$ non-roots
  in~$S_q^n$}. 
\end{quote}
To formulate and prove this statement with the required precision, we
need to introduce some notation.

Let $n,d,q$ be small positive integers. Let~$P$ be an element
of~$\Circuits(n,d)$, that is,~$P$ is (the code of) an algebraic
circuit with~$n$ inputs and syntactic individual degree at most~$d$. As
in the univariate case, define
\begin{equation*}
\begin{array}{lcl}
S_q & := & \{0,\ldots,q-1\} \subseteq \Z,  \\
Z_{P,q} & := & \{ (u_1,\ldots,u_n) \in S_q^n : P(u_1,\ldots,u_n) = 0 \}, \\
C_{n,d,q} & := & [n] \times [d] \times S_q^{n-1}.
\end{array}
\end{equation*}
Note that~$Z_{P,q}$ is the set of
roots, or \emph{zeros}, of~$P$ in the set~$S_q^n$.  We will use $C_{n,d,q}$ as the set
of \emph{codes} of roots of $P$ in the set $S_q^n$. Observe that if $q > nd$, then $|C_{P,d,q}| < |Z_{P,q}|$
and compression is achieved. 

For the fixed $n,d,q,P$ and
any fixed additional vector~$\av = (a_1,\ldots,a_n)$ of integers (not necessarily in $S_q^n)$
we define a pair of~$\PV$-functions
\begin{align*}
& \encode_{P,\av} : S_q^n \to C_{n,d,q}, \\
& \decode_{P,\av} : C_{n,d,q} \to S_q^n.
\end{align*}
These functions are defined by the polynomial-time algorithms that compute them. 

\paragraph{Encoding.} The input to~$\encode_{P,\av}(\bv)$ 
is a $5$-tuple $(P,d,q,\av; \bv)$ with $P,d,q,\av$ as above, with $d$ and $q$ presented
in unary notation, and $\bv = (b_1,\ldots,b_n)$ a vector $S_q^n$.
The goal is to encode the vector~$\vb$ as an element in $C_{n,d,q}$. 
The algorithm first evaluates $P$ at $\av$ and $\bv$ to check 
if~$P(\av) \not= 0$ and~$P(\bv) = 0$. 
If this fails, then either the given $\av$ is not suitable for the encoding scheme, 
or the given~$\bv$ is not even a root, which we do not need to encode. In such a case the
algorithm returns the default value~$(1,1,\zerov)$ from $[n] \times [d] \times S_q^{n-1}$, 
where $\zerov = (0,\ldots,0)$ 
is the all-zero vector in~$S_q^{n-1}$. If indeed~$P(\av) \not = 0$ 
and~$P(\bv) = 0$, then the algorithm loops through 
$k = 1,2,\ldots,n$ in this order until it finds the first position where
\begin{equation}
\begin{array}{lllcl}
P(a_1,\ldots,a_n) &&& \not= & 0 \\ 
P(b_1,a_2,\ldots,a_n) &&& \not= & 0 \\
P(b_1,b_2,a_3,\ldots,a_n) &&& \not= & 0 \\ 
\;~~~~~~~~~~\vdots & \ddots && \vdots & \vdots\\ 
P(b_1,b_2,b_3,\ldots,b_{k-1},& a_{k} &,a_{k+1},\ldots,a_n) & \not= & 0 \\  
P(b_1,b_2,b_3,\ldots,b_{k-1},& b_{k} &,a_{k+1},\ldots,a_n) & = & 0.
\end{array}
\label{eqn:hybridthis}
\end{equation}
A first such $k \in [n]$ must exist since $P(\av) \not= 0$ and $P(\bv) = 0$. 
Next, consider the \emph{univariate} algebraic circuit
\begin{equation}
Q(\formvar) :=
P(b_1,\ldots,b_{k-1},\;\;\formvar,\;\;a_{k+1},\ldots,a_n) \label{eqn:Qpol}
\end{equation} 
By definition $Q(b_k) = 0$ so $b_k$ is a root of $Q(\formvar)$.
Then the algorithm loops 
through $r \in S_q$ to form the ordered list of roots of $Q(\formvar)$ in $S_q$
and finds the position $i \in [q]$ of $b_k$ in this ordered list. If this 
index exceeds $d$ (spoiler: this will never happen but the algorithm need not know this),
then something went wrong and the algorithm outputs the default value $(1,1,\zerov)$
as before. Otherwise, it outputs~$(k,i,b_1,\ldots,b_{k-1},b_{k+1},\ldots,b_n)$, 
which is in $[n] \times [d] \times S_q^{n-1}$.

\paragraph{Decoding.} The input to $\decode_{P,\av}(k,i,\cv)$ is
the~$7$-tuple~$(P,d,q,\av; k,i,\cv)$, with~$P,d,q,\av$ as above, with~$d$ and~$q$ presented in unary 
notation,~$k \in [n]$ and $i \in [d]$, and~$\cv = (c_1,\ldots,c_{n-1})$ a vector 
in~$S_q^{n-1}$. The algorithm first evaluates~$P$ on the vector~$\av$
to check that~$P(\av) \not= 0$. If this check fails,
then the given~$\av$ is not suitable for the encoding scheme. In such a case the 
algorithm outputs the default value $\zerov$ from $S_q^n$, where $\zerov = (0,\ldots,0)$ is
the all-zero vector of length $n$. If indeed~$P(\av) \not= 0$, 
consider the following univariate algebraic circuit:
\begin{equation}
G(\formvar) :=
P(c_1,\ldots,c_{k-1},\;\; \formvar\;\;,a_{k+1},\ldots,a_n). \label{eqn:Gpol}
\end{equation} 
The algorithm loops through $r$ in $S_q$ to check if there are 
at least $i$ different $r \in S_q$ such that~$P(r) = 0$.
If it finds less than $i$ roots in $S_q$, then something went wrong and again the 
algorithm outputs~$\zerov = (0,\ldots,0)$
in~$S_q^n$ as a default value. Otherwise, it sets~$r$ to the~$i$-th smallest
root of~$G(\formvar)$ in~$S$ and outputs~$(c_1,\ldots,c_{k-1},r,c_{k},\ldots,c_{n-1})$ in $S_q^n$.

\paragraph{Correctness of the encoding scheme.}
We are ready to prove that the encoding scheme for roots of $P$ given
by the pair of functions $\encode_{P,\av}$ and $\decode_{P,\av}$ is correct. 
This correctness hinges on the condition $P(\av) \not= 0$
but does not require any upper bound on the components of the vector $\av$.
Note also that the 
functions $\encode_{P,\av}$ and $\decode_{P,\av}$ are polynomial-time
computable because they are given $P,d,q,\av$ in the input, with $d$ and $q$ 
in unary notation. Finally, it is also worth pointing out that 
the correctness of the encoding scheme will not require 
any lower bound on $q$. However, as noted earlier, the encoding achieves
some compression only if $nd < q$.

In the statement of the next lemma, which states the correctness, recall that 
we use the notation~$P \equiv 0$ to the
denote the (unbounded) formula $\forall \av{\in}\formalZ^n\ P(\av){=}0$,
where $\formalZ$ denotes the set of encodings of \emph{all} integers.

\begin{theorem}[Proof of \SZ\ Lemma in~$\Sonetwo$]\label{lem:encoding-roots} 
  For all~$n,d,q \in \Log$ and all~$P \in \Circuits(n,d)$, either $P \equiv 0$ or
  for every~$\av = (a_1,\ldots,a_n) \in \formalZ^n$ such that $P(\av) \not= 0$ the
  function~$\decode_{P,\av}$ is surjective onto~$Z_{P,q}$ and inverts
  $\encode_{P,\av}$ on $Z_{P,q}$: for every $\bv \in Z_{P,q}$ we have 
  $\decode_{P,\av}(\encode_{P,\av}(\bv)) = \bv$.
\end{theorem}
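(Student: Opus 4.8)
The plan is to show that the two $\PV$-functions $\encode_{P,\av}$ and $\decode_{P,\av}$ are mutually inverse on the set of roots, and then read both claims of the theorem off this. Fix $P\in\Circuits(n,d)$ and $\av\in\formalZ^n$ with $P(\av)\neq0$ (if no such $\av$ exists then $P\equiv0$, which is the first disjunct, and there is nothing more to prove; otherwise the argument below applies verbatim to every non-root $\av$). Fix an arbitrary $\bv=(b_1,\dots,b_n)\in Z_{P,q}$. It suffices to prove that $\encode_{P,\av}(\bv)$ lands in $C_{n,d,q}$ and that $\decode_{P,\av}(\encode_{P,\av}(\bv))=\bv$: surjectivity of $\decode_{P,\av}$ onto $Z_{P,q}$ then follows immediately, since each $\bv\in Z_{P,q}$ is the $\decode_{P,\av}$-image of its own code $\encode_{P,\av}(\bv)$.

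First I would trace $\encode_{P,\av}(\bv)$. Since $P(\av)\neq0$ and $P(\bv)=0$, the initial checks pass and the algorithm enters the loop over $k$. That a first $k\in[n]$ satisfying the hybrid conditions~\eqref{eqn:hybridthis} exists follows from a sharply bounded minimization principle, available in $\Sonetwo$ because $n\in\Log$, applied to the $\PV$-computable map $j\mapsto P(b_1,\dots,b_j,a_{j+1},\dots,a_n)$: this map is nonzero at $j=0$ and zero at $j=n$, so its least zero $k$ has nonzero predecessor, which is precisely~\eqref{eqn:hybridthis}. Set $Q(\formvar):=P(b_1,\dots,b_{k-1},\formvar,a_{k+1},\dots,a_n)$, so $Q(a_k)\neq0$ and $Q(b_k)=0$. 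The crucial point is degree control: since $P$ has syntactic individual degree at most $d$ on its $k$-th input and we plug integer constants into all other inputs, the univariate circuit $Q$ computes a polynomial of degree at most $d$ in $\formvar$. Hence Lemma~\ref{lem:coefficientslemma} (with degree parameter $d$) extracts its coefficient form $\polynomial(Q,d)\in\UniPoly(d)$ with $\polynomial(Q,d)\equiv Q$ on all of $\formalZ$, and this coefficient form is not the zero polynomial since it agrees with $Q$ at $a_k$. Then Lemma~\ref{lem:univariate} applied to $\polynomial(Q,d)$ shows that the roots of $Q$ in $S_q$ number at most $d$; as $b_k$ is among them, its position $i$ in their ordered list satisfies $i\le d$, so $\encode_{P,\av}(\bv)$ does not fall through to a default value and outputs $(k,i,b_1,\dots,b_{k-1},b_{k+1},\dots,b_n)\in[n]\times[d]\times S_q^{n-1}=C_{n,d,q}$.

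Next I would run $\decode_{P,\av}$ on this code. The check $P(\av)\neq0$ passes. Writing $\cv=(c_1,\dots,c_{n-1})=(b_1,\dots,b_{k-1},b_{k+1},\dots,b_n)$, we have $c_j=b_j$ for $j<k$, so the univariate circuit $G(\formvar)=P(c_1,\dots,c_{k-1},\formvar,a_{k+1},\dots,a_n)$ is literally the same circuit as $Q(\formvar)$; in particular $G$ and $Q$ have the same ordered list of roots in $S_q$, and it has at least $i$ entries, so $\decode_{P,\av}$ does not take its default branch. Its output is the $i$-th smallest root of $G$ in $S_q$, equal to the $i$-th smallest root of $Q$ in $S_q$, which by the choice of $i$ is $b_k$; inserting this value at coordinate $k$ of $\cv$ returns $(b_1,\dots,b_{k-1},b_k,b_{k+1},\dots,b_n)=\bv$. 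Thus $\decode_{P,\av}(\encode_{P,\av}(\bv))=\bv$. Every step above involves only $\PV$-definable functions and quantifier-free or sharply bounded reasoning, together with Lemmas~\ref{lem:coefficientslemma} and~\ref{lem:univariate}, so the whole argument is carried out in $\Sonetwo$.

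I expect the main obstacle to be exactly the degree control invoked in the second paragraph: one must verify, within $\Sonetwo$ and against the formal definition of syntactic degree, that the univariate restriction $Q$ (and likewise $G$) computes a polynomial of degree at most $d$, so that the coefficient extraction of Lemma~\ref{lem:coefficientslemma} is legitimate with degree parameter $d$ and Lemma~\ref{lem:univariate} bounds its roots in $S_q$ by $d$ rather than by the possibly larger syntactic total degree of $P$. This is the step where the individual-degree hypothesis on $P$ is cashed in, and it is why the two supporting lemmas were proved with careful bookkeeping of degrees and coefficient bit-complexity. Secondary points that still require care are the formalization of ``the first such $k$'' as a sharply bounded minimization and the index tracking that identifies the circuit $G$ used by $\decode_{P,\av}$ with the circuit $Q$ used by $\encode_{P,\av}$.
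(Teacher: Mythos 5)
Your proposal is correct and follows essentially the same route as the paper's own proof: fix a root $\bv$, trace $\encode_{P,\av}$ to produce $(k,i,\cv)$ via the minimal hybrid index and the degree-$d$ bound on the univariate restriction $Q$, invoke Lemmas~\ref{lem:coefficientslemma} and~\ref{lem:univariate} to justify $i \le d$, observe $G$ and $Q$ coincide syntactically, and conclude $\decode_{P,\av}(\encode_{P,\av}(\bv)) = \bv$, from which surjectivity is immediate. Your added remarks on sharply bounded minimization and on where the individual-degree hypothesis is cashed in are accurate elaborations rather than a divergence.
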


\begin{proof}
  Since for this proof the parameters~$n,d,q,P,\av$ will be fixed, 
  we drop them from the
  notation and write~$S$ and~$Z$ instead of~$S_q$
  and~$Z_{P,q}$, and $\decode$ and $\encode$ 
  instead of~$\decode_{P,\av}$ and~$\encode_{P,\av}$. 
  % Assume~$\av = (a_1,\ldots,a_n)$ is a vector of (encodings of) integers 
  % such that~$P(\av) \not= 0$. 
  Given a root~$\bv = (b_1,\ldots,b_n) \in Z$, our goal is to find a
  triple~$(k,i,\cv) \in [n] \times [d] \times S^{n-1}$ such
  that~$\decode(k,i,\cv) = \bv$; indeed 
  we will show that the choice~$(k,i,\cv) = \encode(\bv)$ works,
  and this proves also the second part of the lemma 
  that~$\decode(\encode(\bv)) = \bv$. 
  
  Fix~$(k,i,\cv) = \encode(\bv)$. To see that~$\decode(k,i,\cv) = \bv$, recall 
  that in the definition of $\encode(\bv)$ the 
  index~$k$ is defined as the smallest index in the sequence~$1,\ldots,n$ such~\eqref{eqn:hybridthis} holds.
  In particular, the univariate circuit~$Q(\formvar)$ from~\eqref{eqn:Qpol}
does not vanish everywhere as it is non-zero on~$a_{k}$. Further,~$Q(\formvar)$
has syntactic degree at most~$d$ since it is a restriction of $P$
with plugged constants from $\va$.
Using the~$\PV$-function of
Lemma~\ref{lem:coefficientslemma}, we can extract the coefficients of
a polynomial~$H = \polynomial(Q,d)$ in~$\UniPoly(d)$ such 
that~$H \equiv Q$; i.e., the
equality~$H(u) = Q(u)$ holds for all integers~$u \in \formalZ$. 
In particular~$H(a_k) = Q(a_k) \not= 0$ and, by~\Cref{lem:univariate}, 
the function~$h_{H,q}$ in surjective
onto the set of roots of~$H(\formvar)$ in~$S$. Also~$b_{k}$ is one 
of these roots
by~\eqref{eqn:hybridthis} and the just mentioned 
fact that~$H \equiv Q$.

Now recall that in the definition of the
function~$\encode(\bv)$, the components~$i$ and~$\cv$ in
its output are defined
to satisfy the following conditions:
\begin{quotation}
\noindent C1. element $b_{k} \in S$ is
the~$i$-th smallest root of~$Q(\formvar)$ in~$S$ \emph{unless}~$i > d$, \\
\noindent C2. vector~$\cv = (c_1,\ldots,c_{n-1})$ is set to~$(b_1,\ldots,b_{k-1},b_{k+1},\ldots,b_n)$.
\end{quotation}
The~$i$-th smallest root of~$Q(\formvar)$ in $S$ is, by virtue
of~$H \equiv Q$, also the $i$-th smallest root
of~$H(\formvar)$ in $S$. Thus, by Lemma~\ref{lem:univariate} and
the fact that~$H(b_k) = Q(b_k) = 0$, this
is the preimage of~$b_{k}$ under~$h_{H,q}$, which is at most~$d$
since the domain of~$h_{H,q}$ is~$[d]$. This means
that, in~C1, the index~$i$ is indeed such that~$b_{k}$ is the~$i$-th smallest
root of~$Q(\formvar)$ in~$S$. We show that~$\decode(k,i,\cv) = \bv$.

Let~$r$ be the~$i$-th smallest root in~$S$ of 
the univariate polynomial~$G(\formvar)$ from~\eqref{eqn:Gpol},
which exists because by~C2 we have~$G \equiv Q$, even 
syntactically as circuits in this case.
By the discussion in the previous paragraph~$b_{k}$ is 
the~$i$-th smallest root of~$Q(\formvar)$ in $S$. By the definition
of the function~$\decode(k,i,\cv)$ we have~$\decode(k,i,\cv) = 
(c_1,\ldots,c_{k-1},r,c_k,\ldots,c_{n-1})$, which by~C1 and~C2 
equals~$(b_1,\ldots,b_{k-1},b_k,b_{k+1},\ldots,b_n)$, also known as~$\bv$.
This completes the proof.
\end{proof}

\section{Hitting Sets and Identity Testing in the Theory}

\subsection{Hitting Sets}

So far we argued that the theory~$\Sonetwo$ proves that
every~$n$-variable algebraic circuit with integer coefficients 
and small degree has
relatively few roots in~$S_q^n$, unless the
polynomial vanishes everywhere. By a standard counting
argument, it follows that for any given bounds~$d$ and~$m$ 
on the degree and the description size of algebraic circuits
with~$n$ variables, there is a set~$(\cv_1,\ldots,\cv_r)$ of~$r =
\poly(n,d,m)$ points in~$S_q^n$, with~$q = \poly(n,d)$, that intersects the set 
of non-roots of every non-vanishing algebraic circuit with those bounds. We 
say that~$(\cv_1,\ldots,\cv_r)$ is a \emph{hitting set
over $S_q$}. Formally:

\begin{definition}[In $\Sonetwo$] \label{def:hit}
Let $\scriptC$ be a definable class of algebraic circuits (\Cref{def:definable classes of algebraic circuits}).
Let~$e$ be a parameter, let $n,d,s,q,r,m \in \Log$ be positive lengths and 
let~$H = (\vh_1,\ldots,\vh_r) \in (S_q^n)^r$ be an $r$-sequence of $n$-vectors
with entries in $S_q$.
We say that~$H$ is a \emph{hitting set for~$\scriptC_e(n,d,s,m)$
over~$S_q$} if and only if for every~$P \in \scriptC(n,d,s,m)$, if
there exists~$\av \in \formalZ^n$ such that~$P(\av) \not= 0$, then there
exists~$i \in [r]$ such that~$P(\vh_i) \not= 0$.  
\end{definition}

If~$H$ as above is not a hitting set, then a \emph{witness} that $H$ is not a hitting 
set is a pair
\begin{equation}
    (P,\av) \in \scriptC_e(n,d,s,m) \times \formalZ^n \label{eqn:witnesspair}
\end{equation}
such that~$P(\av)
\not= 0$ and $P(\vh_j)=0$ for every~$j \in [r]$.  The \emph{size} of the
hitting set~$H$ is the number of
distinct tuples among the~$\vh_i$, which is always at most~$r$.
In what follows we use the results of the previous sections to show
that the counting argument that proves that polynomial-size hitting
sets exist can be formalised in the theory~$\Sonetwo+\dWPHP(\PV)$.

\paragraph{Small witnesses.} We start by showing that, if~$q$ is
sufficiently big, but polynomial in $n$ and $d$, then
the witnesses of failure as in~\eqref{eqn:witnesspair} can always be chosen 
with $\va$ in $S^n_q$.
This is interesting in its own right:

\begin{lemma}[Small Witnesses Exist in $\Sonetwo+\dWPHP(\PV)$] \label{lem:smallwit}
For all $n,d,q \in \Log$ and every algebraic circuit~$P \in \Circuits(n,d)$,
if $P \not\equiv 0$ and $q \geq 2dn$, then there exists $\va \in S^n_q$
such that~$P(\va) \not= 0$.
\end{lemma}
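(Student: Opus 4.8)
The plan is to apply the $\dWPHP(\PV)$ axiom to the function $\decode_{P,\av}$ supplied by Theorem~\ref{lem:encoding-roots}, but first we must produce \emph{some} non-root $\av \in \formalZ^n$ to feed into that theorem, with no assumption that $\av$ lies in $S_q^n$. So the argument proceeds in two stages. In the first stage, assuming $P \not\equiv 0$, we extract an explicit non-root $\av \in \formalZ^n$. This is the step that requires a bit of care: $P \not\equiv 0$ only tells us that \emph{some} integer vector is a non-root, but to reason in $\Sonetwo$ we need a non-root of bounded bit-complexity that the theory can name. The standard way to do this is by an iterated application of the univariate statement: fix all but the last variable to $a_1,\ldots,a_{n-1}$ already chosen, view $P$ as a univariate circuit in $x_n$, extract its coefficient polynomial $H = \polynomial(\cdot,d)$ via Lemma~\ref{lem:coefficientslemma}, note that $H \not\equiv 0$ because the partial evaluation is not identically zero, and then use Lemma~\ref{lem:univariate} (a non-zero degree-$d$ univariate polynomial has at most $d$ roots) to find a value for $x_n$ in, say, $\{0,1,\ldots,d\}$ that is a non-root. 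Peeling off one variable at a time, with a $\Pi^b_1$-induction on the number of variables fixed so far, yields a non-root $\av$ with all coordinates in $\{0,\ldots,d\}$; in particular $\av$ has small bit-complexity, and certainly $P(\av) \ne 0$.

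In the second stage we have a non-root $\av$ in hand, so Theorem~\ref{lem:encoding-roots} applies: the $\PV$-function $\decode_{P,\av} : C_{n,d,q} \to S_q^n$ is surjective onto $Z_{P,q}$, the set of roots of $P$ in $S_q^n$. Now I would invoke $\dWPHP(\decode)$ with the appropriate parameter (encoding $P$, $d$, $q$, and $\av$ into the parameter slot $c$ of the $\PV$-symbol, and using the universal function $\evalboolean$ so that a single instance of the scheme suffices). The domain $C_{n,d,q} = [n]\times[d]\times S_q^{n-1}$ has size $n \cd d \cd q^{n-1}$, and the codomain $S_q^n$ has size $q^n$. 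The hypothesis $q \geq 2dn$ gives $q^n \geq 2dn \cd q^{n-1} \geq 2 \cd |C_{n,d,q}|$, so the $\dWPHP$ inequality $b \geq 2a$ is met with $a = |C_{n,d,q}|$ and $b = q^n = |S_q^n|$. Hence there is a point $\va_0 \in S_q^n$ not in the range of $\decode_{P,\av}$; since $\decode_{P,\av}$ is surjective onto $Z_{P,q}$, this $\va_0$ is not a root of $P$, i.e.\ $P(\va_0) \ne 0$, and $\va_0 \in S_q^n$ as required.

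The one bookkeeping subtlety is matching the abstract form $\dWPHP^a_b(f_c)$ with the concrete sets here: $C_{n,d,q}$ and $S_q^n$ are not literally initial segments $[a]$ and $[b]$, so one needs the standard $\PV$-computable bijections between $C_{n,d,q}$ and $[n d q^{n-1}]$ and between $S_q^n$ and $[q^n]$ (these exist since $d,q \in \Log$ and $n$ is a length, so $q^n$ exists as a number), and then transport $\decode_{P,\av}$ across these bijections before applying the axiom. All of this is routine in $\Sonetwo$. The main obstacle, as flagged above, is Stage~1: making sure the extracted non-root has bounded bit-complexity so that the whole argument stays inside $\Sonetwo$, rather than merely asserting ``by $P \not\equiv 0$ pick a non-root''. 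The iterated univariate argument, using Lemmas~\ref{lem:coefficientslemma} and~\ref{lem:univariate} together with a short induction on the number of coordinates already fixed, is exactly what handles this.
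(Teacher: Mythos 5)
Your Stage~2 is exactly the paper's proof: apply $\dWPHP$ to $\decode_{P,\va}$ viewed as a $\PV$-function from $[dnq^{n-1}]$ to $[q^n]$, observe that $q \geq 2dn$ gives $b \geq 2a$, and conclude a point of $S_q^n$ escapes the range $Z_{P,q}$, hence is a non-root. The bookkeeping about encodings is also handled in the paper as you describe. So the second half of your argument is fine and matches.

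The problem is Stage~1, which rests on a misconception and, as written, does not go through. You worry that ``to reason in $\Sonetwo$ we need a non-root of bounded bit-complexity that the theory can name.'' This is not so. The hypothesis $P \not\equiv 0$ is the negation of $\forall\, \av{\in}\formalZ^n\ P(\av){=}0$, i.e.\ it is an existential statement; existential elimination gives a free variable $\av$ with $P(\av) \ne 0$, and that is all the paper uses. The axiom $\dWPHP(f)$ is a \emph{universal closure} over the parameter $c$ of $f_c$, so the instance $\dWPHP^a_b(f_{P,\va,n,d,q})$ is available for \emph{any} $\av$, of arbitrary magnitude; nothing in the argument needs $\av$ to be small. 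Theorem~\ref{lem:encoding-roots} likewise makes no size assumption on $\av$. Your entire first stage is therefore unnecessary.

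Worse, the sketch of Stage~1 is not a proof. You write ``note that $H \not\equiv 0$ because the partial evaluation is not identically zero,'' but that is precisely the inductive invariant you would need to establish and maintain, not something you get for free. To peel off $x_i$ by a univariate argument you must already know that $P(a_1,\ldots,a_{i-1},x_i,\ldots,x_n)$ is a non-zero \emph{multivariate} polynomial; this is a statement involving an unbounded universal quantifier over $\formalZ^{n-i+1}$, so the invariant is not $\Pi^b_1$, and more fundamentally, propagating it from step to step without representing $P$ as an explicit sum of monomials is exactly the difficulty the paper identifies with the textbook Schwartz--Zippel induction. (One can salvage a version of your idea by a hybrid argument against a fixed non-root $\vb$: at step $i$ consider $P(a_1,\ldots,a_{i-1},x_i,b_{i+1},\ldots,b_n)$, which is univariate and non-vanishing at $x_i = b_i$. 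But you did not say this, and in any case it is not what the paper does.) Finally, a sanity check that should have raised a flag: if Stage~1 really produced a non-root in $\{0,\ldots,d\}^n$, that set is already contained in $S_q^n$ since $q \geq 2dn > d$, so Stage~2 and the appeal to $\dWPHP$ would be entirely superfluous. The proposal's two stages do not fit together, which is a sign the first stage was not thought through. The paper's proof consists only of your Stage~2, applied directly to the $\av$ given by the hypothesis.
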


\begin{proof}
Assume $P \not\equiv 0$, so there exists $\va \in \formalZ^n$ such that $P(\va) \not= 0$.
We find an~$\va_0 \in S_q^n$ also with~$P(\va_0) \not= 0$.
By Theorem~\ref{lem:encoding-roots}, the function~$\decode_{P,\va} : C_{n,d,q} \rightarrow S_q^n$
is surjective onto~$Z_{P,q}$. Let~$b = q^n$ and note that the parameters~$q,n$ 
allow us to code each point~$\va_0 \in S_q^n$ as an~$n$-tuple of elements 
in~$\{0,\ldots,q-1\}$. Thus, the co-domain~$S_q^n$ of~$\decode_{P,\va}$ can be identified
with the set~$[q^n]$. Similarly, the parameters~$n,d,q$ allow us to identify  
the domain~$C_{n,d,q}$ with the set~$[dnq^{n-1}]$. Thus, 
the function~$\decode_{P,\va}$ can
be thought of as the parameterization of a~$\PV$-function~$f$ such that:
\begin{equation}
    f_{P,\va,n,d,q} : [a] \to [b] \;\;\;\;\text{ where }\;\;\;\;
    \begin{array}{lllll} a & := & dnq^{n-1}, \\ b & := & q^n \,.\end{array} \label{eqn:specone}
\end{equation}
Now note that the assumption $q \geq 2dn$ implies $b \geq 2a \geq 2$. Since each
element of $[b]$ codes an element of $S_q^n$, and each element of $C_{n,d,q}$ is coded
by an element of $[a]$, by $\dWPHP^a_b(f_{P,\va,n,d,q})$, there exists 
$\va_0 \in S_q^n$ that is outside the range~$Z_{P,q}$ of~$\decode_{P,\va}$. Since~$Z_{P,q}$
is the set of all roots of $P$ in $S_q^n$, we get that $\va_0$ is a non-root
of~$P$ that belongs to~$S_q^n$, as needed.
\end{proof}

As stated earlier, 
one consequence of Lemma~\ref{lem:smallwit} is that whenever $H \subseteq S_q^n$ fails 
to be a hitting set in the sense of~Definition~\ref{def:hit}, if $q \geq 2dn$
then not only is there a witness of this failure as in~\eqref{eqn:witnesspair}, but 
there is even a witness of the following form:
\begin{equation}
    (P,\va) \in \mathscr{C}(n,d,s,m) \times S_q^n \label{eqn:restrictedwit}
\end{equation}
Next we show how to use this to get hitting sets with one more application of $\dWPHP$.

\paragraph{Hitting sets exist.}
Let $\scriptC$ be a definable class of algebraic circuits,
let $e$ be a parameter, and 
let~$n,s,d,q,r,m$ be positive integers with~$q > d$ and~$s \geq n$.
Consider the function
\begin{equation}
g_{\scriptC,e,n,d,s,q,r,m} : 
\scriptC_e(n,d,s,m) \times S_q^n \times ([n] \times [d] \times S_q^{n-1})^r \to (S_q^n)^r 
\label{eqn:specification}
\end{equation}
defined by
\begin{equation}
(P,\av,\cv_1,\ldots,\cv_r)
\mapsto
(\decode_{P,\av}(\cv_1),
\ldots,\decode_{P,\av}(\cv_r)), \label{eqn:defofg}
\end{equation}
where $\cv_1,\ldots,\cv_r \in [n] \times [d] \times S_q^{n-1}$ are candidate
codes for roots of $P$ in $S_q^n$.
Recall that if $P(\av) = 0$,
then the definition of $\decode_{P,\av}(\cv)$ is such that its output
is $\zerov$, so that $g(P,\av) = (\zerov,\ldots,\zerov)$ in this case.
Here, the all-zero tuple~$(\zerov,\ldots,\zerov)$ is
used as a \emph{don't
care} value indicating that~$\av$ is not serving the purpose of 
certifying that~$P$ does not vanish everywhere. 
% 
% On the other hand, it is 
% chosen to consist of a single point to guarantee
% that it is \emph{not} a hitting set when this fails.

\begin{lemma}[In~$\Sonetwo + \dWPHP(\PV)$] \label{lem:outsiderange}
For every definable class $\scriptC$ of algebraic circuits, every parameter $e$
for its decoding function,
all~$n,d,s,q,r,m \in \Log$, and every~$H \in (S_q^n)^r$, if~$q \geq 2dn$ 
and~$H \not\in \Img(g)$, where~$g =
g_{\scriptC,e,n,d,s,q,r,m}$, then~$H$ is a hitting set for~$\scriptC_e(n,d,s,m)$
over~$S_q$.
\end{lemma}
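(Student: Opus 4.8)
The plan is to prove the contrapositive: assume $H = (\vh_1,\ldots,\vh_r)$ is \emph{not} a hitting set for $\scriptC_e(n,d,s,m)$ over $S_q$, and derive that $H \in \Img(g)$. By Definition~\ref{def:hit}, the failure of $H$ to be a hitting set yields a circuit $P \in \scriptC_e(n,d,s,m)$ with $P \not\equiv 0$ (it has a non-root somewhere in $\formalZ^n$) such that $P(\vh_j) = 0$ for all $j \in [r]$. Since $P \not\equiv 0$ and $q \geq 2dn$, Lemma~\ref{lem:smallwit} applies and gives a witness $\va \in S_q^n$ with $P(\va) \not= 0$; equivalently, this is a witness of the restricted form~\eqref{eqn:restrictedwit}. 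This is the only place the hypothesis $q \geq 2dn$ is used, and it is precisely the reason the small-witnesses lemma was isolated earlier.

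Next I would use $\va$ to decode the $\vh_j$. Since $P(\va) \not= 0$, Theorem~\ref{lem:encoding-roots} tells us that $\decode_{P,\va}$ is surjective onto $Z_{P,q}$, the set of roots of $P$ in $S_q^n$. Each $\vh_j$ satisfies $P(\vh_j) = 0$ and lies in $S_q^n$, so $\vh_j \in Z_{P,q}$; by surjectivity there is a code $\cv_j \in [n] \times [d] \times S_q^{n-1} = C_{n,d,q}$ with $\decode_{P,\va}(\cv_j) = \vh_j$. Concretely one can take $\cv_j := \encode_{P,\va}(\vh_j)$, and the ``inverts $\encode$ on $Z_{P,q}$'' clause of Theorem~\ref{lem:encoding-roots} gives $\decode_{P,\va}(\encode_{P,\va}(\vh_j)) = \vh_j$ directly, avoiding any appeal to a choice principle. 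To make this step fully formal in $\Sonetwo$, one forms the sequence $(\cv_1,\ldots,\cv_r)$ by a single $\PV$-function applied componentwise — $\cv_j = \encode_{P,\va}(\vh_j)$ is polynomial-time computable from $(P,d,q,\va;\vh_j)$ since $d,q$ are in unary — so sequence coding of polynomial length is unproblematic.

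Finally, unwinding the definition~\eqref{eqn:defofg} of $g = g_{\scriptC,e,n,d,s,q,r,m}$, we get
\begin{equation*}
g(P,\va,\cv_1,\ldots,\cv_r) = (\decode_{P,\va}(\cv_1),\ldots,\decode_{P,\va}(\cv_r)) = (\vh_1,\ldots,\vh_r) = H,
\end{equation*}
so $H \in \Img(g)$, contradicting the hypothesis $H \not\in \Img(g)$. This shows $H$ must be a hitting set. I expect the only mildly delicate point to be bookkeeping: checking that $P$ indeed lies in $\scriptC_e(n,d,s,m)$ so that the tuple $(P,\va,\cv_1,\ldots,\cv_r)$ is a legitimate element of the domain of $g$ in~\eqref{eqn:specification} (this is immediate from Definition~\ref{def:hit}, which quantifies $P$ over that slice), and that the side conditions $q > d$ and $s \geq n$ assumed when defining $g$ are compatible with $q \geq 2dn$ and the standing conventions. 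Note no use of $\dWPHP$ is needed in this direction — the axiom was already spent inside Lemma~\ref{lem:smallwit}; the rest is just the surjectivity half of the Schwartz-Zippel encoding.
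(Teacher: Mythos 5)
Your proposal is correct and follows essentially the same route as the paper's proof: contrapositive, invoke Lemma~\ref{lem:smallwit} (via $q \geq 2dn$) to move the witness into $S_q^n$, then apply the surjectivity of $\decode_{P,\va}$ from Theorem~\ref{lem:encoding-roots} to produce codes for the $\vh_j$, and finally observe that $g$ maps $(P,\va,\cv_1,\ldots,\cv_r)$ to $H$. Your explicit use of $\encode_{P,\va}$ to exhibit the codes (rather than just invoking surjectivity existentially) is a small, sensible tightening, and your note that $\dWPHP$ is only consumed inside Lemma~\ref{lem:smallwit} is accurate.
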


\begin{proof}
We prove the contrapositive statement. Assume that~$H
= (\dv_1,\ldots,\dv_r)$ is not a hitting set and let~$(P,\av)$ witness
it as in~\eqref{eqn:witnesspair}: $\av \in \mathbb{Z}^n \setminus
Z_{P,q}$ and $\dv_j \in Z_{P,q}$ for every $j \in [r]$.
Since~$q \geq 2dn$, by~Lemma~\ref{lem:smallwit} we may assume
that $\va$ is in $S_q^n$; i.e., the witness $(P,\va)$ is as
in~\eqref{eqn:restrictedwit}: $\va \in S_q^n \setminus Z_{P,q}$ and
$\vd_j \in Z_{P,q}$ for every $j \in [r]$.
By Lemma~\ref{lem:encoding-roots}, there 
exists~$(\cv_1,\ldots,\cv_r) \in ([n] \times [d] \times
S_q^{n-1})^r$ such that~$\decode_{P,\av}(\cv_j) = \dv_j$ for every~$j
\in [r]$.  Hence,~$g(P,\av,\cv_1,\ldots,\cv_r) =
(\dv_1,\ldots,\dv_r)$ by the definition of~$g$
in~\eqref{eqn:defofg}. This shows that~$H$ is in the range of~$g$.
\end{proof}

We intend to apply $\dWPHP(g)$ for the function~$g$ in~\eqref{eqn:specification}.
To do so we need to view~$g$ as a function
from~$[a]$ to~$[b]$ for appropriate $a$ and $b$ such that $b \geq 2a \geq 2$.
In other words, we need to specify the numeric encodings for the
elements in the domain and the co-domain of~$g$. 
For the co-domain,
the parameters~$n,q,r$ allow us to code each
element $H \in (S_q^n)^r$ as an~$nr$-tuple of elements
in~$\{0,\ldots,q-1\}$, so the co-domain can be identified with the
set~$[q^{nr}]$. Recall that the parameters~$n,r$ are lengths,
i.e., elements of~$\Log$, so~$q^{nr}$ exists. For the
domain, the parameters $e,n,d,s,q,r,m$ allow us to code each
element~$(P,\av,\cv_1,\ldots,\cv_r) \in \scriptC_e(n,d,s,m)
\times S_q^n \times ([n] \times [d] \times S_q^{n-1})^r$ as a number in~$[2^m
  n^r d^r q^{(n-1)r+n}]$. Again this quantity exists because the
parameters $n,r,m$ are lengths; i.e., elements of~$\Log$. Summarizing, the
specification~\eqref{eqn:specification} becomes
  \begin{align}
  g_{\scriptC,e,n,d,s,q,r,m} : [a] \to [b] \;\;\;\;\text{ where } \;\;\;
  \left. { \begin{array}{lll}
  a & := & 2^m n^r d^r q^{(n-1)r+n}. \\
  b & := & q^{nr}. \end{array} }\right. \label{eqn:uandv}
  \end{align} 
The fact that every element in $[b]$ denotes an element of the co-domain 
will be used in the proof of the next theorem. 

\begin{theorem}[Small Hitting Sets Exist in~$\Sonetwo + \dWPHP(\PV)$] \label{lem:hittingset} For
  every definable class $\scriptC$ of algebraic circuits, every parameter $e$ for
  its decoding function,
  and all~$n,d,s,q,r,m \in \Log$ such that~$q \geq 2dn$ and~$r > m + n|q|$, there
  exists~$H \in (S_q^n)^r$ such that~$H$ is a hitting set
  for~$\scriptC_e(n,d,s,m)$ over~$S_q$.
\end{theorem}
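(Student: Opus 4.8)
The plan is to obtain the hitting set $H$ directly from a single application of the $\dWPHP$ axiom to the $\PV$-function $g := g_{\scriptC,e,n,d,s,q,r,m}$ defined in~\eqref{eqn:specification}. By the discussion preceding the theorem, under the natural base-$q$ encodings of its domain and co-domain this function can be viewed as a $\PV$-function $g : [a] \to [b]$ with $a$ and $b$ as in~\eqref{eqn:uandv}, namely $a = 2^m n^r d^r q^{(n-1)r+n}$ and $b = q^{nr}$. The crux is that the two hypotheses $q \geq 2dn$ and $r > m + n|q|$ are exactly what is needed to guarantee $b \geq 2a \geq 2$, so that the instance $\dWPHP^a_b(g)$ of $\dWPHP(\PV)$ applies; its conclusion is a point of $[b]$ outside $\Img(g)$, which, by \Cref{lem:outsiderange}, is (the code of) a hitting set for $\scriptC_e(n,d,s,m)$ over $S_q$.

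First I would discharge the arithmetic inequality $b \geq 2a$. Since $nr - ((n-1)r+n) = r-n$, this amounts to $q^{r-n} \geq 2^{m+1} n^r d^r$. From $q \geq 2dn$ we get $q^r \geq 2^r n^r d^r$, hence $q^{r-n} \geq 2^r n^r d^r / q^n$; and since $q < 2^{|q|}$ gives $q^n \leq 2^{n|q|}$, the hypothesis $r \geq m + n|q| + 1$ yields $2^r / q^n \geq 2^{\,r - n|q|} \geq 2^{m+1}$, which proves the claim. Throughout one must keep in mind that the (exponentially large) numbers $a$ and $b$ genuinely exist in $\Sonetwo$: this is because $n,r,m,|q| \in \Log$ and $\Log$ is provably closed under multiplication, so $n|q|$, $nr$, $(n-1)r+n$ are lengths and the relevant powers of $q$, being bounded by the corresponding powers of $2^{|q|}$, exist. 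The degenerate cases $n=0$ or $d=0$ are trivial and I would dispose of them separately (then $\Img(g)$ is empty, or every non-zero circuit in the class is nowhere zero, so any $H$ works), leaving the main case $n,d \geq 1$, where $q \geq 2dn \geq 2$ and $a \geq 1$.

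Next I would invoke $\dWPHP^a_b(g)$: since $b \geq 2a$ and $a \geq 1$, there is a point $H^* \in [b]$ with $g(x) \neq H^*$ for every $x \in [a]$. As noted in the paragraph deriving~\eqref{eqn:uandv}, the base-$q$ encoding identifies $[b] = [q^{nr}]$ with $(S_q^n)^r$, so $H^*$ is the code of a genuine $r$-sequence $H = (\dv_1,\ldots,\dv_r) \in (S_q^n)^r$, and, this identification being a bijection, $H \notin \Img(g)$. Since $q \geq 2dn$, \Cref{lem:outsiderange} then applies to $H$ and concludes that $H$ is a hitting set for $\scriptC_e(n,d,s,m)$ over $S_q$, as required. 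Note that this is the second use of $\dWPHP$ in the development — the first being the one hidden inside \Cref{lem:smallwit}, hence inside \Cref{lem:outsiderange}.

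The only genuinely delicate step is this arithmetic together with the bookkeeping that the large quantities involved exist in the theory; the combinatorial heart of the argument has already been carried out in \Cref{lem:encoding-roots} (correctness of the encoding scheme) and \Cref{lem:outsiderange} (a point outside $\Img(g)$ is a hitting set). A minor additional point to state carefully is that $g$, being a total $\PV$-function, must be arranged to map all of $[a]$ into $[b]$ — sending codes that do not correspond to valid domain triples to an arbitrary fixed element of $[b]$ — so that the $\dWPHP$ axiom literally applies in the form stated, and so that a point avoided by this total $g$ is in particular a point avoided on the intended domain.
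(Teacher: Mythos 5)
Your proposal is correct and follows essentially the same route as the paper's own proof: apply $\dWPHP$ to the function $g$ from~\eqref{eqn:specification}, verify $b \geq 2a$ from the two hypotheses via the same arithmetic (presented slightly differently but equivalent to the paper's $(q/(nd))^r \geq 2^{m+1}q^n$ computation), then invoke Lemma~\ref{lem:outsiderange}. Your explicit handling of the degenerate cases $n=0$, $d=0$ and the remark about totality of $g$ are sensible housekeeping details that the paper leaves implicit.
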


\begin{proof}
  To show the existence of~$H = (\dv_1,\ldots,\dv_r)$
  as claimed in the lemma we apply the dual weak pigeonhole
  principle~$\dWPHP(g)$ on the~$\PV$-function~$g$ in~\eqref{eqn:specification},
  with the parameters indicated there.
  By the discussion preceding the lemma we have~$g :
  [a] \to [b]$ where~$a$ and~$b$ are as in~\eqref{eqn:uandv}.
  Comparing~$a$ to~$b$, we see that
  the assumptions~$q \geq 2nd$ and~$r > m + n|q|$ 
  yield~$b \geq 2a$. Indeed,~$m,n,|q|$ are integers, so~$r \geq m+1+n|q|$.
  Therefore, using $q \geq 2nd$, we get $(q/(nd))^r \geq 2^{m+1+n|q|} 
  \geq 2^{m+1} q^n$. Multiplying both sides by $q^{nr}$ and rearranging we get $b \geq 2a$. 
  Recalling that each element of~$[b]$ codes
  an element of~$(S_q^n)^r$, by~$\dWPHP^a_b(g)$ 
  there exists~$H \in (S_q^n)^r$ that is outside the range
  of~$g$.  By Lemma~\ref{lem:outsiderange},
  this~$H$ is a hitting set for~$\scriptC_e(n,d,s,m)$ over~$S_q$, as
  needed.
\end{proof}

\subsection{$\PIT$ in $\coRP$ and in $\Ppoly$}

Let \emph{PIT} stand for \emph{Polynomial
Identity Testing}, which is the following computational problem:
\begin{quote}
  \PIT: \emph{Given positive integers $n,d,s$ written in unary notation
  and given an algebraic circuit~$C$ of representation size~$s$,
  with~$n$ variables, integer constants, and syntactic individual degree~$d$,
  does it compute the identically zero polynomial?}
\end{quote}
The requirement to have~$d$ written in unary notation in the input
is a way to enforce that the polynomial computed by the circuit
has \emph{polynomial degree}; the size of the input is inflated at
the same rate as the degree. Similarly, since our definition of
degree includes the parametric inputs, the size of the input
is inflated at the same rate as the bit-complexity of the constants
it computes.

In the statement of the problem, the phrase ``identically zero polynomial'' can be
interpreted in two ways: syntactically and semantically. In the
syntactic interpretation, it is the polynomial which has zero coefficient
on each monomial when written explicitly in its unique representation as a finite
linear combination of monomials. In the semantic interpretation, it is the polynomial
that evaluates to zero on all points in $\mathbb{Z}^n$. It is well known that
both interpretations yield the same computational problem, but note that
weak theories such as $\Sonetwo$ or even $\Sonetwo + \dWPHP(\PV)$ may
not be able to prove this since the standard proof involves exponential-time
computations.

The Schwartz-Zippel Lemma 
as stated in Theorem~\ref{thm:SZ-meta}
implies that the semantic interpretation of~\PIT\ is not only decidable
but even in~$\coNP$: it
suffices to check that~$C$ evaluates to zero on all 
points~$\av = (a_1,\ldots,a_n)$ of the cube~$\{0,\ldots,q-1\}^n$,
where~$q = 2nd$. For the sake of
formalisation in weak theories, 
let us state what we call the \emph{explicitly~\coNP\ formulation} of the problem:
\begin{quote}
  \PITCONP: \emph{Given positive
  integers~$n,d,s$ written in unary notation and given an algebraic
  circuit~$C$ of representation size~$s$, with~$n$ variables, integer constants, 
  and syntactic individual degree~$d$, does it 
  evaluate to~$0$ on all
  points~$(a_1,\ldots,a_n)$ of the cube~$\{0,\ldots,q-1\}^n$ of side-length~$q = 2nd$?}
\end{quote}
This version of the problem is in~$\coNP$ yet the
Schwartz-Zippel Lemma shows that it is even in~$\coRP$ and hence
in~$\Ppoly$. Indeed, if~$P(\av) \not= 0$ for at least one point~$\av \in
\{0,\ldots,q-1\}^n$, then~$P(\av) \not= 0$ for a fraction of at
least~$1-nd/q \geq 1/2$ of the points~$\av \in \{0,\ldots,q-1\}^n$. 
This means that an evaluation at a random
point in~$\{0,\ldots,q-1\}^n$ has chance at least $1/2$
of detecting that~$C$ does not always evaluate to zero. This
is~$\coRP$. The results in this paper show that these
statements are available in the theory~$\Sonetwo + \dWPHP(\PV)$.
For $\Ppoly$ the definition of what this means is straightforward.
Let Boolean circuits be encoded in the theory in
a completely analogous way to how algebraic circuits
are encoded in the theory. The only difference is that
the evaluation function for Boolean circuits
takes only binary strings as inputs for its variables or parameters,
and interprets its gates in the Boolean sense, with $+$ 
as disjunction, and $\times$ as conjunction.

\begin{definition}[In~$\Sonetwo$]
Let $c \geq 1$ be a standard constant
and let $A$ be a definable set of strings 
given by a formula $\varphi(x)$ without
parameters, i.e., with all free variables indicated. The
set $A$ is in~$\SIZE(n^c)$ if for every $n \in \Log$ there
exists a Boolean circuit $C$ with $n^c$ gates, with $n$ inputs and $1$ output, 
such that for every~$x \in \{0,1\}^n$ we have that~$C(x)=1$ if and only
if~$\varphi(x)$ holds.
\end{definition}

Membership of $\PIT$ in $\Ppoly$ is now
an immediate consequence of the existence of
hitting sets. We state this explicitly.
Let $\PIT$ denote the $\Pi_1$-definable set of
strings that encode the~YES instances of the computational
problem defined earlier: the strings
encode~$4$-tuples~$(C,n,d,s)$
where~$n,d,s$ are integers presented in unary notation
(i.e., as three strings of the form $100\cdots 0$ with 
lengths $n,d,s$),
and $C$ is an algebraic circuit in $\Circuits(n,d,s)$
such that $C \equiv 0$ holds; i.e., 
the~$\Pi_1$-formula~$\forall \va{\in}\formalZ^n\ C(\va){=}0$ holds. 
Similarly, let~$\PITCONP$ denote the~$\Pi^b_1$-definable
set of strings that encode the~YES instances of the
explicitly~$\coNP$ version of~$\PIT$.
We say that a theory proves that a definable set~$A$ 
of strings is in~$\Ppoly$
if there exists a standard constant~$c > 0$ such
that it proves that~$A$ is in~$\SIZE(n^c)$.

\begin{theorem}[$\PIT$ is in $\Ppoly$ in $\Sonetwo+\dWPHP(\PV)$]
The theory $\Sonetwo+\dWPHP(\PV)$ proves that both~$\PIT$ 
and $\PITCONP$ are in~$\Ppoly$. Indeed, there is a standard
constant $c \geq 1$, such that it proves the 
following stronger statement proving both claims simultaneously:
For all~$n,s,d,q \in \Log$ such that~$q \geq 2nd$, there
exists a Boolean circuit~$C$ with~$(n+d+s+q)^c$ gates such that for
every~$P \in \Circuits(n,d,s)$ the following hold:
\begin{enumerate} \itemsep=0pt
\item if $C(P)=1$, then for every $\av \in \formalZ^n$ we have $P(\av) = 0$,
\item if $C(P)=0$, then there exists $\av \in S_q^n$ such that $P(\av) \not= 0$.
\end{enumerate}
\end{theorem}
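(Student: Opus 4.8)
The plan is to prove the ``stronger statement'' — from which the two $\Ppoly$ claims follow by routine assembly — by instantiating the hitting-set existence theorem (Theorem~\ref{lem:hittingset}) on the full class $\Circuits$, hard-wiring the resulting polynomial-size hitting set into a polynomial-time procedure that evaluates the input circuit on each hitting-set point, and finally invoking the standard fact, provable in $\Sonetwo$, that polynomial-time predicates have polynomial-size Boolean circuits.

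First I would present $\Circuits$ as a definable class of algebraic circuits in the sense of Definition~\ref{def:definable classes of algebraic circuits}: let the decoding function $g$ be the one that, on input $n,d,s$ in unary and $x\in\{0,1\}^m$, parses $x$ as (the code of) an algebraic circuit and returns it if it lies in $\Circuits(n,d,s)$, and returns a fixed default element of $\Circuits(n,d,s)$ otherwise. Since every member of $\Circuits(n,d,s)$ has a code of length bounded by a standard-degree polynomial in $s$, there is a standard constant making $m:=\poly(s)$ large enough that the slice $\scriptC_e(n,d,s,m)$ is exactly $\Circuits(n,d,s)$. Now set $r:=m+n|q|+1$, so that $r>m+n|q|$, and apply Theorem~\ref{lem:hittingset} with this class, the given hypothesis $q\geq 2dn$, and the parameters $n,d,s,q,r,m$. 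This yields $H=(\vh_1,\ldots,\vh_r)\in(S_q^n)^r$ that is a hitting set for $\Circuits(n,d,s)$ over $S_q$, of total bit-length $O(rn|q|)=\poly(n,d,s,q)$. This is the only step of the argument that uses $\dWPHP(\PV)$.

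Next I would define the Boolean circuit $C$, which depends on $n,d,s,q$ and has $H$ hard-wired in. On input (the code of) $P\in\Circuits(n,d,s)$, it computes $P(\vh_1),\ldots,P(\vh_r)$ by the standard evaluation procedure — this is the $\PV$-function $\evalarithmetic$, and since $d$ is fixed and the $\vh_i$ lie in $S_q^n$, each value $P(\vh_i)$ has bit-complexity $\poly(n,d,s,|q|)$ — tests each of these $r$ integers for equality with $0$, and outputs the conjunction of the $r$ outcomes. The whole procedure runs in time $\poly(n,d,s,q)$, so $\Sonetwo$ proves it is computed by a Boolean circuit with at most $\poly(n,d,s,q)\leq(n+d+s+q)^c$ gates for a suitable standard constant $c$, chosen uniformly in $n,d,s,q$. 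It then remains to check items~1 and~2. Item~2 is immediate: if $C(P)=0$ then $P(\vh_j)\neq 0$ for some $j$, and $\vh_j\in S_q^n$, so $\av:=\vh_j$ witnesses it. Item~1 uses the hitting-set property: if $C(P)=1$ then $P$ vanishes on every $\vh_i$; were there some $\av_0\in\formalZ^n$ with $P(\av_0)\neq 0$, then since $P\in\scriptC_e(n,d,s,m)=\Circuits(n,d,s)$ and $H$ is a hitting set over $S_q$ (Definition~\ref{def:hit}), some $\vh_i$ would satisfy $P(\vh_i)\neq 0$, a contradiction; hence $P(\av)=0$ for all $\av\in\formalZ^n$.

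Items~1--2 also supply the converses needed to conclude: $C(P)=1$ \emph{iff} $P\equiv 0$ on $\formalZ^n$ (item~1 is one direction, the other because the $\vh_i$ are in $\formalZ^n$), and, taking $q=2nd$, $C(P)=1$ \emph{iff} $P$ vanishes on all of $\{0,\ldots,q-1\}^n=S_q^n$ (item~1 restricted to $S_q^n$ is one direction, the other because the $\vh_i$ are in $S_q^n$). So the same $C$ decides both $\PIT$ and $\PITCONP$; passing to the uniform formulation of $\Ppoly$ is then bookkeeping — at a given input length $N$ there are only polynomially many ways to parse the unary-encoded $n,d,s$ (with $q:=2nd\leq N^2$), so one wires together the polynomially many sub-circuits given by the stronger statement, each of size $(n+d+s+q)^c\leq\poly(N)$, selecting the right one from the parse. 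The one genuinely delicate point is the conversion of the evaluate-on-$H$ procedure to a Boolean circuit of the advertised size \emph{inside} the theory, i.e.\ the formalisation in $\Sonetwo$ of ``polynomial time $\Rightarrow$ polynomial-size circuits'' with explicit control of the exponent; this is standard but must be invoked with care. Everything else rests directly on Theorems~\ref{lem:encoding-roots} and~\ref{lem:hittingset}.
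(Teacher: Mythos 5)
Your proof is correct and follows essentially the same route as the paper: apply Theorem~\ref{lem:hittingset} to $\Circuits(n,d,s)$, hard-wire the resulting polynomial-size hitting set into a $\PV$-computable evaluate-on-$H$ procedure, and convert that to a Boolean circuit of the advertised size. The paper's proof is much terser (it sets $q=2nd$, $r=s+n|q|$, and $m=s$, leaving the verification of items 1--2 and the passage to uniform $\Ppoly$ implicit), but the underlying argument is the same.
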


\begin{proof}
  Consider the $\PV$ function that
  given $(C,n,d,s)$ as required in the input for $\PIT$
  and given~$H = (\vh_1,\ldots,\vh_r) \in S_q^n$, evaluates $C$ on all points in $H$
  and outputs $1$ if all evaluations are zero, and outputs $0$ otherwise. 
  Setting~$q=2nd$ and~$r = s+n|q|$, and hardwiring for~$H$ the hitting set 
  for~$\Circuits(n,d,s)$ of Theorem~\ref{lem:hittingset} we get a Boolean 
  circuit $C$ as required, for a standard $c \geq 1$ that depends only
  on the runtime of the $\PV$ function.
\end{proof}

\begin{corollary}[Small Counterexamples for Identities in $\Sonetwo+\dWPHP(\PV)$]
    The theory $\Sonetwo + \dWPHP(\PV)$ proves that $\PIT$ and
    the explicitly~$\coNP$ version of~$\PIT$ are the same problem, i.e.,
    the formulas that define the sets~$\PIT$ and~$\PITCONP$ are equivalent.
\end{corollary}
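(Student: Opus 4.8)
The plan is to prove the two implications separately, relying crucially on the fact that \emph{both} defining formulas are phrased \emph{semantically}, via evaluation of the circuit on integer points (as opposed to via the coefficient representation). This is what lets us sidestep the exponential-time ``syntactic equals semantic'' fact that, as noted in the text, $\Sonetwo + \dWPHP(\PV)$ presumably cannot prove. Throughout, a valid instance of either problem is the same object: a tuple $(C,n,d,s)$ with $n,d,s$ presented in unary and $C \in \Circuits(n,d,s)$; the two problems differ only in the condition placed on $C$.

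For the implication $\PIT \to \PITCONP$ there is nothing to do beyond universal instantiation, and this direction is provable already in $\Sonetwo$ with no use of $\dWPHP$. The $\PIT$-condition is the $\Pi_1$-formula $\forall \av{\in}\formalZ^n\ C(\av){=}0$, while the $\PITCONP$-condition is the $\Pi^b_1$-formula $\forall \av{\in}S_q^n\ C(\av){=}0$ with $q = 2nd$. Since $S_q = \{0,1,\ldots,q-1\} \subseteq \formalZ$ is (quantifier-free) definable as a subset of the integer codes, the theory proves $S_q^n \subseteq \formalZ^n$, and the first formula specializes to the second; note that $C(\av)$ denotes the same $\PV$-term $\evalarithmetic(C,\av,\pv,\totaldegree(C))$ in both cases, so no re-evaluation issue arises.

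For the converse $\PITCONP \to \PIT$ we argue by contraposition, and this is exactly where $\dWPHP(\PV)$ enters, through Lemma~\ref{lem:smallwit}. Assume $(C,n,d,s)$ is a valid instance with $C \not\equiv 0$, i.e., $\exists \av{\in}\formalZ^n\ C(\av){\neq}0$. Put $q := 2nd$; since $n,d \in \Log$ and $\Log$ is provably closed under multiplication, $q \in \Log$, so $S_q^n$ and the relevant instance of $\dWPHP$ make sense, and moreover $q \geq 2dn$ holds (with equality), which is precisely the hypothesis of Lemma~\ref{lem:smallwit}. Applying that lemma to $P := C$ yields a point $\av_0 \in S_q^n$ with $C(\av_0) \neq 0$, witnessing that $C$ fails to vanish on the cube $\{0,\ldots,q-1\}^n$; hence $(C,n,d,s) \notin \PITCONP$. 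Combining the two directions gives, provably in $\Sonetwo + \dWPHP(\PV)$, the equivalence of the two defining formulas.

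There is no real obstacle: the corollary is essentially a repackaging of Lemma~\ref{lem:smallwit} together with the trivial monotonicity observation that vanishing on $\formalZ^n$ implies vanishing on the subcube $S_q^n$. The only point demanding a sliver of care is the arithmetic bookkeeping — confirming that the side-length $q = 2nd$ baked into the definition of $\PITCONP$ satisfies $q \geq 2dn$ and that $q$ exists as a length — both of which follow immediately from $n,d \in \Log$ and the closure properties of $\Log$. One may alternatively derive the converse direction from the $\Ppoly$ theorem just proved: its item~2 already supplies the required non-root in $S_q^n$ whenever $C \not\equiv 0$.
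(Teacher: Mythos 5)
Your proof is correct and follows essentially the same route as the paper. The paper presents the corollary without an explicit proof, treating it as an immediate consequence of the preceding theorem (that $\PIT$ is in $\Ppoly$): item~1 of that theorem gives the direction $\PITCONP \to \PIT$ and item~2 its contrapositive is exactly the small-witness statement. You instead go directly to Lemma~\ref{lem:smallwit}, which is the engine behind item~2, plus the trivial specialization $\formalZ^n \supseteq S_q^n$ for the other direction — and you even note the alternative of citing the $\Ppoly$ theorem at the end. The arithmetic bookkeeping (checking $q = 2nd$ meets the hypothesis $q \geq 2dn$ and is a length) is handled correctly, as is the observation that both defining formulas are semantic so no appeal to syntactic-vs-semantic zero is needed.
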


Stating membership of $\PIT$ in~$\coRP$ within the theory $\Sonetwo+\dWPHP(\PV)$
is much less straightforward, but at least the necessary 
ingredients are now available. 
We refer the reader to Je\v{r}\'abek's seminal 
papers~\cite{Jer04,Jer07} 
where the complexity classes $\BPP$ and $\RP$ are studied 
in~$\PV_1 + \dWPHP(\PV)$, a subtheory of $\Sonetwo+\dWPHP(\PV)$.

\section{Complexity of Finding Hitting Sets} \label{sec:complexity}

\subsection{Hitting Set Axioms}

Let~$\scriptC$ be a definable class of algebraic circuits
with parameterized slices $\scriptC_e(n,d,s,m)$ according
to the definition. To recall, the slice $\scriptC_e(n,d,s,m)$
is the set of algebraic circuits in
the class~$\scriptC$ with parameter $e$ 
that have at most~$n$ indeterminates, syntactic individual
degree at most~$d$, representation size at most $s$,
but description size~$m \leq s$ as members of~$\scriptC$. 
Let the \emph{Hitting Set Axiom for the class~$\scriptC$} be
the following statement:
\begin{equation*}
\begin{array}{lllll}
  \HS(\scriptC) & := &
  \forall e\ \forall n,d,s,q,r,m{\in}\Log\ (q{\geq}2nd \wedge r{>}m{+}n|q| \rightarrow
  \exists H{=}(\vh_1,\ldots,\vh_r){\in}(S_q^n)^r\ \\
  & & \;\;\;\;\; \forall C{\in}\scriptC_e(n,d,s,m)\
  ((\exists{\av}{\in}\formalZ^n\ C(\av){\not=}0)
\rightarrow (\exists i{<}r\ C(\vh_{i+1}){\not=}0))).
\end{array}
\end{equation*}
It is important to note that the \emph{largeness} hypothesis $r > m+n|q|$
does \emph{not} depend on the representation size $s$; it depends only on
the description size $m$, the bit-complexity $|q|$ of the evaluation points, 
and on the number $n$ of indeterminates. The role of the representation size~$s$
in $\scriptC_e(n,d,s,m)$ is to ensure that the circuits of the class
$\scriptC$ can be evaluated in polynomial time, given $n,d,s$ in unary, a
description $x \in \{0,1\}^m$, and an integer assignment for the
variables. When the definable class~$\scriptC$ is simply the class~$\Circuits$ 
of all circuits, then~$m=s$ and the largeness bound is~$r > s+n|q|$.

We write $\HS(\PV)$ for the axiom-scheme formed by all axioms of the 
form~$\HS(\scriptC)$ where~$\scriptC$ is a definable class of algebraic circuits.
To motivate this notation, we observe that if $g$ is the encoding function
of a definable class $\scriptC$, then the axiom $\HS(\scriptC)$ would be
provably equivalent (in~$\Sonetwo$)
to the following slightly longer but essentially identical variant:
\begin{equation*}
\begin{array}{lllll}
  \HS(g) & := &
  \forall e\ \forall n,d,s,q,r,m{\in}\Log\ (q{\geq}2nd \wedge r{>}m{+}n|q| \rightarrow 
  \exists H{=}(\vh_1,\ldots,\vh_r){\in}(S_q^n)^r\ \\
  & & \;\;\;\;\; \forall x{\in}\{0,1\}^m\ \forall C{\in}\Circuits(n,d,s)\ (g_e(n,d,s,x){=}C \rightarrow \\ 
  & & \;\;\;\;\;\;\;\;\;\;\;\;\; ((\exists{\av}{\in}\formalZ^n\ C(\av){\not=}0)
\rightarrow (\exists i{<}r\ C(\vh_{i+1}){\not=}0)))).
\end{array}
\end{equation*}
Conversely, every axiom~$\HS(g)$ with $g$ in $\PV$ is provably equivalent (in~$\Sonetwo$)
to the axiom~$\HS(\scriptC)$ for the definable class~$\scriptC$ that has as encoding
function the modification of $g$, still in $\PV$, that with parameter~$e$ and input $(n,d,s,x)$
outputs a default trivial circuit that computes the constant~$0$ polynomial
if~$g_e(n,d,s,x)$ is not already
a circuit in~$\Circuits(n,d,s)$. 

Let us make turn this notation into a definition:

\begin{definition} \label{def:hspv}
    For every $\PV$-symbol $g$, the \emph{hitting set axiom
    for $g$} is the formula $\HS(g)$. The \emph{hitting set axiom
    scheme}, denoted by $\HS(\PV)$, is the collection of
    all $\HS(g)$ for all $g \in \PV$.
\end{definition}

An immediate consequence of Theorem~\ref{lem:hittingset} is that~$\HS(\PV)$ is a consequence
of~$\dWPHP(\PV)$ over~$\Sonetwo$. The first goal of this section is to prove
the converse:

\begin{theorem}[Reverse Mathematics of Hitting Sets] \label{thm:equivalence}
  The axiom-schemes~$\dWPHP(\PV)$ and~$\HS(\PV)$ are provably equivalent over~$\Sonetwo$.
\end{theorem}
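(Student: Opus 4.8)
The plan is to prove the nontrivial direction: $\HS(\PV)$ implies $\dWPHP(\PV)$ over $\Sonetwo$. By the remarks after Definition~\ref{def:hspv} it suffices to start from an arbitrary $\PV$-function $f$ and some $a,b$ with $b \geq 2a \geq 2$, and to produce, using one of the hitting set axioms, a point $y \in [b]$ outside the range of $f_c : [a] \to [b]$. The construction goes through an auxiliary \emph{definable class of algebraic circuits} $\scriptC_f$ whose $n$-th slice has exactly one circuit per input $x \in [a]$ of $f$, namely a low-degree circuit $C_x$ that vanishes precisely (or at least) on the point of the cube $S_q^n$ coded by $f_c(x) \in [b]$. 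The key design requirement is that a nonzero $C_x$ has a non-root in $S_q^n$ (so that the hitting set axiom is forced to hit it), and that \emph{a hitting set for $\scriptC_f$ yields a point of $S_q^n$ not coded by any $f_c(x)$}, hence an element of $[b] \setminus \Img(f_c)$. The simplest realization uses a product-of-linear-forms circuit: if $f_c(x)$ codes the point $(z_1,\dots,z_n) \in S_q^n$, take $C_x(\formvar_1,\dots,\formvar_n) := \prod_{j=1}^n \prod_{w \in S_q, w \neq z_j}(\formvar_j - w)$, which has individual degree $q-1$, polynomial representation size, vanishes on all of $S_q^n$ \emph{except} $(z_1,\dots,z_n)$, and is certified nonzero by that very point. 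A point $H \subseteq S_q^n$ that hits every $C_x$ must therefore avoid $\{f_c(x) : x \in [a]\}$ as a set of cube points, so any single element of $H$ is the desired avoider — or more carefully, since the axiom gives a sequence $H = (\vh_1,\dots,\vh_r)$, at least one $\vh_i$ lies outside $\Img(f_c)$.

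The main obstacle — and the reason the paper invokes the \emph{amplification} technique of Paris-Wilkie-Woods — is a counting mismatch. The hitting set axiom $\HS(\scriptC_f)$ only guarantees a hitting set of size $r > m + n|q|$, where $m = |a|$ is the description size of the slice (we need $a$ circuits, so $m \approx |a|$, but we want to parameterize so that the $n$-th slice has $\approx 2^n$ circuits, forcing $n \approx |a|$). For this to actually pin down a point outside a range of size $a$, the naive parameters give a hitting set that is too large relative to $b - a$: a set of $r$ points all lying in $\Img(f_c)$ is not immediately contradictory when $r$ could be as large as $a$ and $|\Img(f_c)| \leq a$. So the reduction must first \emph{amplify} $f$: from $f_c : [a] \to [b]$ with $b \geq 2a$, build (via the standard iteration/composition, as in Blum-Micali pseudorandom generators) a new $\PV$-function $f'_c : [a'] \to [b']$ with $b'$ astronomically larger than $a'$ — concretely $b' = (b')$ exponential in the relevant length parameters — so that the polynomial-size hitting set guaranteed by the axiom cannot possibly cover $\Img(f'_c)$. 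The cleanest route, following Korten, is: iterate the ``doubling'' map $O(\log)$ times, or use a tree composition, to turn the factor-$2$ gap into a $\poly$-vs-exponential gap; then the hitting set of size $\poly$ for the associated class is forced to contain a genuine avoider, which can be converted back to an avoider for the original $f_c$ by unwinding the amplification. This unwinding is where one must be careful to stay within $\Sonetwo$ and keep all intermediate objects of polynomial size; the paper defers the routine but fiddly details to Appendix~\ref{app:normalizationandamplification}.

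Assembling the argument: first, reduce $\dWPHP(\PV)$ to $\dWPHP(\eval)$ using the universal $\PV$-function $\eval$ (as noted in the preliminaries), so it suffices to find an avoider for one specific parameterized family; second, normalize and amplify to obtain $f'$ with a super-polynomial domain/range gap; third, define the class $\scriptC_{f'}$ of product-of-linear-form circuits as above, argue in $\Sonetwo$ that it is a definable class of algebraic circuits of polynomial representation size and polynomial (individual) degree, and that each nonzero member has a non-root in $S_q^n$; fourth, apply $\HS(\scriptC_{f'})$ with $q = 2dn$ and $r > m + n|q|$ as the axiom demands, obtaining $H = (\vh_1,\dots,\vh_r)$; fifth, argue that since each $\vh_i$ hits every circuit $C_x$, no $\vh_i$ equals the cube-point coded by $f'_c(x)$ for any $x$, hence each $\vh_i$ (after decoding to $[b']$) lies outside $\Img(f'_c)$; sixth, unwind the amplification to extract an element of $[b] \setminus \Img(f_c)$, establishing $\dWPHP(f_c)$. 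The forward direction (that $\dWPHP(\PV)$ proves $\HS(\PV)$) is already Theorem~\ref{lem:hittingset}. I expect the amplification bookkeeping — choosing the iteration depth, tracking lengths so everything stays in $\Log$, and verifying the inversion step is $\Sonetwo$-provable — to be the substantive work; the circuit-class construction itself is short.
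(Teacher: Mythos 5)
You have a genuine gap at the heart of the reduction. The circuit you propose,
$C_x(\formvar_1,\dots,\formvar_n) := \prod_{j=1}^n \prod_{w \in S_q,\, w \neq z_j}(\formvar_j - w)$,
vanishes at a point $\vec{u} \in S_q^n$ precisely when some $u_j \in S_q \setminus \{z_j\}$; it is therefore nonzero \emph{only} at the single point $(z_1,\dots,z_n)$ and zero everywhere else on $S_q^n$. A hitting set for $\{C_x : x \in [a]\}$ must then \emph{contain} the code of $f_c(x)$ for every $x$ --- the exact opposite of avoiding $\Img(f_c)$. The problem persists after the obvious sign-flip: if you instead make $C_x$ vanish \emph{only} at the point coded by $f_c(x)$ (say $C_x := \sum_j(\formvar_j - z_j)^2$), a hitting set merely needs, for each $x$, \emph{some} entry different from that single root. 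Any two distinct points of $\Img(f_c)$ already form a hitting set for this family, so the axiom places no constraint at all that forces a point off $\Img(f_c)$. And this is not the ``counting mismatch'' you attribute to it: amplifying $f$ to widen the domain/range gap does nothing here, because each circuit in your class has a single root and is hit by essentially any set of size at least two. The construction ``index circuits by preimages of $f$, each vanishing on $f_c(x)$'' cannot work, regardless of parameters.

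The paper's converse direction uses a compression-style, self-referential construction that you are missing. After normalizing and amplifying to $h_e : \{0,1\}^m \to \{0,1\}^{m^3}$, the long output is not there to create a range gap for a counting argument; it is there so that $h_e(x)$ can encode an \emph{entire candidate hitting set} $H = (\vh_1,\dots,\vh_r)$ bit by bit. The definable class consists of circuits $A_{e,x}$ as in \eqref{eqn:circuit}, designed so that $A_{e,x}$ vanishes on every one of the $r$ points of the candidate hitting set encoded by $h_e(x)$, yet is provably nonzero (evaluate at $(2q,\dots,2q)$). If $\dWPHP$ for $h_e$ failed, any $H$ produced by the hitting-set axiom could be encoded as a string $y$, a preimage $x$ with $h_e(x)=y$ would exist, and then $A_{e,x}$ would be a nonzero member of the class that vanishes on all of $H$ --- contradicting that $H$ is a hitting set. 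The sparsity of the class ($2^m$ circuits, each of representation size roughly $m^3 = s$) is precisely what makes the $\HS$ parameters $r > m + n|q|$ workable. Your approach has none of this self-reference and therefore cannot reach a contradiction.
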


The next two sections are devoted to the proof of Theorem~\ref{thm:equivalence}.

That $\Sonetwo + \dWPHP(\PV)$ proves $\HS(\PV)$ was shown in Theorem~\ref{lem:hittingset}.
  Here we show the converse. Fix a~$\PV$-symbol~$f$ for which we want to prove~$\dWPHP(f)$. 
  Let~$a$ and~$b$ be such
  that $b \geq 2a\geq 2$, and let $c$ be a setting for the parameters of~$f$.
  Our goal is to show that there exists $y \in [b]$ such
  that for every $x \in [a]$ we have $f_c(x) \not= y$. 
  
  The first step in the proof
  is to apply an amplification transformation to $f_c$ that, without loss of generality,
  will let us assume that $a = 2^m$ and $b = 2^{m+t}$, for $m = |a-1|$ and a~$t = \poly(m)$
  of our choice. This will let us think of $f_c$ as
  a function~$h_e : \{0,1\}^m \to \{0,1\}^{m+t}$. The goal will then become 
  finding a string $y \in \{0,1\}^{m+t}$ that is outside the range of the function $h_e$.
  
  The amplification technique that we need is standard in the 
  context of the weak pigeonhole principle. For the sake
  of completeness, the details of this argument have been made explicit in 
  Section~\ref{app:normalizationandamplification} of the Appendix.
  Concretely, we summarize the composition of Lemmas~\ref{lem:norm} and~\ref{lem:ampl}
  in the following lemma. In its statement, $\PV_2$ refers to a 
  collection of symbols for the (clocked) $\FP^{\NP}$-machines, just as~$\PV$ is 
  a collection of symbols for (clocked)~$\FP$-machines. Indeed, we need symbols for 
  the~$\FP^{\NP}[\wit,q]$-machines:~$\FP^{\NP}$-machines that get witnesses to
  their $\NP$-oracle queries when they are answered~YES,
  and make at most $q$ queries in every computation path. In general, such
  machines compute multi-output functions (i.e., total relations), instead of just
  functions, because their output may depend on 
  the witness-sequence provided by the oracle. The $\FP^{\NP}[\wit,O(\log n)]$-computable 
  multi-output functions
  are $\Sigma^b_2$-definable in $\Sonetwo$ by Theorem~6.3.3 in~\cite{Kra95}.
  
  \begin{lemma}[In~$\Sonetwo$] \label{lem:reduction}
  For every~$\PV$-function~$f$ and every standard natural number~$k \geq 2$,
  there is a~$\PV$-function~$h$ and
  a~$\PV_2[\wit,2]$-function $r$ such that 
  for all $a,b,c$ such that~$b \geq 2a \geq 2$, setting~$m = |a-1|$ 
  and~$e = \langle a,c \rangle$ as parameter for the functions~$h$ and~$r$, 
  the
  restriction of~$h_e$ to~$\{0,1\}^m$ is a function~$h_{e} : \{0,1\}^m \to \{0,1\}^{m^k}$ and, 
  for every~$y \in \{0,1\}^{m^k}$ that is outside the range of~$h_e$,
  there is a computation of~$r_e(y)$ that does not fail, and any such computation
  outputs an element in $[2a] \subseteq [b]$ that is outside the range of~$f_c$ 
  restricted to~$[a]$.
  \end{lemma}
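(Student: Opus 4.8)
The plan is to prove Lemma~\ref{lem:reduction} in the two stages of Lemmas~\ref{lem:norm} and~\ref{lem:ampl}: a \emph{normalization} that replaces the arbitrary $f_c:[a]\to[b]$ (with $b\ge 2a\ge 2$) by a ``doubling'' function on a power-of-two domain, followed by the standard \emph{amplification} of the dual weak pigeonhole principle (Paris--Wilkie--Woods~\cite{PWW88}; Blum--Micali~\cite{BlumMicali1982}) that stretches the multiplicative gap from $2$ up to $2^{\poly(m)}$. Fix $m=|a-1|$, so $2^{m-1}<a\le 2^m$, and $e=\langle a,c\rangle$; the cases $a\le 2$ (hence $m\le 1$) are disposed of by a direct $\PV$-computation, since then $[2a]$ has constantly many points. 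For the normalization, first truncate $f_c$ to $\tilde f:[a]\to[2a]$ by $\tilde f(j)=f_c(j)$ if $f_c(j)\le 2a$ and $\tilde f(j)=1$ otherwise; because $b\ge 2a$, every $y\in[2a]\setminus\Img(\tilde f)$ lies in $[2a]\subseteq[b]$ and is a non-root of $f_c{\restriction}[a]$. Then pad the domain up to $\{0,1\}^m$ and the codomain to $\{0,1\}^{m+1}$, spending the $2^m-a$ spare domain points to absorb part of the ``excess'' $\{2a+1,\dots,2^{m+1}\}$, producing a $\PV$-function $f'$ whose restriction $f'_e:\{0,1\}^m\to\{0,1\}^{m+1}$ has image of size at most $2^m$; the construction is arranged so that a point outside $\Img(f'_e)$ decodes back, by a $\PV$-map, to a point of $[2a]\setminus\Img(f_c{\restriction}[a])$.

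The amplification is the heart of the argument. Split $f'_e(x)=(g_e(x),\beta_e(x))$ into an $m$-bit \emph{head} $g_e(x)\in\{0,1\}^m$ and a \emph{tail} bit $\beta_e(x)\in\{0,1\}$, and for $j\ge0$ let $H_j$ be the $\PV$-function
\[
H_j(x):=\bigl(\beta_e(x),\,\beta_e(g_e(x)),\,\dots,\,\beta_e(g_e^{(j-1)}(x)),\,g_e^{(j)}(x)\bigr)\in\{0,1\}^j\times\{0,1\}^m,
\]
where $g_e^{(i)}$ is $i$-fold iteration. Put $t:=m^k-m$ and $h_e:=H_t$, so $h_e$ maps $\{0,1\}^m$ into $\{0,1\}^t\times\{0,1\}^m=\{0,1\}^{m^k}$. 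Since $m,t$ are polynomial in $|a|$ and $g_e,\beta_e$ are polynomial-time, $h\in\PV$ with the right parameterization by $e$, and $|\Img(h_e{\restriction}\{0,1\}^m)|\le 2^m\le 2^{m^k}/2$ (here we use $k\ge 2$ and, after discarding $a\le 2$, $m\ge 2$), so the hypothesis that $y$ is outside $\Img(h_e)$ is non-vacuous.

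Given such a $y$, write $y=(\gamma_0,\dots,\gamma_{t-1},z)$ with $\gamma_i\in\{0,1\}$ and $z\in\{0,1\}^m$, set $y^{(i)}:=(\gamma_i,\dots,\gamma_{t-1},z)$ for $i=0,\dots,t$, and consider the $\Sigma^b_1$-predicates $R_i\equiv\exists x\in\{0,1\}^m\ H_{t-i}(x)=y^{(i)}$. Then $R_t$ holds ($x:=z$ works, since $H_0(x)=x$) while $R_0$ fails (it is literally $y\in\Img(h_e)$). The key observation is: \emph{if $R_{i+1}$ holds, witnessed by $w$ with $H_{t-i-1}(w)=y^{(i+1)}$, and $R_i$ fails, then the $(m{+}1)$-bit string $(w,\gamma_i)$ lies outside $\Img(f'_e)$} --- for if some $x$ had $g_e(x)=w$ and $\beta_e(x)=\gamma_i$, then $H_{t-i}(x)=\bigl(\beta_e(x),H_{t-i-1}(g_e(x))\bigr)=(\gamma_i,H_{t-i-1}(w))=(\gamma_i,y^{(i+1)})=y^{(i)}$, contradicting $\lnot R_i$. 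This lets $r_e$ work with two oracle queries and no explicit search. Query~1 asks ``$\exists i<t\ \exists x\ H_{t-i-1}(x)=y^{(i+1)}$'', which holds (witnessed by $i=t-1$, $x=z$), returning some witness $(i_1,w_1)$; Query~2 asks ``$\exists x\ H_{t-i_1}(x)=y^{(i_1)}$''; if the answer is NO, $r_e$ passes $(w_1,\gamma_{i_1})$ through the $\PV$-decoding of the first paragraph and outputs the resulting element of $[2a]\subseteq[b]$, and otherwise this computation path fails. Since $R_0$ fails and $R_t$ holds there is an $i^\ast<t$ with $\lnot R_{i^\ast}$ and $R_{i^\ast+1}$; the oracle \emph{may} return $(i^\ast,w)$ for Query~1 with $w$ witnessing $R_{i^\ast+1}$, and on that run Query~2 answers NO, so some computation of $r_e(y)$ does not fail. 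Conversely, any non-failing run has $R_{i_1+1}$ (from the Query~1 witness) and $\lnot R_{i_1}$ (from the NO of Query~2), so the observation applies, $(w_1,\gamma_{i_1})\notin\Img(f'_e)$, and the output is a genuine point of $[2a]\setminus\Img(f_c{\restriction}[a])$. All this is computed by a $\PV_2[\wit,2]$-machine with parameter $e=\langle a,c\rangle$.

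The main obstacle --- and the reason the $\NP$-oracle is unavoidable --- is precisely the breakpoint location: using \emph{deterministic} adaptive $\NP$-queries one would need an $\Omega(\log t)=\Omega(\log m)$-query binary search down the chain $R_0,\dots,R_t$, whereas the device above offloads the search onto the witness-nondeterminism of the oracle and uses the second query only to \emph{certify} a candidate breakpoint and reject bad guesses. The remaining points are routine but require care and are carried out in Section~\ref{app:normalizationandamplification}: the factor-of-two mismatch between $2a$ and the nearest power of two (so that the padding in the normalization must be chosen so that the amplification never produces a non-root outside $[2a]$, and the output lands in $[2a]\subseteq[b]$ and not merely in $[2^{m+1}]$); the verification that the iterated construction $H_j$ and all the involved lengths stay polynomial so that $h\in\PV$ and $r\in\PV_2[\wit,2]$; and the bookkeeping that everything is uniform in $a,c$ through $e=\langle a,c\rangle$.
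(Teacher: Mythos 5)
Your proof follows essentially the same route as the paper: the same two-stage normalization-then-amplification decomposition, the same one-bit-at-a-time iterated construction, and the same two-query $\PtoNP[\wit,2]$ translation with an $\NP$-guessed breakpoint and a single certifying query, matching the paper's Lemmas~\ref{lem:norm} and~\ref{lem:ampl}. The implementation details (truncate-and-pad vs.\ $\bmod\ a$ surjections in the normalization, head-last vs.\ head-first bit order in the iteration, guessing a single index vs.\ guessing a full chain in the first oracle query) are cosmetic, and the one step you leave implicit --- that a breakpoint $i^\ast$ with $\lnot R_{i^\ast}\wedge R_{i^\ast+1}$ exists, provably in $\Sonetwo$ --- is exactly the $\Sigma^b_1$-length-maximization argument that the paper's appendix makes explicit via Lemma~5.2.7 of Kraj\'{i}\v{c}ek~\cite{Kra95}.
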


  \begin{proof} 
  Set $t = m^k - m$ and compose the $g$ and $h$ functions, and the $s$ and $r$ functions, 
  in Lemmas~\ref{lem:norm} and~\ref{lem:ampl} of Section~\ref{app:normalizationandamplification}.
  \end{proof}

  The bottom line of this section is that Lemma~\ref{lem:reduction} reduces
  the problem of producing a witness for $\dWPHP^a_b(f_c)$ to the problem of
  producing a string in $\{0,1\}^{m^k}$ that avoids the range of $h_e$. 
  In our application we will choose~$k := 3$, so that
  \begin{equation}
  h_e : \{0,1\}^m \to \{0,1\}^{m^3}. \label{eqn:theh}
  \end{equation}
  The reason for the choice~$k = 3$ will become clear later in the proof. 

\subsection{The Definable Class of Algebraic Circuits} \label{sec:class}

  We continue now with the proof
  of the implication from $\HS(\PV)$ to $\dWPHP^a_b(f_c)$ over~$\Sonetwo$. 
  We argued already that the problem reduces to avoiding the range 
  of~$h_e$ in~\eqref{eqn:theh}.
  
  For the sake of contradiction, assume that for all $y \in \{0,1\}^{m^3}$ there
  is an $x \in \{0,1\}^m$ such that~$h_e(x) = y$. We intend to 
  use the hitting set~$H$ provided by~$\HS(\scriptC)$ for 
  an appropriately chosen definable class
  of algebraic circuits $\scriptC$, with suitable 
  parameters~$e,n,d,s,q,r,m$ 
  defined from $e$ and $m$. The contradiction will come from the following
  three steps of reasoning: 
  \begin{enumerate} \itemsep=0pt
  \item first we define a string~$y \in \{0,1\}^{m^3}$ that 
  suitably encodes the hitting
  set~$H$; 
  \item next we use the assumption to get a shorter $x \in
  \{0,1\}^m$ that can recover~$y$ via $h_e$; 
  \item finally, we use~$x$, \emph{as a compressed version of~$H$,} to build a small 
  non-vanishing algebraic circuit~$C$ in~$\scriptC_e(n,d,s,m)$ which will,
  however, vanish on~$H$, by design. 
  \end{enumerate}
  
  Before we can
  execute this plan first we need to define the suitable class~$\scriptC$
  on which to apply~$\HS(\scriptC)$. We commit to
  the following choice of parameters~$e,n,d,s,q,r,m$ as functions 
  of~$e$ and~$m$. First we note that we are not overloading the 
  names of the parameters:~$e$ and~$m$ will be taken to be themselves.
  Second, since every finite instance of
  $\dWPHP$ of standard size is provable in $\BASIC$, and in particular
  in~$\Sonetwo$, we may assume that $m$ is large enough, i.e., $m \geq
  m_0$ for a fixed standard~$m_0$ to be fixed later. For such
  large $m$, choose
  \begin{equation}
  \begin{array}{llllll}
  n := m & \;\;\; & d := m^2 & \;\;\; & s := m^3 \\
  r := 4m|m| & \;\;\; & q := 2m^3.
  \end{array} \label{eqn:choiceofparams}
  \end{equation}
  Observe that, if $m$ is large enough, then
  \begin{equation}
  \begin{array}{lllllllll}
    m^3 \geq rn|q| & \;\;\; & d \geq 2r|q| & \;\;\; & 
    q \geq 2dn & \;\;\; & r > m+n|q|.
  \end{array}
    \label{eqn:oneequation}
  \end{equation}
  The first part in~\eqref{eqn:oneequation} implies 
  that every triple $(i,j,k) \in [r] \times [n] \times [|q|]$ can be encoded with 
  a number in~$[m^3]$.\footnote{This is the place in proof 
  where the choice $k=3$, so $h_e$ has
  co-domain $\{0,1\}^{m^3}$, is needed.} Fix such an encoding
  of triples and write $e_m(i,j,k) \in [m^3]$ for the encoding of~$(i,j,k)$; 
  we require of this encoding that it is
  provably~$\PV$-computable and provably~$\PV$-invertible, which is easy to achieve in
  many standard ways, given~$m$ as parameter.
  
  We now define the class $\scriptC$. 
  For all $x \in \{0,1\}^m$ and $(i,j,k) \in [r]\times[n]\times[|q|]$, 
  let~$h_e(x; i,j,k)$ denote the $e_m(i,j,k)$-th bit
  of output of~$h_e(x)$. Let~$\mathscr{G}_{e,m} \subseteq \Circuits(n,d)$ 
  be the set of
  algebraic circuits with variables $z_1,\ldots,z_n$ and plugged constants $-1,0,1$
  that have the form
  \begin{equation}
    A_{e,x}(z_1,\ldots,z_n) := \prod_{i \in [r]} \sum_{j \in [n]} \Big({
      z_j - \sum_{k \in [|q|]} h_e(x; i,j,k) \cdot
      2^{k-1}}\Big)^2, \label{eqn:circuit}
  \end{equation}
  where $x \in \{0,1\}^m$. In this description of the algebraic circuit $A_{e,x}$, 
  the $z_1,\ldots,z_n$ are the variables, while the constants $-1,0,1$
  are used for the $h_e(x;i,j,k)$, 
  the~$2$ in~$2^{k-1}$, and 
  the~$-1$ that is implicit in the subtraction. 
  The syntactic individual degree
  of~\eqref{eqn:circuit} on the $z_j$ variables is~$2r$, and the syntactic 
  individual
  degree of~\eqref{eqn:circuit} on the parametric inputs is at most $2r|q|$.
  By the second part in~\eqref{eqn:oneequation} we have~$d \geq 2r|q|$, 
  so indeed, as claimed,~$\mathscr{G}_{e,m}$ is a subset 
  of~$\Circuits(n,d)$. Further, if $m$ is large enough, 
  by the first part in~\eqref{eqn:oneequation} again,
  the representation size of $A_{e,x}$ (as a member of $\Circuits$) is
  bounded by~$m^3$. Thus, by the choice of $s$ in~\eqref{eqn:choiceofparams},
  the set $\mathscr{G}_{e,m}$ is even a subset of $\Circuits(n,d,s)$.
  We let~$\scriptC$ be given by the encoding function $g_e(n,d,s,x)=A_{e,x}$,
  if $n,d,s,m$ obey~\eqref{eqn:choiceofparams} and~$m$ is as determined
  by~$e = \langle a,c \rangle$, i.e.,~$m = |a-1|$;
  otherwise we let~$g_e(n,d,s,x)$ be a default trivial circuit that computes 
  the constant~$0$ polynomial. 
  By construction,~$\scriptC$ is a legitimate definable class of 
  algebraic circuits and~$\scriptC_e(n,d,s,m) := \mathscr{G}_{e,m}$.\footnote{
  Observe that $\scriptC_e(n,d,s,m)$ is a \emph{sparse} class
  since $2^m \ll 2^{m^3} = 2^s$. This is the crucial point
  that makes the proof work. We cannot afford $m=s$ as would be
  the case for $\Circuits(n,d,s)$ 
  because the \emph{representation size} of~$A_{e,x}$ as an explicit circuit
  is roughly $rn|q|$, which is larger than $s$, and 
  no choice of parameters can make~$rn|q|$ smaller than~$s$
  if~$m=s$ and $r$ and~$m$ are to satisfy the requirement that~$r > m+n|q|$
  in the statement of~$\HS$. In contrast, as member of~$\scriptC$,
  the \emph{description size} of~$A_{e,x}$
  is~$m$, which is significantly smaller than~$rn|q| \geq m^2$.}

  \subsection{The Compression Argument} \label{sec:compression}

  We are ready to complete the proof of $\dWPHP(f)$ in $\Sonetwo + \HS(\PV)$.
  Consider~$\HS(g)$ or, equivalently,~$\HS(\scriptC)$, for the class~$\scriptC$ of 
  the previous section with encoding function $g$, and consider 
  the parameters~$e,n,d,s,q,r,m$ chosen in~\eqref{eqn:choiceofparams}. These choices
  are made to guarantee that the hypotheses in 
  the~$\HS(\scriptC)$-axiom are met; concretely, by~\eqref{eqn:oneequation}, 
  the requirement~$q\geq 2nd$ holds and,
  most importantly, the requirement $r > m+n|q|$, which
  ensures the abundance of hitting sets~$H$, 
  also holds.
  
  Let then~$H =
  (\vh_1,\ldots,\vh_r) \in (S_q^n)^r$ be a hitting set
  for~$\scriptC_e(n,d,s,m)$.  Let $y \in \{0,1\}^{m^3}$ be the string
  that encodes $H$ in binary padded with $0$s to length $m^3$. More precisely,
  first note
  that~$H$ can be identified with an element of $[q]^{nr}$, and hence
  with a string in $\{0,1\}^{rn|q|}$, so the first inequality
  in~\eqref{eqn:oneequation} ensures that $m^3$ is big enough to
  encode $H$. Even more explicitly, if we write $\vh_i = (h_{i,1},\ldots,h_{i,n})$
  and write each $h_{i,j} \in S_q = \{0,\ldots,q-1\}$ in
  binary notation as~$h_{i,j} = (h_{i,j,1},\ldots,h_{1,j,|q|}) \in \{0,1\}^{|q|}$, 
  then we choose $y$ such that
  for all triples $(i,j,k) \in [r] \times [n] \times [|q|]$ we have that
  \begin{equation}
    \text{the $e(i,j,k)$-th bit of $y$ equals $h_{i,j,k}$; the rest of bits of $y$
    are $0$.}
    \label{eqn:property}
  \end{equation}
  By the assumption that $\dWPHP$ fails for $h_e : \{0,1\}^m \to \{0,1\}^{m^3}$, 
  there exists an $x \in \{0,1\}^m$
  such that $h_e(x) = y$. In particular, by~\eqref{eqn:property} and
  the definition of $h_e(x;i,j,k)$,
  for all $(i,j,k) \in [r] \times [n] \times [|q|]$
  we have
  \begin{equation}
    h_e(x;i,j,k) = h_{i,j,k}.
    \label{eqn:propertyagain}
  \end{equation}
  Now consider the algebraic circuit~$A_{e,x}(z_1,\ldots,z_n)$ from~\eqref{eqn:circuit}, 
  which is in~$\scriptC_e(n,d,s,m)$ by definition. The polynomial
  computed by $A_{e,x}$ does not evaluate to zero on all $\va \in \formalZ^n$. 
  To see this, take $a_j := 2q$ for $j = 1,\ldots,n$ and note that each factor 
  of the product in $A_{e,x}(a_1,\ldots,a_n)$ is (provably) 
  positive since each
  of its $n$ summands is (provably) positive as all inner sums are less than~$2^{|q|} \leq 2q$. 
  Therefore, since $H$ hits $A_{e,x}$,
  there exists $i \in [r]$ such that $A_{e,x}(\vh_{i}) \not= 0$.  But then
  each factor evaluated at $\vh_i$ is non-zero, and in particular 
  the $i$-th factor is non-zero, which means that there exists
  $j \in [n]$ such that
  \begin{equation}
    h_{i,j} \not= \sum_{k \in [|q|]} h_e(x; i,j,k) \cdot 2^{k-1}.
  \end{equation}
  In turn, since $h_{i,j}$ belongs to~$S_q = \{0,1,\ldots,q-1\}$ and
  we encoded it in binary with~$|q|$ bits, it follows from 
  the (provable) uniqueness of the binary encoding that 
  there exists~$k \in [q]$
  such that~$h_{i,j,k} \not= h_e(x;i,j,k)$. This contradicts~\eqref{eqn:propertyagain},
  and the claim is proved.

  This completes the proof of Theorem~\ref{thm:equivalence}, and this section.

  \subsection{Completeness for Range Avoidance Problems} \label{sec:completeness}

  Consider the total search problem of finding hitting sets for low-degree algebraic circuits:
  
  \begin{quote}
  $\HITTINGSET$: \emph{Given $n,d,s,q,r$ in unary such that $q \geq 2nd$ and $r > s+n|q|$, 
  find a hitting set for $\Circuits(n,d,s)$ over $S_q$ of size at most $r$.}
  \end{quote}
  This is a total function in the complexity class $\TFSigmatwoP$ of total functions in $\FSigmatwoP$; 
  the totality is given by Theorem~\ref{lem:hittingset}.
  We define also its generalization to arbitrary circuit classes. In its
  definition below, the phrase \emph{a description of a class of at most~$2^m$ algebraic
  circuits $\mathscr{G} \subseteq \Circuits(n,d,s)$}
  refers to a Boolean circuit $C$ with $m$ Boolean inputs and $s$ Boolean
  outputs syntactically guaranteed to satisfy $C(x) \in \Circuits(n,d,s)$ 
  for all $x \in \{0,1\}^m$. One way to fulfill the syntactic guarantee is
  to check its output for membership in $\Circuits(n,d,s)$ and to return 
  the code of a default trivial circuit that computes the constant~$0$ polynomial 
  if the check fails.

  \begin{quote}
  $\HITTINGSETCLASSES$: \emph{Given $n,d,s,q,r,m$ in unary such
  that~$q \geq 2nd$ and~$r > m+n|q|$, and given
  a description of a 
  class of at most~$2^m$ algebraic circuits~$\mathscr{G} \subseteq \Circuits(n,d,s)$, 
  find a hitting set for~$\mathscr{G}$ over $S_q$ of size at most~$r$.}
  \end{quote}

  Again Theorem~\ref{lem:hittingset} shows that this is a total function in $\TFSigmatwoP$.
  Indeed, Theorem\ref{lem:hittingset}
  shows that both problems reduce, in polynomial time,
  to the Range Avoidance Problem for Boolean circuits from~\cite{RSW22}. 
  In the terminology of \cite{KKMP21,Korten21}, this is the problem~$\EMPTY$, and
  then both problems belong to the complexity class~$\APEPP$ for which $\EMPTY$ is
  its defining problem. The first of the two
  problems is even in the subclass~$\SAPEPP$ 
  of sparse problems in~$\APEPP$. 

  We observe now that the proof of Theorem~\ref{thm:equivalence} shows
  that the problem about hitting sets for circuit classes 
  is complete for the class $\APEPP$. In
  the precise statement below, recall that a polynomial-time mapping reduction
  between search problems is a pair of polynomial-time computable maps~$f$ and~$r$: 
  the \emph{forward}~$f$ maps any instance $x$ of the first problem to
  an instance~$f(x)$ of the second problem; the \emph{backward}~$r$ maps any
  solution~$y$ to the instance~$g(x)$ of the second problem to a solution $h(x,y)$
  of the first problem. We call this a $\P/\P$-reduction because both
  maps are polynomial-time computable. If the second map is computable
  only by a~$\PtoNP$-machine, then we call it a~$\P/\PtoNP$-reduction.

  \begin{theorem}
  The problem $\HITTINGSETCLASSES$ is complete for the~$\APEPP$ under $\PtoNP$-reductions;
  indeed, $\P/\PtoNP$-reductions suffice.
  \end{theorem}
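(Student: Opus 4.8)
The plan is to prove the two halves of completeness separately: membership of $\HITTINGSETCLASSES$ in $\APEPP$, and $\APEPP$-hardness under $\P/\PtoNP$-reductions. Membership is essentially already in hand. By Theorem~\ref{lem:hittingset} the problem is total, and the proof of that theorem produces, from any instance consisting of parameters $n,d,s,q,r,m$ and a description of a class $\mathscr{G}\subseteq\Circuits(n,d,s)$ of at most $2^m$ circuits, the polynomial-time computable Boolean circuit for the function $g$ of~\eqref{eqn:specification}--\eqref{eqn:defofg} seen as a map $[a]\to[b]$ with $a,b$ as in~\eqref{eqn:uandv} and $b\geq 2a$. By Lemma~\ref{lem:outsiderange} every string outside the range of $g$ is a hitting set, so composing with the standard polynomial-time reduction from range avoidance with polynomial stretch to the defining problem $\EMPTY$ of $\APEPP$ gives $\HITTINGSETCLASSES\leq_{\mathrm{p}}\EMPTY$, hence $\HITTINGSETCLASSES\in\APEPP$. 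Note this half needs only a $\P/\P$-reduction.

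For hardness it suffices to reduce $\EMPTY$ to $\HITTINGSETCLASSES$, and I would extract this reduction from the proof of Theorem~\ref{thm:equivalence}. Start from an $\EMPTY$ instance, a Boolean circuit $f:\{0,1\}^m\to\{0,1\}^{m+1}$; identifying $\{0,1\}^m$ with $[a]$ for $a=2^m$ and taking $b=2^{m+1}=2a$, the hypothesis $b\geq 2a\geq 2$ holds, so Lemma~\ref{lem:reduction} applies with $k:=3$: it yields a $\PV$-function with $h_e:\{0,1\}^m\to\{0,1\}^{m^3}$ and a $\PV_2[\wit,2]$-function $r_e$ whose non-failing computations turn any $y\notin\Img(h_e)$ into an element of $[2a]=\{0,1\}^{m+1}$ outside $\Img(f)$. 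The forward map then outputs the $\HITTINGSETCLASSES$ instance with parameters as in~\eqref{eqn:choiceofparams}, namely $n:=m$, $d:=m^2$, $s:=m^3$, $r:=4m|m|$, $q:=2m^3$ (all polynomial in $m$, hence legitimately written in unary), for which the promise $q\geq 2nd$ and the largeness requirement $r>m+n|q|$ are guaranteed by~\eqref{eqn:oneequation}, and with class description the Boolean circuit of $m$ inputs and $s$ outputs computing $x\mapsto A_{e,x}$ of~\eqref{eqn:circuit}; this circuit is constructible in polynomial time from $f$ because its shape is a fixed product of sums into which the bits $h_e(x;i,j,k)$ are plugged, and $A_{e,x}$ has representation size at most $m^3=s$ by the argument in Section~\ref{sec:class}. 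If the output of this circuit ever failed to be a valid code in $\Circuits(n,d,s)$ we would substitute the trivial zero circuit, but by construction it never does. This forward map is in $\P$.

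Given a solution --- a hitting set $H=(\vh_1,\ldots,\vh_r)$ for $\scriptC_e(n,d,s,m)=\mathscr{G}_{e,m}$ of size at most $r$ --- the backward map first encodes $H$ as the string $y\in\{0,1\}^{m^3}$ according to~\eqref{eqn:property}, then runs $r_e$ on $y$. The compression argument of Section~\ref{sec:compression} shows $y\notin\Img(h_e)$: if $y=h_e(x)$ then $A_{e,x}\in\mathscr{G}_{e,m}$ is a non-vanishing circuit (it is positive at $(2q,\ldots,2q)$) that vanishes on every $\vh_i$ by construction, contradicting that $H$ hits it. Hence the computation of $r_e(y)$ does not fail and outputs a point outside $\Img(f)$, a valid solution to the $\EMPTY$ instance. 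Since $y$ is computed from $H$ in polynomial time and $r_e$ is a $\PV_2[\wit,2]$-function, the backward map is a $\PtoNP$ computation, so the whole reduction is a $\P/\PtoNP$-reduction. Combined with membership this proves that $\HITTINGSETCLASSES$ is $\APEPP$-complete under $\PtoNP$-reductions, with $\P/\PtoNP$-reductions sufficing.

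The main obstacle --- and the only reason the $\NP$ oracle enters --- is the amplification step packaged in Lemma~\ref{lem:reduction}: as in the earlier uses of this technique going back to Paris--Wilkie--Woods and the hardcore-bit constructions, turning an avoided output $y$ of the stretched function $h_e$ back into an avoided output of $f$ requires guessing and verifying a coordinate on which a too-short preimage attempt fails, which is exactly an $\FP^{\NP}[\wit,O(1)]$-task. Everything else is bookkeeping: checking the parameter inequalities in~\eqref{eqn:choiceofparams}--\eqref{eqn:oneequation}, verifying that $\scriptC$ given by $x\mapsto A_{e,x}$ is a legitimate definable class with polynomial-time encoding function (immediate, since $h_e$, the triple-encoding $e_m$, and the circuit~\eqref{eqn:circuit} are all polynomial-time uniform), and hard-wiring the finitely many small cases $m<m_0$ so that the reduction is correct for all inputs.
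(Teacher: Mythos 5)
Your proof is correct and follows essentially the same route as the paper's: membership via the $\P/\P$-reduction implicit in the proof of Theorem~\ref{lem:hittingset}, and hardness by extracting the forward map (the class description $x \mapsto A_{e,x}$ with parameters from~\eqref{eqn:choiceofparams}) and backward map (encoding $H$ as $y$ per~\eqref{eqn:property} and applying the $\PV_2[\wit,2]$-function from Lemma~\ref{lem:reduction}) from the proof of Theorem~\ref{thm:equivalence}. Your rephrasing of the Section~\ref{sec:compression} contradiction (if $y=h_e(x)$ then $A_{e,x}$ vanishes on every $\vh_i$, so $H$ fails to hit it) is equivalent to the paper's, and your observation that only the amplification step forces the $\NP$-oracle matches the paper's remark.
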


  \begin{proof}
  Membership in the class follows from the proof of the
  first half of Theorem~\ref{thm:equivalence}; i.e.,
  from the proof of Theorem~\ref{lem:hittingset}. 
  Precisely, that proof shows that our problem reduces to~$\EMPTY$ with a $\P/\P$-reduction.
  Completeness for $\APEPP$ follows from the proof of the second half of 
  Theorem~\ref{thm:equivalence}.
  In more details, we argue that the proof of Theorem~\ref{thm:equivalence} shows how 
  to reduce~$\EMPTY$ to our problem, with a $\P/\PtoNP$-reduction.

  Let an instance~$C$ of $\EMPTY$ be given: $C$ is a Boolean circuit with~$m$ inputs and~$m+1$ outputs.
  Let $a = 2^m$ and $b = 2^{m+1}$, let $c$ be the integer encoding of the Boolean circuit~$C$, 
  and let~$e = \langle a,c \rangle$.
  Let $f$ be the $\PV$-function which, with parameter $c$, computes 
  as the Boolean circuit $C$ on input $x \in \{0,1\}^m$. 
  Let $h$ be the $\PV$-function
  given by Lemma~\ref{lem:reduction} with $k = 3$. Thus $h_e$ is a function as in~\eqref{eqn:theh}.

  Consider the corresponding definable class of algebraic circuits $\scriptC$ 
  from Section~\ref{sec:class} and let $g$ be its encoding function. 
  For the given $m$, consider the setting of parameters $n,d,s,q,r$
  as specified in~\eqref{eqn:choiceofparams}. By~\eqref{eqn:oneequation},
  these parameters satisfy the requirements in the input of our problem.
  By definition we have $g_e(n,d,s,x) = A_{e,x}$ for all~$x \in \{0,1\}^m$; 
  i.e.,~$g_e(n,d,s,x)$ outputs (the code of) the algebraic circuit $A_{e,x}$.
  Let $D$ be the canonical Boolean circuit that computes as~$g_e(n,d,s,x)$ on all 
  inputs~$x \in \{0,1\}^m$. Then $D$ is a description of the class of 
  at most $2^m$ algebraic circuits $\mathscr{G}_{e,m} \subseteq \Circuits(n,d,s)$.
  The proof in Section~\ref{sec:compression} shows that any hitting set $H$ of
  size at most $r$ for
  $\mathscr{G}_{e,m}$ gives a string $y \in \{0,1\}^{m^3}$ that is outside the range 
  of $h_e$ restricted to $\{0,1\}^m$. Concretely, let $y$ be the string specified
  in~\eqref{eqn:property}.
  Finally, let $r$ be the $\PV_2[\wit,2]$-function $r$ from Lemma~\ref{lem:reduction} and
  consider (any) $r_e(y)$. By~\ref{lem:reduction} and the natural correspondence 
  between~$[a]$ and~$\{0,1\}^m$, and between~$[b]$ and~$\{0,1\}^{m+1}$, this is a string 
  in~$\{0,1\}^{m+1}$ that is not in the range of $C$, so a solution 
  to the~$\EMPTY$-instance~$C$. Since the circuit $D$ is computable from $e$ and hence from~$C$
  in time polynomial in the size of $C$, the forward map is computable
  by a polynomial-time machine.
  Since $r$ is computable by a~$\PtoNP$-machine and $m$ is bounded by the size of $C$, 
  the \emph{backward} map 
  is computable by a~$\PtoNP$-machine. Both together give 
  a~$\P/\PtoNP$-reduction and the proof is complete.
  \end{proof}

\appendix

\section{Amplification for dWPHP} \label{app:normalizationandamplification}

  The material of this appendix, or at least the techniques in it, 
  can be considered known. Since we were
  not able to find a reference with the exact result that we need,
  we provide the details here. In bounded arithmetic, 
  the main ideas go back to the seminal
  paper by Paris-Wilkie-Woods~\cite{PWW88}, and have
  been revisited several times; cf. \cite{MacielPitassiWoods2002, 
  Atserias2003} for bounded arithmetic, 
  and~\cite{KKMP21, Korten21}
  for complexity theory. Much earlier, in the area of cryptography, 
  exactly the same construction was used~to build pseudo-random 
  number generators (PRNGs) from hardcore bits~\cite{BlumMicali1982},
  which actually give a kind of guarantee that 
  is incomparable with what we need.
  
  In few words, what is done here is to show
  that the search problem associated to an arbitrary instance of
  the Dual Weak Pigeonhole Principle $\dWPHP^a_b$ with $b \geq 2a$ reduces
  to certain special instances of the same problem.
  In what we call \emph{the normalization step} below, we show how to
  reduce an arbitrary instance with $b \geq 2a$ to an instance 
  of the form~$a = 2^m$ and~$b = 2^{m+1}$.
  In the terminology of Section~\ref{sec:completeness}, this will
  be a $\P/\P$-reduction. In \emph{the amplification step}, also below,
  we show how to reduce an instance of the form~$a = 2^m$ and~$b = 2^{m+1}$,
  to an instance of the form~$a = 2^m$ and~$b = 2^{m+t}$ for any
  arbitrary~$t = \poly(m)$ of our choice. In both cases, we
  do this provably in $\Sonetwo$, as this is what is needed for
  the proof of Theorem~\ref{thm:equivalence}.
  
  To make this section more self-contained, let us recall the definition of
  the Dual Weak Pigeonhole Principle axioms. Let $f$ be a $\PV$-symbol. The axiom $\dWPHP(f)$ 
  is the universal closure of the following formula with free variables $a,b,c$:
  \begin{equation}
  \dWPHP^a_b(f_c) := (b{\geq}2a{\geq}2 \rightarrow \exists y{\in}[b]\ \forall x{\in}[a]\ f_c(x)\not=y).
  \end{equation}
  Here, $f_c(x)$ is alternative notation for $f(x,c)$, thinking of $c$ 
  as a \emph{parameter} for $f$. These parameters may, in particular, specify
  the sizes~$a$ and~$b$ of the intended domain and range of a function~$f_c : [a] \to [b]$,
  but this is not enforced.

  Let $m = |a-1|$ and $n = |b-1|$, so that $n > m$ by $b \geq 2a \geq 2^{m+1}$.
  In particular, all elements in $\{0,\ldots,a-1\}$
  and $\{0,\ldots,b-1\}$ can be represented in 
  binary with~$m$ and $n$ bits, respectively. Subtracting and adding one as appropriate, 
  the same applies to $[a] = \{1,\ldots,a\}$ and $[b] = \{1,\ldots,b\}$.
  
  \subsection{Normalization step}

  In the normalization step we want to
  find a function~$g_d : \{0,1\}^m \to \{0,1\}^{m+1}$, with its parameters~$d$ 
  defined in terms of the
  given~$\PV$-function~$f$ and the parameters $a,b,c$,
  in such a way that, given a point in $\{0,1\}^{m+1}$ that is outside the range of the 
  function~$g_d$, it is easy to find
  a point in $[b]$ that is outside the range of the function~$f_c$ restricted to $[a]$.
  In this case, \emph{easy} means in time polynomial in~$m$ and the length $|c|+|d|$ of
  the parameters. 
  
  Let us note that 
  if~$a$ and~$b$ were 
  both exact powers of two, then it would suffice
  to take~$m=|a-1|$ and let~$g_d$ compute the same as~$f_c$ with all inputs encoded
  in binary notation. 
  In case~$a$ or~$b$ are not exact powers of 
  two, we can still take $m = |a-1|$, but the details of the construction
  require a bit of care to deal with the appropriate encodings of
  the sets. The proof is still straightforward, just more tedious.
  
  \begin{lemma}[In~$\Sonetwo$] \label{lem:norm}
  For every $\PV$-function $f$ there exist $\PV$-functions $g$ and $r$ such that
  the following holds:
  For all $a,b,c$ such that $b \geq 2a \geq 2$, setting $m = |a-1|$ and $d = \langle a,b,c \rangle$,
  the functions~$g_d$ and~$r_d$ have domains and co-domains~$g_d : \{0,1\}^m \to \{0,1\}^{m+1}$ 
  and~$r_d : \{0,1\}^{m+1} \to [2a] \subseteq [b]$, 
  and for all $y \in \{0,1\}^{m+1}$, 
  if $y$ is outside the range of $g_d$ restricted to~$\{0,1\}^{m}$, 
  then $r_d(y)$ is outside the range of $f_c$ restricted to~$[a]$.
  \end{lemma}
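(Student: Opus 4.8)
The plan is to realise the statement as an explicit polynomial-time many-one reduction of the search problem ``given $f_c$ with $b\ge 2a\ge 2$, find $y\in[b]$ outside the range of $f_c$ on $[a]$'' to the search problem ``find $y\in\{0,1\}^{m+1}$ outside the range of $g_d$ on $\{0,1\}^m$'': the new $\PV$-function $g$ is the forward map and the $\PV$-function $r$ is the backward map, and every arithmetic and combinatorial fact used will be checked to be available in $\Sonetwo$.

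I would organise the construction in two stages, in this order. \emph{Stage 1 (shrink the codomain).} Since the image of $f_c$ on $[a]$ has at most $a$ elements and $2a\le b$, the set $[2a]\setminus\mathrm{range}(f_c|_{[a]})$ is always nonempty, so it is enough to look for the non-image point inside $[2a]$. Concretely, replace $f_c$ by the polynomial-time function $\hat f_c$ that outputs $f_c(x)$ when $f_c(x)\le 2a$ and a fixed default element of $[2a]$ otherwise; because any value $z\le 2a$ attained by $f_c$ is attained by $\hat f_c$ as well, a non-image point of $\hat f_c$ in $[2a]$ is a non-image point of $f_c$. This reduces the task to a function into $[2a]$. \emph{Stage 2 (re-encode as strings).} From $m=|a-1|$ the theory proves $2^{m-1}<a\le 2^m$, hence $2a\le 2^{m+1}$; fix the provably bijective binary encodings of $\{0,1\}^m$ with $\{0,\dots,2^m-1\}$ and of $\{0,1\}^{m+1}$ with $\{0,\dots,2^{m+1}-1\}$. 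Define $g_d$ on input $x$ so that: if the number coded by $x$ is $<a$, it evaluates $\hat f_c$ on it and outputs the code of that value, which lies in the block $\{0,\dots,2a-1\}$ of $\{0,1\}^{m+1}$; on the remaining ``surplus'' inputs it outputs, through a fixed rule, codewords lying outside that block. Define $r_d$ on input $y$ so that: if the number coded by $y$ is $<2a$, it outputs that number, a member of $[2a]$; on the surplus codewords it outputs the member of $[2a]$ prescribed by the correspondence that the definition of $g_d$ was designed around.

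For correctness one argues that a codeword $y\notin\mathrm{range}(g_d)$ is decoded by $r_d$ to an element of $[2a]$ to which no input $<a$ was mapped, hence to an element outside $\mathrm{range}(\hat f_c|_{[a]})$, and therefore to a non-image point of $f_c|_{[a]}$. For codewords coding a number below $2a$ this follows at once from the uniqueness of the binary code: if $r_d(y)=z$ were attained by $\hat f_c$ on some input $<a$, then $g_d$ would map that input to the code of $z$, i.e.\ to $y$. Formalising this in $\Sonetwo$ requires only that $g$ and $r$ are $\PV$-functions (the function $f$ is invoked solely on single inputs, so no exponential-size computation arises), that the binary encodings are provably bijective on the ranges in question, that the inequalities among $a$, $2^m$, $2a$ and $2^{m+1}$ hold, and that every auxiliary induction is over quantifier-free or $\Pi^b_1$ formulas with polynomially bounded bit-lengths.

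The step I expect to be the real obstacle is the bookkeeping in Stage 2 when $a$ is not a power of two: then $\{0,1\}^m$ properly contains the copy of $[a]$ and $\{0,1\}^{m+1}$ properly contains the copy of $[2a]$, and the number of surplus domain slots and the number of surplus codewords do not match, so one must choose the surplus rule in $g_d$ and the matching decoding in $r_d$ carefully enough that every surplus codeword $g_d$ fails to cover is nevertheless decoded by $r_d$ to a provably non-image element of $[2a]$. This is precisely the ``more tedious'' part of the argument: it is where the inequality $2^{m-1}<a$ must be exploited, and where one has to ensure that the whole case analysis on which block a number falls into is performed by a genuine polynomial-time algorithm whose correctness $\Sonetwo$ can certify.
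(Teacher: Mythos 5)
The two-stage plan --- Stage 1 clipping the codomain to $[2a]$ via $\hat f_c$, Stage 2 re-encoding domain and codomain as binary strings --- matches the outline of the paper's proof, and your treatment of the regular block of $\{0,1\}^{m+1}$ is sound. But the surplus bookkeeping you flag at the end is not a tedium to be postponed: the specific rule you propose cannot work. Since $m=|a-1|$ gives $2^{m-1}<a\le 2^m$, the domain $\{0,1\}^m$ has $2^m-a$ surplus strings while the codomain $\{0,1\}^{m+1}$ has $2^{m+1}-2a=2(2^m-a)$ surplus strings, exactly twice as many. Your rule sends every regular input into the regular block and every surplus input to a surplus codeword, so at least $2^m-a$ surplus codewords are provably outside the range of $g_d$; once $a$ is not a power of two there is at least one such codeword. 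For each of them the lemma requires $r_d$ to output a provably non-image element of $[2a]$, which is exactly the $\dWPHP$ search problem we are trying to reduce in the first place: no polynomial-time ``fixed rule'' or ``matching decoding'' can name such a point in advance, no matter how cleverly one exploits $2^{m-1}<a$. The obstacle you identify is a genuine gap, not a detail.

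The paper's construction differs from yours at precisely this point and is worth comparing: it never sends surplus inputs to surplus codewords. Instead every $\vx\in\{0,1\}^m$, surplus or not, is routed through the modular surjection $\num_a(\vx)=\num(\vx)\bmod a$ (residues identified with $[a]$), then through $f_c$ reduced $\bmod\,2a$, then through a coding map $\bin_{2a}:[2a]\to\{0,1\}^{m+1}$, with $r_d:=\bin_{2a}^{-1}$. Surplus inputs thereby become additional preimages of points in $[a]$ under $\num_a$, so the entire image of $g_d$ is confined to $\bin_{2a}([2a])$, and the correctness argument rests on the two identities $\num_a(\num_a^{-1}(x))=x$ for $x\in[a]$ and $\bin_{2a}(\bin_{2a}^{-1}(\vy))=\vy$ for $\vy\in\{0,1\}^{m+1}$. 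The non-power-of-two case is exactly where those two conditions pull in opposite directions, since $a\le 2^m$ and $2a\le 2^{m+1}$ with equality only when $a$ is a power of two; that is the care the paper alludes to, and it is not achieved by the surplus-to-surplus rule in your sketch.
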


  \begin{proof}
  Recall that~$m = |a-1|$ and let $\num_a : \{0,1\}^m \to [a]$ be
  a canonical surjection onto~$[a]$ defined by, say,
  $\num_a(x_1,\ldots,x_m) = \sum_{i=1}^m x_i 2^{i-1}\;\mathrm{mod}\;a$,
  with the residue classes~$\mathrm{mod}\;a$ naturally identified with 
  the elements of the set~$[a]$.
  Fix a corresponding easily computable inverse function~$\num_a^{-1} : [a] \to \{0,1\}^m$ 
  satisfying the invertibility condition
  \begin{equation}
  \num_a(\num_a^{-1}(y)) = y \label{eqn:invertibilityofnum}
  \end{equation}
  for all $y \in [a]$.
  Similarly, but dually, note that $|2a-1|=m+1$, so
  let $\bin_{2a} : [2a] \to \{0,1\}^{m+1}$ be a canonical surjection
  onto $\{0,1\}^{m+1}$, with corresponding inverse $\bin_{2a}^{-1} : \{0,1\}^{m+1} \to [2a]$
  satisfying the invertibility condition
  \begin{equation}
  \bin_{2a}(\bin^{-1}_{2a}(\vy)) = \vy \label{eqn:invertibilityofbin}
  \end{equation}
  for all $\vy \in \{0,1\}^{m+1}$.
  All these are $\PV$-functions and the invertibility conditions~\eqref{eqn:invertibilityofnum}
  and~\eqref{eqn:invertibilityofbin} are provable in $\Sonetwo$.

  Let $g_d : \{0,1\}^m \to \{0,1\}^{m+1}$ be the function
  defined on $\vx \in \{0,1\}^m$ by
  \begin{equation}
  g_d(\vx) := \bin_{2a}(f_c(\num_{a}(\vx))\;\mathrm{mod}\;2a).
  \label{eqn:g}
  \end{equation}
  We claim that, to find a $y \in [b]$ such that~$f_c(x) \not= y$ for all~$x \in [a]$, it suffices
  to find a~$\vy \in \{0,1\}^{m+1}$ such that~$g_d(\vx) \not= \vy$ for all~$\vx \in \{0,1\}^m$.
  Indeed, given such a~$\vy \in \{0,1\}^{m+1}$, just 
  let~$y = r_d(\vy) := \bin^{-1}_{2a}(\vy) \in [2a] \subseteq [b]$. 
  To prove the correctness, note that
  any~$x \in [a]$ such that~$f_c(x) = y$ would give~$\vx = \num^{-1}_a(x) \in \{0,1\}^m$
  such that
  \begin{align*}
  g_c(\vx) & = g_c(\num^{-1}_a(x)) = \bin_{2a}(f_c(\num_a(\num^{-1}_a(x))) \;\mathrm{mod}\; 2a) = \\
  & = \bin_{2a}(f_c(x)\;\mathrm{mod}\;2a) = \bin_{2a}(y\;\mathrm{mod}\;2a) = \bin_{2a}(y) = \\
  & = \bin_{2a}(\bin^{-1}_{2a}(\vy)) = \vy,
  \end{align*}
  where the first equality follows from the choice of $\vx$, the second follows
  from~\eqref{eqn:g}, the third follows from~\eqref{eqn:invertibilityofnum}
  on the fact that~$x \in [a]$, the fourth follows from the assumption that~$f_c(x) = y$,
  the fifth follows from the fact that~$y \in [2a]$, the sixth follows from the choice
  of~$y$, and the last follows from~\eqref{eqn:invertibilityofbin} on 
  the fact that~$\vy \in \{0,1\}^{m+1}$.
  \end{proof}

  \subsection{Amplification step} 

  In the amplification step we extend
  the co-domain of the function~$g_d : \{0,1\}^m \to \{0,1\}^{m+1}$ 
  to the set~$\{0,1\}^{m+t}$, for any desired~$t = \poly(m)$. The construction 
  of this new function~$h_d : \{0,1\}^m \to \{0,1\}^{m+t}$ will be such
  that, given a point outside the range of~$h_d$, it will be possible to
  get a point outside the range of~$g_c$. 
  In this case, the sense in which it is \emph{easy} to translate the solution
  is a bit more nuanced since we will 
  need the help of an~$\NP$-oracle in the computation. Concretely,
  the translation function will be represented by a $\PV_2$-symbol
  that computes a multi-output function (i.e., a total relation) 
  in $\FPtoNP[\wit,2]$. Here, $\FPtoNP[\wit,q]$
  denotes the class of multi-output functions that can be computed by $\FPtoNP$-machines
  that get witnesses to their $\NP$-oracle queries when they are answered YES, and
  make at most $q$ queries in every computation path. For $q = O(\log n)$, where
  $n$ is the size of the input, the computations of such machines are $\Sigma^b_2$-definable
  in $\Sonetwo$ by Theorem~6.3.3 in~\cite{Kra95}.

  \begin{lemma}[In~$\Sonetwo$] \label{lem:ampl}
  For every $\PV$-function $g$ there exist
  a $\PV$-function $h$ and a $\PV_2[\wit,2]$-function $s$ such that the following 
  holds: For all $d$ and
  all $m,t \in \Log$,
  setting $e = \langle d,m,t \rangle$ we have that $h_e$ and $s_e$ compute functions
  $h_e : \{0,1\}^m \to \{0,1\}^{m+t}$
  and (multi-output)~$s_e : \{0,1\}^{m+t} \to \{0,1\}^{m+1}$, and for all $y \in \{0,1\}^{m+t}$,
  if $y$ is outside the range of $h_e$ restricted to $\{0,1\}^m$, then there is a computation
  of $s_e(y)$ that does not fail, and any such computation 
  outputs a string outside the range of $g_d$ restricted to $\{0,1\}^m$.
  \end{lemma}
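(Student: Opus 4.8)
The plan is to take for $h$ the Blum--Micali-style iteration of $g$, and for $s$ an ``unwinding'' procedure guided by a witnessing $\NP$-oracle. Split $g_d(x)=(\sigma_d(x),\beta_d(x))$, where the $\PV$-symbol $\sigma_d$ returns the first $m$ bits of $g_d(x)$ and $\beta_d$ returns its last bit; by replacing any output of $g_d$ that is not of length $m{+}1$ with a fixed element of $\{0,1\}^{m+1}$ we may assume $g_d$ maps $\{0,1\}^m$ into $\{0,1\}^{m+1}$, noting that any element of $\{0,1\}^{m+1}$ missed by this modified map is also missed by the original. For $e=\langle d,m,t\rangle$ define $h_e:\{0,1\}^m\to\{0,1\}^{m+t}$ by the algorithm that, on input $x$, sets $z_0:=x$ and, for $i=1,\ldots,t$, puts $z_i:=\sigma_d(z_{i-1})$ and $b_i:=\beta_d(z_{i-1})$, then outputs $(z_t,b_t,b_{t-1},\ldots,b_1)$. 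This is a $\PV$-symbol, and $\Sonetwo$ proves by a trivial induction that the output has length $m{+}t$.

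For $s$, call a tuple $(z_{t-1},\ldots,z_{t-j})\in(\{0,1\}^m)^{j}$ a \emph{partial unwinding of $y=(w,b_t,\ldots,b_1)$ of length $j$} if, writing $z_t:=w$, one has $g_d(z_{t-i-1})=(z_{t-i},b_{t-i})$ for all $0\le i<j$. The assertion ``$y$ has a partial unwinding of length $j$'' is an $\NP$-predicate whose witness is the tuple itself (verification evaluates $\sigma_d,\beta_d$ at most $t$ times), it holds for $j=0$, and it is monotone decreasing in $j$ since a prefix of a length-$j$ unwinding is a length-$(j{-}1)$ one. A partial unwinding of full length $t$ gives, with $x:=z_0$, that $h_e(x)=y$; so if $y$ is outside the range of $h_e$ restricted to $\{0,1\}^m$, the largest $j$ for which an unwinding exists is some $j^{\star}\le t-1$. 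The procedure for $s_e(y)$ uses its $\wit$-oracle to compute $j^{\star}$ together with a witnessing unwinding $(z_{t-1},\ldots,z_{t-j^{\star}})$, and outputs $(z_{t-j^{\star}},b_{t-j^{\star}})$, which is $(w,b_t)$ when $j^{\star}=0$. Because no unwinding of length $j^{\star}{+}1$ exists, this witnessing one cannot be extended, i.e.\ no $z\in\{0,1\}^m$ satisfies $g_d(z)=(z_{t-j^{\star}},b_{t-j^{\star}})$; thus the output, a string of $\{0,1\}^{m+1}$, lies outside the range of $g_d$ restricted to $\{0,1\}^m$, as required. Locating $j^{\star}$ by binary search over $\{0,\ldots,t\}$ through this monotone predicate costs $O(\log t)$ witnessed queries, and $\log t$ is bounded by the logarithm of the input length, so $s$ is computed by an $\FPtoNP[\wit,O(\log n)]$-machine, i.e.\ is a $\PV_2$-symbol, whose graph is $\Sigma^b_2$-definable in $\Sonetwo$ by Theorem~6.3.3 of~\cite{Kra95}.

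What remains for $\Sonetwo$ is largely bookkeeping. The one substantive fact, that a length-$t$ unwinding of $y$ reconstructs a preimage $z_0$ of $y$ under $h_e$, is a short quantifier-free induction on the level, of the same flavour as the polynomial identities established in the proof of Lemma~\ref{lem:univariate}, and it is what forces $j^{\star}<t$. One must also check, routinely, that $h_e$ and the verifier of the unwinding predicate are honest polynomial-time procedures --- the same bit-length discipline as in the univariate case, here amounting to certifying that each $z_i$ has length exactly $m$ and each $b_i$ is a single bit --- and that the monotonicity of the predicate, the correctness of the binary search, and the concluding ``no extension, hence not in range'' step all go through. I expect the oracle-query budget to be the main obstacle: the naive unwinding spends one query per level, and it is the monotonicity-plus-binary-search argument above that brings this to $O(\log t)$; trimming it down to the small constant quoted in the statement requires absorbing the two boundary queries, at $j^{\star}$ and $j^{\star}{+}1$, into the search itself.
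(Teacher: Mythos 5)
Your approach coincides with the paper's in its essential architecture: Blum--Micali-style iteration of $g_d$ to build $h_e$, then a witnessed-$\NP$-oracle procedure that walks backwards along the ``unwinding'' chain until it finds a link that cannot be extended, at which point the unextendable element of $\{0,1\}^{m+1}$ is the output. The first-order predicate you call a ``partial unwinding of length $j$'', and its monotonicity in $j$, are exactly what the paper uses. And your reduction of the unwinding $\PV$-machine to a short quantifier-free induction, with the $\Sigma^b_2$-definability of $\PV_2$-functions via Theorem~6.3.3 of~\cite{Kra95}, is also on target.

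The genuine gap is the query count, and you flagged it yourself, but your suggested remedy does not close it. Your binary search over $\{0,\ldots,t\}$ to locate the maximal $j^\star$ costs $\Theta(\log t)$ witnessed queries, and there is no way to ``absorb'' the two boundary probes into the search so that it collapses to a constant: any deterministic search that \emph{commits} to returning $j^\star$ exactly has an unavoidable $\Omega(\log t)$ cost. The paper circumvents this not by a cleverer search but by abandoning the requirement that the procedure find $j^\star$ at all. Since $s_e$ is a \emph{multi-output} function (total relation), the algorithm may nondeterministically fail on some oracle choices, provided at least one computation succeeds and every non-failing computation is correct. Concretely: query~1 asks for \emph{some} chain $y_1,\ldots,y_k$ of witnesses (with no claim of maximality — the oracle is free to return any valid chain, including a short one); query~2 asks whether this particular chain is extendable. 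If yes, the machine simply fails. When $y$ avoids the range of $h_e$, the run in which the oracle happens to return a length-maximal chain (whose existence is guaranteed inside $\Sonetwo$ by $\Sigma^b_1$-$\lengthmax$, playing the role your $j^\star$ plays in the meta-theory) does not fail, and any non-failing run outputs an element outside the range of $g_d$. That is how two queries suffice. Your version, with $O(\log t)$ queries, is still $\Sigma^b_2$-definable and would suffice downstream, but it does not give the $\PV_2[\wit,2]$ bound claimed by the lemma.
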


  We begin the proof of Lemma~\ref{lem:ampl} by fixing notation to manipulate strings.
  For a string~$x = (a_1,\ldots,a_k) \in \{0,1\}^k$ and
  indices~$i,j \in [k]$ 
  we write~$x[i,j]$ to denote the substring~$(a_i,a_{i+1},\ldots,a_j)$
  of $x$ between positions~$i$ and~$j$ with endpoints included (unless~$i > j$). This is
  a string in $\{0,1\}^{j-i+1}$ if $i \leq j$, and the empty string if~$i > j$.
  For strings~$x = (a_1,\ldots,a_k) \in \{0,1\}^k$ 
  and~$y = (b_1,\ldots,b_\ell) \in \{0,1\}^\ell$, we write~$x{:}y$ 
  to denote the concatenation~$(a_1,\ldots,a_k,b_1,\ldots,b_\ell)$
  of~$x$ and~$y$, which is a string in $\{0,1\}^{k+\ell}$.

  With this notation, we define the function $h_e$. 
  For $i = 0,1,\ldots,t$, let $h_{e,i} : \{0,1\}^{m} \to \{0,1\}^{m+i}$
  be defined for all~$x \in \{0,1\}^{m}$ 
  by the following recursion:
  \begin{equation*}
  \begin{array}{lllll}
  h_{e,0}(x) & = & x \\
  h_{e,1}(x) & = & g(x) \\
  h_{e,i+2}(x) & = & g_d(h_{e,i+1}(x)[1,m]){:}h_{e,i+1}(x)[m+1,m+i+1].
  \end{array}
  \end{equation*}
  Note that for any $t = \poly(m)$, the function $h_{e,i} : \{0,1\}^m \to \{0,1\}^{m+t}$
  is computable in polynomial time. Let $h$ be the corresponding $\PV$-symbol in such
  a way that its parameterization by $e,i$ is~$h_{e,i}$, and its parameterization by $e$ alone
  is $h_{e,t}$.
  We claim that, to find a~$y_0 \in \{0,1\}^{m+1}$ such that~$g(x_0) \not= y_0$ for
  all~$x_0 \in \{0,1\}^m$, it suffices to find a~$y \in \{0,1\}^{m+t}$ such 
  that~$h_e(x) \not= y$ for all~$x \in \{0,1\}^m$. Consider the following stronger claim:

  \begin{claim}[In~$\Sonetwo$] \label{lem:thatlemma}
  For all $m,i \in \Log$,
  all $y \in \{0,1\}^m$, all $v \in \{0,1\}^i$, and all $z \in \{0,1\}^m$, 
  if~$h_{e,i}(z) = y{:}v$, then~$h_{e,i+1}(z) = g(y){:}v$.
  In particular, for all~$y_0 \in \{0,1\}^{m+1}$ and all~$v \in \{0,1\}^i$,
  if the concatenated 
  string~$y_0{:}v$ is outside the range of~$h_{e,i+1}$, then either~$y_0$ is 
  outside the range of~$g_d$, or any~$w \in \{0,1\}^m$ that witnesses
  the opposite by satisfying~$g_d(w) = y_0$ is such that the concatenated
  string~$w{:}v$ is outside the range of~$h_{e,i}$.
  \end{claim}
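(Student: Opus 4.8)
The plan is to prove the displayed identity of Claim~\ref{lem:thatlemma} directly from the recursion defining the $\PV$-symbol $h$, with a single case split on whether $i=0$ or $i\geq 1$; no induction on $i$ is needed for the claim itself (the induction that chains these equations together is what the proof of Lemma~\ref{lem:ampl} will do). Throughout I read the $g$ appearing in the recursion $h_{e,1}(x)=g(x)$ as the parameterized function $g_d:\{0,1\}^m\to\{0,1\}^{m+1}$, consistently with the $g_d$ already written in the $h_{e,i+2}$ clause. I also use the elementary, provable-in-$\Sonetwo$ identities for concatenation and substring extraction, namely $(y{:}v)[1,|y|]=y$ and $(y{:}v)[|y|{+}1,|y|{+}|v|]=v$.

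First I would fix $m,i\in\Log$, strings $y\in\{0,1\}^m$, $v\in\{0,1\}^i$, $z\in\{0,1\}^m$, and assume $h_{e,i}(z)=y{:}v$. If $i=0$, then $v$ is the empty string, so $y{:}v=y$, and the defining equation $h_{e,0}(z)=z$ forces $z=y$; hence $h_{e,i+1}(z)=h_{e,1}(z)=g_d(z)=g_d(y)=g_d(y){:}v$, as required. If $i\geq 1$, then $i+1\geq 2$, and the defining recursion (with dummy index $i-1$) gives $h_{e,i+1}(z)=g_d\big(h_{e,i}(z)[1,m]\big){:}h_{e,i}(z)[m{+}1,m{+}i]$; substituting $h_{e,i}(z)=y{:}v$ and using $(y{:}v)[1,m]=y$ and $(y{:}v)[m{+}1,m{+}i]=v$, which are valid because $|y|=m$ and $|v|=i$, yields $h_{e,i+1}(z)=g_d(y){:}v$. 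This settles the first part of the claim; the only thing to watch is the bookkeeping of the lengths $m$ and $i$ so that the substring ranges line up, which is routine.

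For the ``in particular'' part, fix $y_0\in\{0,1\}^{m+1}$ and $v\in\{0,1\}^i$, and assume $y_0{:}v$ is outside the range of $h_{e,i+1}$ restricted to $\{0,1\}^m$. If $y_0$ is outside the range of $g_d$, we are done, so suppose $w\in\{0,1\}^m$ satisfies $g_d(w)=y_0$; I claim $w{:}v$ is outside the range of $h_{e,i}$. Indeed, if $h_{e,i}(z)=w{:}v$ for some $z\in\{0,1\}^m$, then the first part applied with $y:=w$ gives $h_{e,i+1}(z)=g_d(w){:}v=y_0{:}v$, contradicting the assumption. Hence no such $z$ exists, which is exactly what was claimed.

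On the formalisation side, the one point that requires a little care is that all of this reasoning is available in $\Sonetwo$, and this is straightforward: since $t=\poly(m)$, every intermediate string $h_{e,i}(x)$ has length at most $m+t=\poly(m)$, so $h$ is a legitimate $\PV$-symbol whose defining equations (the displayed recursion) are provable, and the substring and concatenation operations are $\PV$-functions whose elementary identities are provable as well. So I do not expect a genuine obstacle here; the substantive content of the claim is simply the identification of the invariant --- ``$h_{e,i+1}(z)$ is obtained from $h_{e,i}(z)$ by applying $g_d$ to the first $m$ bits and leaving the remaining bits fixed'' --- that allows one to push ``lies outside the range'' backwards one step along the iteration, and everything else is an unwinding of definitions.
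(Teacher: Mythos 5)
Your proof is correct and follows the paper's approach. You fill in the details of the first statement (case split on $i=0$ versus $i\geq 1$ and an index shift in the recursive clause) that the paper dismisses in one line as ``a direct consequence of the recursive definition,'' including the correct reading of $g$ as $g_d$ in the $h_{e,1}$ clause; the contrapositive argument for the ``in particular'' part is identical to the paper's.
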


  \begin{proof}
  The first statement is a direct consequence of the recursive definition of the~$h_{e,i}$.
  To prove the second statement, assume that there exists~$w \in \{0,1\}^m$
  such that~$g_d(w) = y_0$ yet~$w{:}v$ is inside 
  the range of~$h_{e,i}$, say~$h_{e,i}(u) = w{:}v$
  for some $u \in \{0,1\}^m$.
  Then, by the first part of the lemma we get~$h_{e,i+1}(u) = g_d(w){:}v = y_0{:}v$, which
  means that $y_0{:}v$ is inside the range of~$h_{e,i+1}$.
  \end{proof}
  
  In case $i = 0$, the second part of Lemma~\ref{lem:thatlemma}
  concludes that $y_0$ is outside the range of~$h_{e,1}$, which equals~$g_d$.
  To see this observe that if $i = 0$, then $v$ is the empty string, 
  and any $w \in \{0,1\}^m$ 
  is always \emph{inside} the range of $h_{e,0}$, since $h_{e,0}$ is the identity map on $\{0,1\}^m$. 
  This means that, 
  when given a $y \in \{0,1\}^{m+i}$ that
  is outside the range of $h_{e,i}$ with $i > 0$, the following~$\FPtoNP[\wit]$ procedure (for now
  making more than~$2$ queries)
  halts in less than $i$ iterations and
  finds a $y_0 \in \{0,1\}^{m+1}$ that is outside the range of~$g_d$:

  \medskip
  \begin{tabbing} \itemsep=0pt
  \hspace{0.4cm} \= 1. \hspace{0.1cm} \= given $y = (a_1,\ldots,a_{m+i}) \in \{0,1\}^{m+i}$, \\
  \> 2. \> set $y_0 := (a_1,\ldots,a_{m+1})$, \\
  \> 3. \> for $j = 0,1,\ldots,i-2$ do the following: \\
  \> 4. \> \hspace{0.4cm} \= query ``$\exists w{\in}\{0,1\}^m\ g_d(w){=}y_j$'' to the $\NP$-oracle \\
  \> 5. \> \> if answer is NO, halt and output~$y_j$, \\
  \> 6. \> \> if answer is YES, get such~$w$ and set~$y_{j+1} := w{:}a_{m+j+2}$, \\
  \> 7. \> halt and fail.
  \end{tabbing}
  \medskip
  
  \noindent 
  In turn, this $\FPtoNP[\wit]$-procedure is special in that it halts after it gets the first NO.
  This means that it can be replaced by the following 
  different but equivalent $\FPtoNP[\wit,2]$-procedure:

    \medskip
  \begin{tabbing} \itemsep=0pt
  \hspace{0.4cm} \= 1. \hspace{0.1cm} \= given $y = (a_1,\ldots,a_{m+i}) \in \{0,1\}^{m+i}$, \\
  \> 2. \> set $y_0 := (a_1,\ldots,a_{m+1})$, \\
  \> 3. \> query ``$\exists k{<}i{-}1\ \exists y_1,\ldots,y_k\ \exists w_0,\ldots,w_{k-1}\ \forall j{<}k\ (g_d(w_j)=y_j \wedge y_{j+1}=w_j{:}a_{m+j+2})$'' \\
  \> 6. \> if answer is NO, halt and fail, \\
  \> 7. \> if answer is YES, get such~$k,y_1,\ldots,y_k,w_0,\ldots,w_{k-1}$, \\
  \> 8. \> query ``$\exists w_{k}\ g_d(w_{k}){=}y_{k}$'' \\
  \> 9. \> if answer is NO, halt and output $y_k$, \\
  \> 10. \> halt and fail.
  \end{tabbing}
  \medskip

  We now show that the desired claim follows.
  Let $s_{e,i}$ denote the $\FP^{\NP}[\wit,2]$-machine 
  defined above; precisely, $s$ is a~$\PV_2[\wit,2]$-symbol
  in the theory, and $s_{e,i}$ is its parameterization with $e,i$. 

  \begin{claim}[In~$\Sonetwo$]
  For all~$m,i \in \Log$ such that $i > 0$ and 
  all~$y \in \{0,1\}^{m+i}$, if~$y \in \{0,1\}^{m+i}$ 
  is outside the range of~$h_{e,i}$, then 
  there exists a computation of $s_{e,i}(y)$ that does not fail,
  and any such computation outputs a string in $\{0,1\}^{m+1}$ that is 
  outside the range of~$g$. 
  \end{claim}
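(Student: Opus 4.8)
I would prove the two assertions of the claim separately, working directly with the $\PV_2[\wit,2]$-symbol $s$ and its parameterisation $s_{e,i}$, and using the ``in particular'' part of Claim~\ref{lem:thatlemma} as the only real tool. The first assertion ---that \emph{every} non-failing computation of $s_{e,i}(y)$ outputs a good string--- needs no hypothesis on $y$ and is essentially by inspection of the machine: the only way to halt without failing is line~9, reached only after the second oracle query ``$\exists w_k\ g_d(w_k){=}y_k$'' has been answered NO. At that point the computation holds a string $y_k$ which, by the matrix of the first query, has the form $w{:}a$ with $w\in\{0,1\}^m$ and $a$ a bit, so $y_k\in\{0,1\}^{m+1}$, and the NO answer says exactly that $y_k\notin\Img(g_d)$, i.e.\ $y_k$ is outside the range of $g$. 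Hence whatever is output at line~9 is a string in $\{0,1\}^{m+1}$ outside $\Img(g)$; formalising this in $\Sonetwo$ is just unwinding the definition of $s$.

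\textbf{Existence of a non-failing computation when $y\notin\Img(h_{e,i})$.} Fix $y=(a_1,\dots,a_{m+i})$ outside $\Img(h_{e,i})$ with $i>0$, set $y_0:=(a_1,\dots,a_{m+1})$ and $v_j:=(a_{m+j+2},\dots,a_{m+i})$ for $j\ge 0$, so $y_0{:}v_0=y$. I would build, by $\Sigma^b_1$-induction on $j$, a chain $y_0,y_1,\dots$ together with witnesses $w_0,\dots$ such that $g_d(w_\ell)=y_\ell$ and $y_{\ell+1}=w_\ell{:}a_{m+\ell+2}$, maintaining the invariant $y_j{:}v_j\notin\Img(h_{e,i-j})$. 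The statement ``$y_j\in\Img(g_d)$'' is $\Sigma^b_1$ and comes with a polynomial-size witness $w_j$, so the induction can legitimately propagate the partial chain. The base case is the hypothesis, and the step is exactly Claim~\ref{lem:thatlemma} (``in particular'' part) applied with its index set to $i-j-1$, so that $h_{e,i-j}=h_{e,(i-j-1)+1}$: from $y_j{:}v_j\notin\Img(h_{e,i-j})$ and $g_d(w_j)=y_j$ it yields $w_j{:}v_j=y_{j+1}{:}v_{j+1}\notin\Img(h_{e,i-j-1})$. Let $k^\ast$ be the least $j$ with $y_j\notin\Img(g_d)$; the invariant forces $k^\ast$ to exist and to satisfy $k^\ast\le i-1$, because if $y_0,\dots,y_{i-2}$ were all in $\Img(g_d)$ the invariant at $j=i-1$ gives $y_{i-1}=y_{i-1}{:}v_{i-1}\notin\Img(h_{e,1})=\Img(g_d)$ (here $v_{i-1}$ is empty).

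\textbf{Assembling, and the obstacle.} Truncating the chain at $k^\ast$, the tuple $(k^\ast,y_1,\dots,y_{k^\ast},w_0,\dots,w_{k^\ast-1})$ satisfies the matrix of the first query of $s_{e,i}(y)$, so that query has a YES-answer witnessed by this tuple; the machine then runs the second query on $y_{k^\ast}$, gets NO since $y_{k^\ast}\notin\Img(g_d)$, halts at line~9 and outputs $y_{k^\ast}\in\{0,1\}^{m+1}$, giving the desired non-failing computation. (One could instead first establish correctness of the many-query procedure by the same invariant and then pass to $s$ via the ``halts after the first NO, hence two queries suffice'' remark, but the direct argument above already delivers the claim.) The step I expect to be fiddly is the index bookkeeping: keeping straight over which range the query variable $k$ runs, which $h_{e,\cdot}$ each truncated chain certifies, and in particular verifying that the truncated chain of length $k^\ast$ is within the length bound imposed by the first query ---this is the point where an off-by-one in the stated loop/query bound, e.g.\ whether it is ``$k<i-1$'' or ``$k<i$'', would need to be reconciled so that the edge case $k^\ast=i-1$ is covered. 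Everything else is routine: since $i\in\Log$ the whole chain has polynomial size, the composite first query is $\Sigma^b_1$, and $s$ is thus a legitimate $\PV_2[\wit,2]$-symbol whose computations are $\Sigma^b_2$-definable in $\Sonetwo$ by Theorem~6.3.3 of~\cite{Kra95}.
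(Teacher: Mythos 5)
Your high-level plan matches the paper's: trace a chain $y_0, y_1, \ldots$ with $g_d(w_j) = y_j$, $y_{j+1} = w_j{:}a_{m+j+2}$, maintain the invariant $y_j{:}v_j \notin \Img(h_{e,i-j})$ via Claim~\ref{lem:thatlemma}, find the first place $y_{k^\ast}\notin\Img(g_d)$, and feed that to $s_{e,i}$. And your observation that the second assertion (any non-failing run is correct) is immediate from inspection of line~9 is exactly what the paper says. However, there are two genuine gaps in the way you propose to carry this out inside $\Sonetwo$.

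First, "$\Sigma^b_1$-induction on $j$ while maintaining the invariant $y_j{:}v_j\notin\Img(h_{e,i-j})$" is not a legitimate single induction in $\Sonetwo$. The assertion that a partial chain of length $j$ exists is $\Sigma^b_1$, but the invariant is $\Pi^b_1$, and their conjunction (quantified by the existential over the chain) is a $\Sigma^b_2$-type formula for which $\Sonetwo$ has no induction. Likewise, "let $k^\ast$ be the least $j$ with $y_j\notin\Img(g_d)$" is a $\Pi^b_1$-minimization and not directly available. The paper gets around exactly this by \emph{separating} the two: it applies $\Sigma^b_1$-$\lengthmax$ (Lemma~5.2.7 of Kraj\'{\i}\v{c}ek) to the pure $\Sigma^b_1$-formula $\phi(x)$ ("a chain of length $|x|$ exists"), obtaining a length-maximal chain, and then uses a \emph{separate} $\Pi^b_1$-$\pind$ on a formula $\psi(x)$ (the invariant, with the maximal chain frozen as a parameter) to bound $k$. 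Your write-up collapses these two distinct inductions into one, and that one is outside the admissible classes.

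Second, the "fiddly" edge case $k^\ast = i-1$ versus the query bound $k<i-1$ that you flag is not an off-by-one to be "reconciled" — it is exactly the content of the remaining half of the paper's proof. The paper derives the bound $k < i-1$ (strict) from the $\Pi^b_1$-$\pind$ argument on $\psi$, so that the length-maximal chain always fits within the first query's bound and the second query provably returns NO by maximality. Without carrying out that step your argument does not establish that any computation of $s_{e,i}(y)$ succeeds: if the longest chain reached $k^\ast=i-1$ you would have nothing to offer the first oracle query. So the overall idea is right, but both the choice of induction principle and the resolution of the final bound need to be done as the paper does, and neither is a cosmetic detail.
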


  \begin{proof}
  For fixed positive $m,i \in \Log$ and fixed $y = (a_1,\ldots,a_{m+i}) \in \{0,1\}^{m+i}$
  the strategy for this proof is a maximization argument
  showing that a longest computation of $s_{e,i}(y)$ exists. We
  exploit the maximality to argue that this computation
  does not fail and also that any such computation outputs a correct solution.
  The maximization argument is proved by using the \emph{length-maximization} 
  principle $\Sigma^b_1$-$\lengthmax$, which is available in~$\Sonetwo$ (see Lemma~5.2.7 
  in~\cite{Kra95}).
  
  Consider the statement $\phi(x)$ 
  asserting of $k = |x|$ that 
  there exist~$y_0,y_1,\ldots,y_k \in \{0,1\}^{m+1}$
  and~$w_0,w_1,\ldots,w_{k-1} \in \{0,1\}^m$ that witness
  the first oracle query in the algorithm for $s_{e,i}$.
  Formally:
  \begin{equation}
  \begin{array}{llllll}
  \phi(x) & := & \exists k{\leq}x\ (k{=}|x|\wedge \exists y_0,y_1,\ldots,y_k{\in}\{0,1\}^{m+1}\
  \exists w_0,w_1,\ldots,w_{k-1}{\in}\{0,1\}^m\ \\
  & & \;\;\;\;\;\; y_0{=}y[1,m+1] \wedge 
  \forall j{<}k\ (g_d(w_j){=}y_j \wedge y_{j+1}{=}w_j{:}a_{m+j+2})).
  \end{array}
  \end{equation}
  This is a $\Sigma^b_1$-formula with $m,i,y=(a_1,\ldots,a_{m+i})$ as parameters: 
  the sequences in it have the length of a 
  length, and the $\forall j{<}k$ quantifier in it
  is sharply bounded (by $k = |x|$). We intend to apply the length-maximization
  principle~$\phi$-$\lengthmax$ with bound $2^{i-1}-1$. 
  Recall that $i>0$. The existence of a 
  length-maximum $x \leq 2^{i-1}-1$ such that~$\phi(x)$ holds will let us 
  argue that there is a computation of~$s_{e,i}(y)$ that does not fail,
  and also that any such computation
  outputs a string in~$\{0,1\}^{m+1}$ that is outside the range of $g_d$.
  
  First, note that~$\phi(0)$ holds by
  setting~$k=0$ (recall that $|0|=0$)
  and~$y_0 = y[1,m+1]$, and by setting~$w_0,w_1,\ldots,w_{k-1}$ to
  the empty sequence of strings; the quantifier $\forall i{<}k$
  holds vacuously in this case. By~$\phi$-$\lengthmax$ applied to the upper bound $2^{i-1}-1$,
  there exists a length maximum~$x \leq 2^{i-1}-1$ such that $\phi(x)$ holds.
  Let $k = |x|$ and note that~$k \leq i-1$ since every number below~$2^{i-1}$ 
  has length~$i-1$ or less.
  By the definition of the procedure~$s_{e,i}$ and the maximality of $x \leq 2^{i-1}-1$,
  which could have length up to $i-1$,
  if $k < i-1$ and $y_0,y_1,\ldots,y_k$ are the witnesses for $\phi(x)$, 
  then $s_i(\vy)$ does not fail and outputs $y_k$. In addition, in such a case 
  the $\NP$-oracle answered~NO on the query ``$\exists w_k{\in}\{0,1\}^m\ g(w_k){=}y_k$'', 
  and hence~$s_{e,i}(y) = y_k$ is the string outside the range of $g_d$ we were
  looking for. To complete the proof it remains to be seen 
  that, indeed,~$k < i-1$.

  Suppose the contrary, so $k = i-1$ and hence $|x|=i-1$. Let $y_0,y_1,\ldots,y_{i-1}$
  and $w_0,w_1,\ldots,w_{i-2}$ be the witnesses for $\phi(x)$.
  For $j = 0,1,2,\ldots,i-1$, let $v_j = (a_{m+j+2},\ldots,a_{m+i})$; note
  that $v_j$ has length $i-j-1$ and that $v_{i-1}$ is the empty string. 
  Consider the sequence
  of concatenations $y_j{:}v_j$ for $j = 0,1,2,\ldots,i-1$. Note that $y_0{:}v_0$
  is $y$. By assumption $y$ is outside the range of $h_{e,i}$.
  Since $y_0$ is in the range of $g_d$ and $w_0$ witnesses it, 
  by the second part of Lemma~\ref{lem:thatlemma} we conclude that
  $w_0{:}v_0$ is outside the range of $h_{e,i-1}$. Now note that $w_0{:}v_0$
  equals $y_1{:}v_1$, so we have shown that $y_1{:}v_1$ is outside the range 
  of $h_{e,i-1}$. 
  More generally, consider the $\Pi^b_1$-formula
  \begin{equation}
  \psi(x) := \forall j{\leq}x (j{=}|x|\wedge j{\leq}i{-}1 \rightarrow 
  \forall w{\in}\{0,1\}^m\ h_{e,i-j}(w)\not= y_j{:}v_j),
  \end{equation}
  with $i$ among others as parameter.
  This says of the length $j:=|x|$ that $y_j{:}v_j$ is outside the range of $h_{e,i-j}$.
  We know that $\psi(0)$ holds by assumption since~$y = y_0{:}v_0$
  and~$y$ is outside the range of $h_{e,i}$ (and recall $|0|=0$).
  Further, the second statement in 
  Lemma~\ref{lem:thatlemma} combined with the fact that $g_d(w_j)=y_j$
  holds for all $j<i-1$ shows that~$\psi(\floor{x/2})$ implies $\psi(x)$, for all $x$. 
  By $\Pi^b_1$-$\pind$ we get that~$\psi(2^{i-1}-1)$ holds.
  But $v_{i-1}$ is the empty string, so~$\psi(2^{i-1}-1)$
  says that $y_{i-1}$ is outside the range of $h_{e,1}$, which is absurd since
  $y_{i-1} = g_d(w_{i-2})$ and $h_{e,1}$ is~$g_d$.
  \end{proof}

  \paragraph{Acknowledgments.} First author is also
  affiliated with Centre de Recerca Matem\`atica, Bellaterra, Barcelona,
  and is partially supported by grant PID2022-138506NB-C22 (PROOFS BEYOND)
  and Severo Ochoa and Mar\'{\i}a de Maeztu Program for Centers and Units of Excellence 
  in R\&D (CEX2020-001084-M), funded by AEI. The second author is partially supported by 
  the European Research Council (ERC) under the European Union's Horizon 2020 research and 
  innovation programme (grant agreement No.~101002742). This work was initiated while both 
  authors were on residence at the Simons Institute for the
  Theory of Computing, Berkeley, CA, in Spring 2023. 

We would like to thank Emil Je{\v{r}}{\'a}bek and Jiatu Li for  helpful discussions regarding this project. The TikZ code in this paper is modified from the original 
due to Jan Hlavacek, stackexchange member,
under license CC BY-SA 2.5.

% \bibliographystyle{plain}
% \bibliography{PrfCmplx-Bakoma, more-biblio}

\end{document}